  \providecommand\BibTeX{{%
    \normalfont B\kern-0.5em{\scshape i\kern-0.25em b}\kern-0.8em\TeX}}}
\newenvironment{proof*}[1]
  {%
   \begin{proof}}
  {\end{proof}}
\newcommand{\reals}{\mathbb{R}}
\DeclareMathOperator*{\argmin}{arg\,min}
\newcommand{\dVec}{\mathbf{d}}
\newcommand{\wVec}{\mathbf{w}}
\newcommand{\xVec}{\mathbf{x}}
\newcommand{\CMat}{\mathbf{C}}
\newcommand{\cVec}{\mathbf{c}}
\newcommand{\ones}{\mathbf{1}}
\newcommand{\cmax}{c_{\textrm{max}}}
\newcommand{\cmin}{c_{\textrm{min}}}
\newcommand{\Cabs}{\mathbf{C}_{\textrm{abs}}}
\newcommand{\dabs}{\mathbf{d}_{\textrm{abs}}}
\newcommand{\PATHATTACK}{\texttt{PATHATTACK}}
\newcommand{\GreedyPathCover}{\texttt{GreedyPathCover}}
\newcommand{\Eupper}{E_\mathrm{upper}}
\newcommand{\bupper}{b_\mathrm{upper}}
\newcommand{\Elower}{E_\mathrm{lower}}
\newcommand{\blower}{b_\mathrm{lower}}
\newcommand{\Emid}{E_\mathrm{mid}}
\newcommand{\bmid}{b_\mathrm{mid}}
\newcommand{\Etemp}{E_\mathrm{temp}}
\newcommand{\btemp}{b_\mathrm{temp}}
\newcommand{\cost}{\mathrm{cost}}
\newcommand{\opt}{\mathrm{opt}}
\newcommand{\Enew}{E_\mathrm{new}}
\newcommand{\approxFactor}{\mathrm{apxFactor}}
\newcommand{\ctr}{\mathrm{ctr}}
\newcommand{\instA}{\mathcal{A}}
\newcommand{\instB}{\mathcal{B}}
\newcommand{\Ekeep}{E_\mathrm{keep}}
\newcommand{\Ealways}{E_\mathrm{always}}
\newcommand{\Enever}{E_\mathrm{never}}
\newcommand{\Ebest}{E_\mathrm{best}}
\newcommand{\cbest}{c_\mathrm{best}}
\newcommand{\hide}[1]{}
\begin{document}

\title{Attacking Shortest Paths by Cutting Edges}

\author{Benjamin A. Miller}
\email{miller.be@northeastern.edu}
\orcid{0000-0002-1649-1401}
\affiliation{%
  \institution{Northeastern University}
  \city{Boston}
  \state{MA}
  \country{USA}
  \postcode{02115}
}

\author{Zohair Shafi}
\email{shafi.z@northeastern.edu}
\orcid{0000-0001-6154-1466}
\affiliation{%
  \institution{Northeastern University}
  \city{Boston}
  \state{MA}
  \country{USA}
  \postcode{02115}}

\author{Wheeler Ruml}
\affiliation{%
  \institution{University of New Hampshire}
  \city{Durham}
  \state{NH}
  \country{USA}
  \postcode{03824}
}
\email{ruml@cs.unh.edu}
\orcid{0000-0002-1308-2311}

\author{Yevgeniy Vorobeychik}
\affiliation{%
 \institution{Washington University in St. Louis}
 \city{St. Louis}
 \state{MO}
 \country{USA}
 \postcode{63130}}
\email{yvorobeychik@wustl.edu}
\orcid{0000-0003-2471-5345}

\author{Tina Eliassi-Rad}
\affiliation{%
  \institution{Northeastern University}
  \city{Boston}
  \state{MA}
  \country{USA}
  \postcode{02115}}
\email{t.eliassirad@northeastern.edu}
\orcid{0000-0002-1892-1188}

\author{Scott Alfeld}
\affiliation{%
  \institution{Amherst College}
  \city{Amherst}
  \state{MA}
  \country{USA}
  \postcode{01002}}
\email{salfeld@amherst.edu}
\orcid{0000-0001-8446-4993}

\renewcommand{\shortauthors}{Miller et al.}

\begin{abstract}
Identifying shortest paths between nodes in a network is a common graph analysis problem that is important for many applications involving routing of resources. An adversary that can manipulate the graph structure could alter traffic patterns to gain some benefit (e.g., make more money by directing traffic to a toll road). This paper presents the \emph{Force Path Cut} problem, in which an adversary removes edges from a graph to make a particular path the shortest between its terminal nodes. We prove that this problem is APX-hard, but introduce \texttt{PATHATTACK}, a polynomial-time approximation algorithm that guarantees a solution within a logarithmic factor of the optimal value. In addition, we introduce the \emph{Force Edge Cut} and \emph{Force Node Cut} problems, in which the adversary targets a particular edge or node, respectively, rather than an entire path. We derive a nonconvex optimization formulation for these problems, and derive a heuristic algorithm that uses \texttt{PATHATTACK} as a subroutine. We demonstrate all of these algorithms on a diverse set of real and synthetic networks, illustrating the network types that benefit most from the proposed algorithms.
\end{abstract}

%
\begin{CCSXML}
<ccs2012>
<concept>
<concept_id>10003752.10003809.10003635.10010037</concept_id>
<concept_desc>Theory of computation~Shortest paths</concept_desc>
<concept_significance>500</concept_significance>
</concept>
<concept>
<concept_id>10003752.10003809.10003636.10003814</concept_id>
<concept_desc>Theory of computation~Stochastic approximation</concept_desc>
<concept_significance>300</concept_significance>
</concept>
<concept>
<concept_id>10003752.10003809.10003716.10011136.10011137</concept_id>
<concept_desc>Theory of computation~Network optimization</concept_desc>
<concept_significance>300</concept_significance>
</concept>
<concept>
<concept_id>10002978.10003006.10011608</concept_id>
<concept_desc>Security and privacy~Information flow control</concept_desc>
<concept_significance>300</concept_significance>
</concept>
</ccs2012>
\end{CCSXML}

\ccsdesc[500]{Theory of computation~Shortest paths}
\ccsdesc[300]{Theory of computation~Stochastic approximation}
\ccsdesc[300]{Theory of computation~Network optimization}
\ccsdesc[300]{Security and privacy~Information flow control}
\keywords{adversarial graph analysis, APX-hardness, integer programming}

\maketitle

\section{Introduction}
\label{sec:intro}
Finding shortest paths among interconnected entities is an important task in a wide variety of applications. When routing resources---such as traffic on roads, ships among ports, or packets among routers---identifying the shortest path between two nodes is key to making efficient use of the network. 
Given that traffic prefers to take the shortest route, a malicious adversary with the ability to alter the graph topology could manipulate the paths to gain an advantage, e.g., direct traffic between two popular locations across a toll road the adversary owns. 
Countering such behavior is important, and understanding vulnerability to such manipulation is a step toward more robust graph mining.

In this paper, we present the \emph{Force Path Cut} problem in which an adversary wants the shortest path between a source node and a target node in an edge-weighted network to follow a preferred path. The adversary achieves this goal by cutting edges, each of which has a known cost for removal. We show that this problem is APX-hard via a reduction from the 3-Terminal Cut problem~\cite{Dahlhaus1994}. To solve Force Path Cut, we recast it as a Weighed Set Cover problem, which allows us to use well-established approximation algorithms to minimize the total edge removal cost. We propose the \texttt{PATHATTACK} algorithm, which combines these algorithms with a constraint generation method to efficiently identify paths to target for removal. While these algorithms only guarantee an approximately optimal solution in general, \texttt{PATHATTACK} yields the lowest-cost solution in a large majority of our experiments.

We also introduce the \emph{Force Edge Cut} and \emph{Force Node Cut} problems, where a specific edge (or node) is targeted rather than an entire path. These problems are also APX-hard, and we use \PATHATTACK{} as part of a heuristic search algorithm to solve them. The three problems are defined formally in the following section.

\subsection{Problem Statement}
\label{subsec:problem}
We consider a graph $G=(V, E)$, where the vertex set $V$ contains $N$ entities and $E$ consists of $M$ edges, which may be directed or undirected. Each edge has a weight $w:E\rightarrow\reals_{\geq0}$ denoting the expense of traversal (e.g., distance or time). In addition, each edge has a removal cost $c:E\rightarrow\reals_{\geq0}$. We are also given a source node $s\in V$, a target node $t\in V$, and a budget $b > 0$ for edge removal. Within this context, there are three problems we address:
\begin{itemize}
    \item {\bf Force Edge Cut}: Given an edge $e^*\in E$, find $E^\prime\subset E$ where $\sum_{e\in E^\prime}{c(e)}\leq b$ and all shortest paths from $s$ to $t$ in $G^\prime=(V, E\setminus E^\prime)$ use $e^*$.
    \item {\bf Force Node Cut}: Given a node $v^*\in V$, find $E^\prime\subset E$ where $\sum_{e\in E^\prime}{c(e)}\leq b$ and all shortest paths from $s$ to $t$ in $G^\prime=(V, E\setminus E^\prime)$ use $v^*$.
    \item {\bf Force Path Cut}: Given a path $p^*$ from $s$ to $t$ in $G$, find $E^\prime\subset E$ where $\sum_{e\in E^\prime}{c(e)}\leq b$ and $p^*$ is the shortest path from $s$ to $t$ in $G^\prime=(V, E\setminus E^\prime)$.
\end{itemize}
Each variation of the problem addresses a different adversarial objective: there is a particular edge (e.g., a toll road), a particular node (e.g., a router in a network), or an entire path (e.g., a sequence of surveillance points) where increased traffic would benefit the adversary. The attack vector in all cases is removal of edges, and the adversary has access to the entire graph.
\subsection{Contributions}
The main contributions of this paper as as follows:
\begin{itemize}
    \item We define the Force Edge Cut, Force Node Cut, and Force Path Cut problems and prove that they are APX-hard.
    \item We introduce the PATHATTACK algorithm, which provides a logarithmic approximation for Force Path Cut with high probability in polynomial time.
    \item We provide a non-convex optimization formulation for Force Edge Cut and Force Node Cut, as well as polynomial-time heuristic algorithms.
    \item We present the results of over 16,000 experiments on a variety of synthetic and real networks, demonstrating where these algorithms perform best with respect to baseline methods.
\end{itemize}

\subsection{Paper Organization}
The remainder of this paper is organized as follows. In Section~\ref{sec:related}, we briefly summarize related work on inverse optimization and adversarial graph analysis. In Section~\ref{sec:complexity}, we provide a sketch of the proof that all three problems are APX-hard. Section~\ref{sec:PATHATTACK} introduces the \PATHATTACK{} algorithm, which guarantees a logarithmic approximation of the optimal solution to Force Path Cut. In Section~\ref{sec:edgeAlgorithm}, we show how \PATHATTACK{} can be used as a heuristic to solve Force Edge Cut and Force Node Cut. Section~\ref{sec:experiments} documents the results of experiments on diverse real and synthetic networks, demonstrating where \PATHATTACK{} provides a benefit over baseline methods. In Section~\ref{sec:conclusion}, we conclude with a summary and discussion of open problems.

\section{Related Work}
\label{sec:related}

Early work on attacking networks focused on disconnecting them~\cite{Albert2000}. This work demonstrated that targeted removal of high-degree nodes was highly effective against networks with power law degree distributions (e.g., Barab\'{a}si--Albert networks), but far less so against random networks. This is due to the prevalence of hubs in networks with such degree distributions. Other work has focused on disrupting shortest paths via edge removal, but in a narrower context than ours. Work on the most vital edge problem (e.g., \cite{Nardelli2003}) attempts to efficiently find the single edge whose removal most increases the distance between two nodes. Our present work, in contrast, considers a devious adversary that wishes a certain path to be shortest.

There are several other adversarial contexts in which path-finding is highly relevant. Some work is focused on traversing hostile territory, such as surreptitiously planning the path of an unmanned aerial vehicle~\cite{Jun2003}. The complement of this is network interdiction, where the goal is to intercept an adversary who is attempting to traverse the graph while remaining hidden. Bertsimas et al. formulate an optimization problem similar to Force Path Cut, where the goal is overall disruption of flows rather than targeting a specific shortest path~\cite{Bertsimas2016}. Network interdiction has been studied in a game theoretic context for many years~\cite{Washburn1995}, and has expanded into work on disrupting attacks, with the graph representing an attack plan~\cite{Letchford2013}. In this work, as in ours, oracles can be used to avoid enumerating an exponentially large number of possible strategies~\cite{Jain2011}.

As with many graph problems, finding a shortest path can be formulated as an optimization problem: over all paths from $s$ to $t$, find the one that minimizes the sum of edge weights. Finding the weights for two nodes to have a particular shortest path is an example of inverse optimization~\cite{Ahuja2001}. From this perspective, the shortest path is a function mapping edge weights to a path, and the inverse shortest path problem is to find the inverse function: input a path and output a set of weights. This typically involves finding a new set of weights given a baseline that should change as little as possible, with respect to some distance metric (though it is shown in~\cite{Xu1995} that solving the problem without baseline weights solves the minimum cut problem). To solve inverse shortest path while minimizing the $L_1$ norm between the weights, Zhang et al. propose a column generation algorithm~\cite{Zhang1995}, similar to the constraint generation method we describe in Section~\ref{subsec:constraintGeneration}. Such a constraint generation procedure
is also used in the context of navigation meshes in~\cite{Brandao2021}. The instance of inverse shortest path that is closest to Force Path Cut uses the weighted Hamming distance between the weight vectors, where changing the weight of an edge has a fixed cost, regardless of the size of the change.\footnote{Edge removal can be simulated by significantly increasing weights. Inverse shortest path using weighted Hamming distance also allows for reducing edge weights, which is not allowed in Force Path Cut.} This case was previously shown to be NP-hard~\cite{Zhang2008}. In this paper, we show that Force Path Cut, which is less flexible, is also not just NP-hard, but APX-hard.

In addition to inverse shortest path, authors have considered the inverse shortest path length problem (sometimes called reverse shortest path)~\cite{Tayyebi2016}, where only the length of the shortest path is specified. This problem has been shown to be NP-hard except when only considering a single source and destination~\cite{Zhang2003,Cui2010}. There is also the notion of inverse shortest path routing, in which edge weights are manipulated so that edges from one specified subset are used for shortest paths, while edges from a second, disjoint subset are never used~\cite{Call2011}. Other graph-based optimization problems, such as maximum flow and minimum cut, have also been topics in the inverse optimization literature (see, e.g., \cite{Liu2006,Jiang2010,Deaconu2020}).

There has recently been a great deal of work on attacking machine learning methods where graphs are part of the input. Finding shortest paths is another common graph problem that attackers could exploit. Attacks against vertex classification~\cite{Dai2018,Zugner2018,Chen2018,WuH2019,Xu2019} and node embeddings~\cite{Bojchevski2019, Chang2020} consider attackers that can manipulate edges, node attributes, or both in order to affect the outcome of the learning method. In addition, attacks against community detection have been proposed where a node can create new edges to alter its group assignment from a community detection algorithm~\cite{Kegelmeyer2018,Chen2020a,Jiang2020}, or to reduce the efficacy of community detection overall~\cite{Nagaraja2010,Waniek2018,Fionda2017,LiJ2020,Chen2019}. Our work complements these efforts, expanding the space of adversarial graph analysis into another important graph mining task.
\section{Computational Complexity}
\label{sec:complexity}
We first consider the complexity of Force Path Cut. Like inverse shortest path under Hamming distance~\cite{Zhang2008}, this problem is NP-hard, as shown in our prior work~\cite{Miller2021}. A novel result of our present work is that Force Path Cut is not merely NP-hard; there is no possible polynomial time approximation scheme, i.e., no polynomial-time algorithm can find a solution within a factor of $(1+\epsilon)$ of the optimal (minimal) budget unless P=NP. Using a linear reduction from another APX-hard problem, we prove the following theorem.
\begin{theorem}
\label{thm:pathHard}
Force Path Cut is APX-hard, including the case where all weights and all costs are equal.
\end{theorem}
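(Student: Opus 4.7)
The plan is to exhibit an approximation-preserving (in fact cost-preserving) polynomial-time reduction from the 3-Terminal Cut problem on undirected graphs with unit edge weights, which is APX-hard by \cite{Dahlhaus1994}. Given a 3-Terminal Cut instance $(G,\{s_1,s_2,s_3\})$, I construct a Force Path Cut instance in which the only edges that can be removed without destroying $p^*$ are those of $G$, and the budget needed to force $p^*$ equals the budget needed to separate the three terminals in $G$.

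For the construction, pick $L$ larger than every pairwise $G$-distance among $s_1,s_2,s_3$ (for instance, $L=|V(G)|+1$). Form $G'$ from $G$ by introducing new vertices $s$ and $t$ together with four internally vertex-disjoint chains of length $L$, each built from fresh internal vertices joined by unit-weight edges: a chain $\pi$ from $s$ to $s_1$, a chain $\pi'$ from $s_1$ to $s_2$, a chain $\pi''$ from $s_2$ to $s_3$, and a chain $\pi'''$ from $s_3$ to $t$. Assign every edge of $G'$ unit weight and unit cost, and let $p^*=\pi\cdot\pi'\cdot\pi''\cdot\pi'''$, a simple $s$-to-$t$ path of length $4L$. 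Each internal chain vertex has degree two and the chains meet only at $\{s,t,s_1,s_2,s_3\}$, so every simple $s$-to-$t$ path in $G'$ must traverse $\pi$ and $\pi'''$ in full and then connect $s_1$ to $s_3$ either through the $\pi'\cdot\pi''$ segment (length $2L$) or through a detour that uses at least one $G$-subpath between two terminals; by the choice of $L$, every such detour has length strictly less than $2L$.

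A feasible Force Path Cut solution cannot remove any chain edge without destroying $p^*$, so its removed edges lie entirely in $E(G)$. The preceding length analysis then shows that $p^*$ is the unique shortest $s$-to-$t$ path in the residual graph if and only if no pair of terminals remains connected in the residual $G$---precisely the 3-Terminal Cut condition---and the cost of such a Force Path Cut solution equals the number of $G$-edges it removes. Consequently the two optima coincide and any $\alpha$-approximation for Force Path Cut yields an $\alpha$-approximation for 3-Terminal Cut, so APX-hardness transfers. The main technical obstacle is the case enumeration of simple $s_1$-to-$s_3$ routes in $G'$ together with the verification that the choice of $L$ keeps every non-$p^*$ alternative strictly below $2L$, so that each one must be destroyed by edge removals inside $G$; the remainder is routine bookkeeping, and because the construction already uses uniform weights and uniform costs, the uniform case inherits the same hardness.
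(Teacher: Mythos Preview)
Your reduction is correct and actually cleaner than the paper's. Both arguments reduce from unit-weight 3-Terminal Cut, but they enforce ``only $G$-edges may be removed'' by different mechanisms. The paper keeps $s_1$ and $s_3$ as source and destination, adds $M+1$ parallel $N$-hop chains from $s_1$ to $s_2$ and from $s_2$ to $s_3$ (none of them on $p^*$), and lets $p^*$ be a single new $(2N-1)$-hop chain from $s_1$ to $s_3$; the $M+1$ copies guarantee that any feasible solution touching the new edges has cost exceeding $M$, so an optimal solution must cut only original edges, and a short case analysis shows $\opt(\instA)=\opt(f(\instA))$ and that the map back is cost-nonincreasing, giving a linear reduction. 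You instead introduce fresh endpoints $s,t$ and place \emph{all} four new chains on $p^*$ itself, so that no new edge can be removed at all; feasibility then forces $E'\subseteq E(G)$, and your length bound $L>|V(G)|$ makes every non-$p^*$ $s$--$t$ path strictly shorter than $p^*$, so the feasible sets of the two problems coincide with identical costs.

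What this buys you: your construction uses $O(N)$ new edges rather than the paper's $O(MN)$, and the cost-preserving bijection between feasible solutions makes the approximation transfer immediate (no separate argument about suboptimal solutions being mapped to solutions of no greater cost). What the paper's route buys: because its $s_1$--$s_2$ and $s_2$--$s_3$ gadgets are not on $p^*$, the same construction is reused verbatim to prove APX-hardness of Force Edge Cut and Force Node Cut by picking $e^*$ or $v^*$ along the single $s_1$--$s_3$ chain; your construction also supports this, but the paper's framing makes it slightly more explicit. The paper additionally gives a second linear reduction from the undirected to the directed case; you omit this, which is fine for the theorem as stated since hardness of a subclass implies hardness of the general problem.

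One minor wording point: ``$L$ larger than every pairwise $G$-distance'' is ill-defined when a pair of terminals is disconnected in $G$, but your concrete choice $L=|V(G)|+1$ is what you actually use and is always valid, so this is cosmetic.
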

\subsection{Proof Sketch}
\label{subsec:sketch}
The following is a sketch of the proof of Theorem~\ref{thm:pathHard}, with a detailed proof provided in Appendix~\ref{sec:proof}. First, consider the case for undirected graphs, where all edges weights are equal to their removal costs. We prove this is APX-hard via reduction from 3-Terminal Cut~\cite{Dahlhaus1994}. In 3-Terminal Cut, we are given a graph $G=(V, E)$, three terminal nodes $s_1,s_2,s_3\in V$, and a budget $b\geq 0$. For the purpose of this proof, we only consider the case where all edge weights are equal. The goal is to determine whether there exists $E^\prime\subset E$, with $|E^\prime|\leq b$, where $s_1$, $s_2$, and $s_3$ are disconnected in $G^\prime=(V, E\setminus E^\prime)$, i.e., after the edges is $E^\prime$ are removed, no path exists between any of the terminal nodes.

We reduce 3-Terminal Cut to Force Path Cut as follows. Create $M+1$ new disjoint paths, each $N$ hops long, from $s_1$ to $s_2$. This introduces $(M+1)(N-1)$ new nodes and $(M+1)N$ new edges. Do the same for $s_2$ and $s_3$ (another $(M+1)(N-1)$ nodes and $(M+1)N$ edges). Create a new $(2N-1)$-hop path from $s_1$ to $s_3$ ($2N-2$ new nodes, $2N-2$ new edges). Make the new path from $s_1$ to $s_3$ the target path $p^*$, with $s=s_1$ and $t=s_3$. Call the resulting graph $\hat{G}$. Solve Force Path Cut on $\hat{G}$, obtaining a set of edges $E^\prime$ to be removed. This process is illustrated in Fig.~\ref{fig:reduction}.

If $E^\prime$ is a solution to Force Path Cut, it is also a solution to 3-Terminal Cut. If any of the terminal nodes could be reached from another using edges in the original graph, $p^*$ would not be the shortest path. If a path from $s_2$ to $s_3$ from the original graph remained, for example, this path would be at most $N-1$ hops long. There would exist a path from $s_1$ to $s_3$ using one of the new $(N+1)$-hop paths from $s_1$ to $s_2$, followed by the remaining path from $s_2$ to $s_3$. This path would be less than $2N+1$ hops, so $p^*$ would not be the shortest path.
\begin{figure}
    \centering
    \includegraphics[height=1.4in]{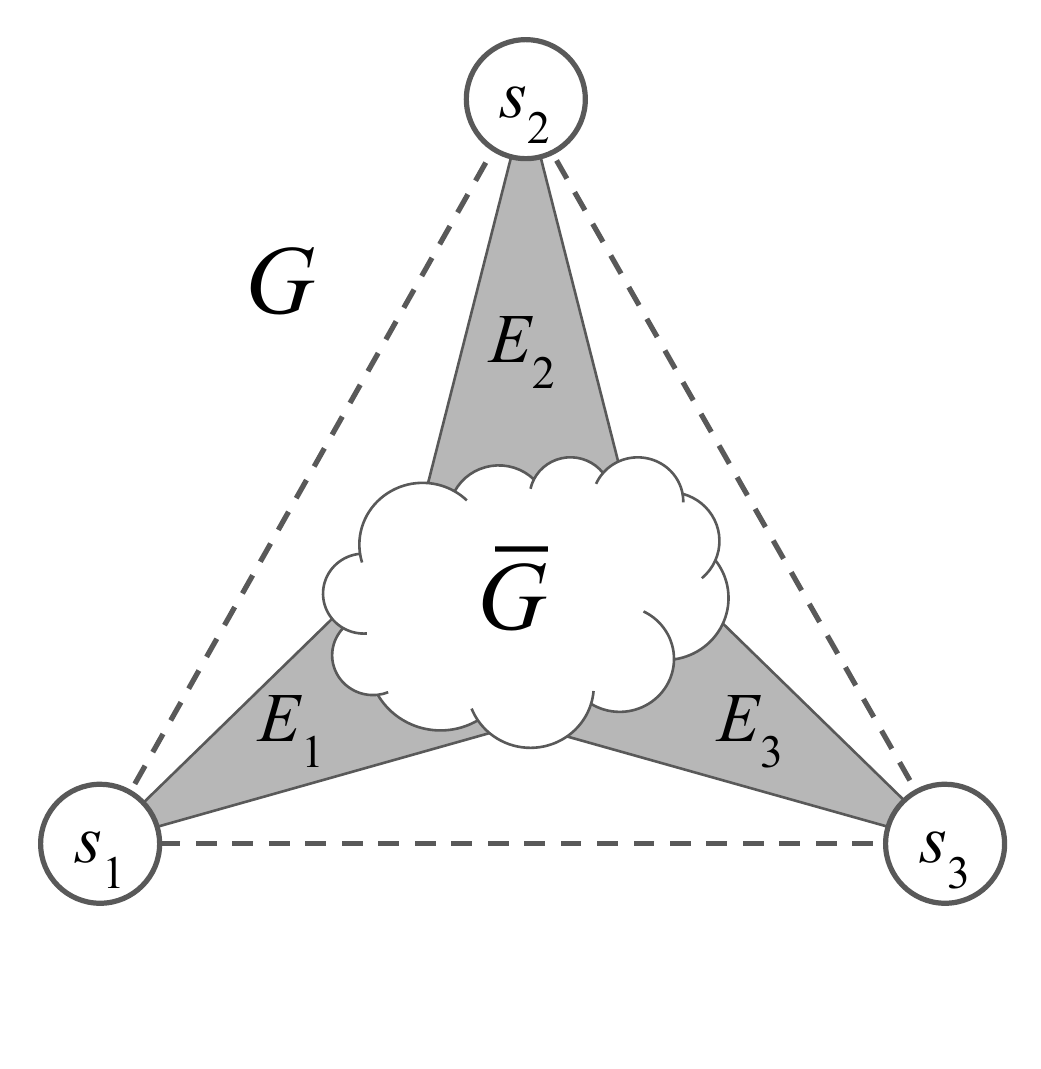}
    \includegraphics[height=1.4in]{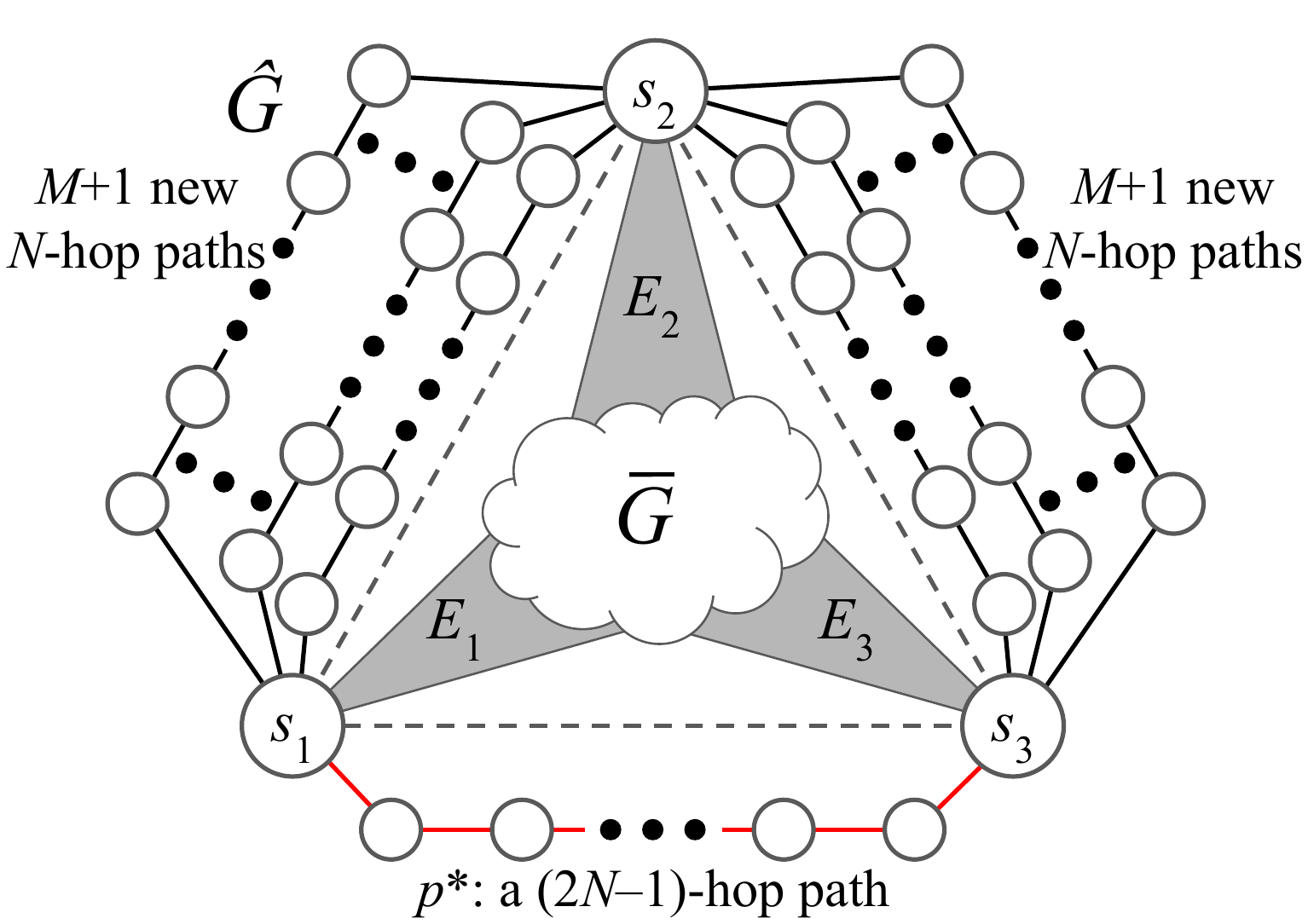}
    \includegraphics[height=1.4in]{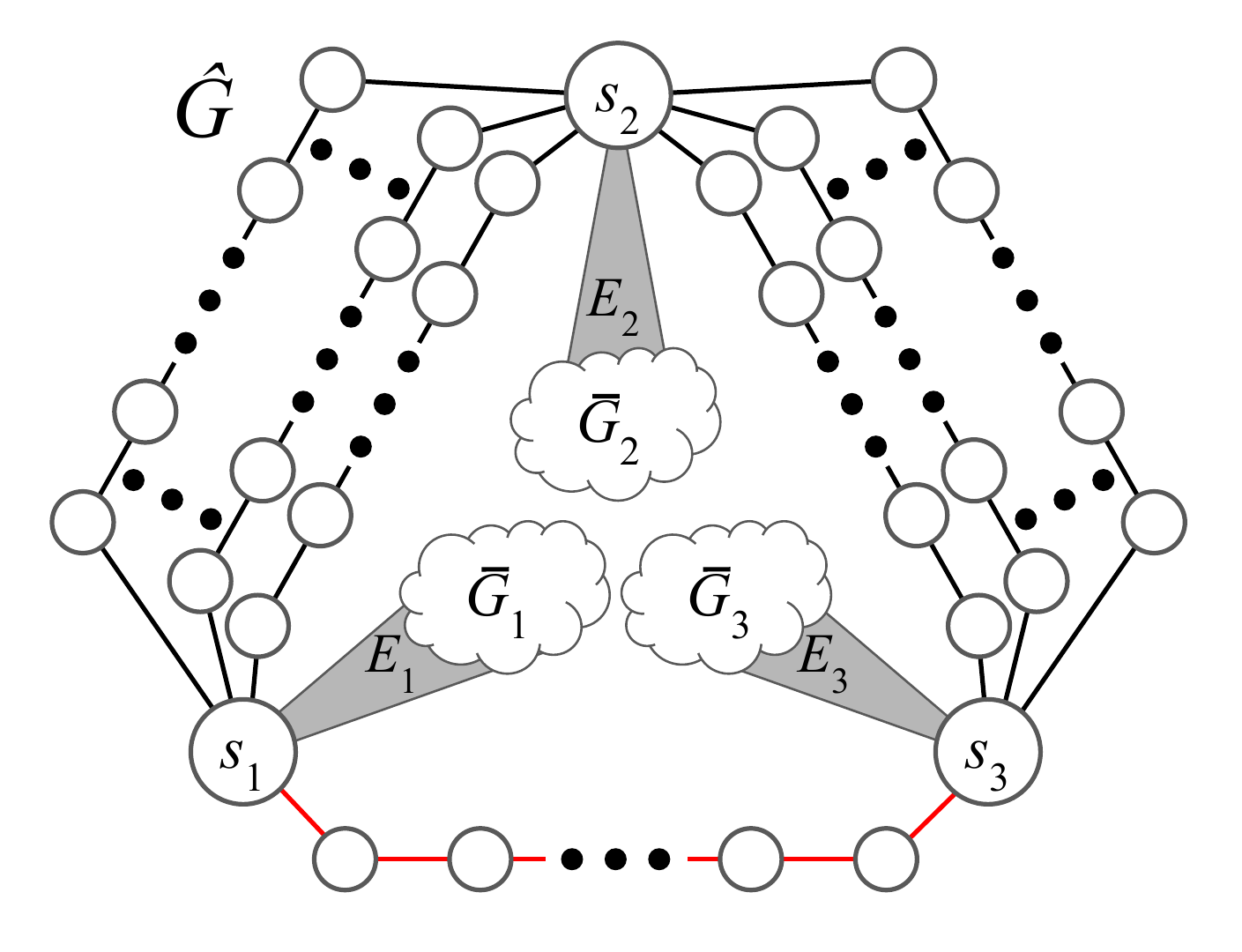}
    \caption{Reduction from 3-Terminal Cut to Force Path Cut. The initial graph (left) includes 3 terminal nodes $s_1$, $s_2$, and $s_3$, which are connected to the rest of the graph by edge sets $E_1$, $E_2$, and $E_3$, respectively. The dashed lines indicate the possibility of edges between terminals. The input to Force Path Cut, $\hat{G}$ (center), includes the original graph plus many long parallel paths between terminals. A single path from $s_1$ to $s_2$ comprising $p^*$ is indicated in red. The result of solving Force Path Cut and removing the returned edges (right) is that any existing paths between the terminals have been removed, thus disconnecting them in the original graph and solving 3-Terminal Cut.}
    \label{fig:reduction}
\end{figure}

In the optimization version of the problem, the goal is to find the $E^\prime$ that solves 3-Terminal Cut with the smallest budget. Let $E^\prime_{\mathrm{OPT}}$ be this edge set. Since this set disconnects the terminals, it also solves Force Path Cut. In the full proof, we show that there is no way to solve Force Path Cut on $\hat{G}$ with fewer edges. Thus, the optimal solution for Force Path Cut is the same as the optimal solution to 3-Terminal Cut. In addition, if we consider \emph{any} solution $\hat{E}^\prime$ to Force Path Cut on $\hat{G}$, we show that there is a solution to 3-Terminal Cut that is no larger than $\hat{E}^\prime$. As we discuss in the appendix, this means that the reduction is linear, and thus preserves approximation. Since 3-Terminal Cut is APX-hard, a linear reduction implies that Force Path Cut is also APX-hard.

To prove that Force Path Cut is APX-hard for directed graphs as well, we reduce Force Path Cut for undirected graphs to the same problem for directed graphs. Given an undirected graph $G=(V,E)$ and the path $p^*$ from source node $s$ to destination node $t$, create a new graph $\hat{G}=(V, \hat{E})$ on the same vertex set, with two directed edges for each undirected edge from $G$, i.e., $\hat{E}$ contains the directed edge $(u,v)$ if and only if $E$ contains either $\{u,v\}$. We again consider the case where all weights and all costs are equal. Solve Force Path Cut on $\hat{G}$, obtaining the optimal solution $\hat{E}^\prime$. For each directed edge in $(u, v)\in\hat{E}^\prime$, remove the undirected edge $\{u,v\}$ from $G$. In the resulting graph, $p^*$ will be the shortest path from $s$ to $t$. As we show in Appendix~\ref{subsec:proofDirected}, if the optimal solution includes $(u,v)$, it cannot also include $(v,u)$, i.e., if $(u,v)\in\hat{E}^\prime$, then $(v,u)\notin\hat{E}^\prime$. This implies that the optimal solution size for Force Path Cut in directed graphs is equal to the optimal solution size for Force Path Cut in undirected graphs. By the same argument that we used for undirected graphs, Force Path Cut is therefore APX-hard for directed graphs.

\subsection{Extension to Force Edge Cut and Force Node Cut}
The same reduction used to prove that Force Path Cut is APX-hard can show the same for Force Edge Cut. The difference is that, rather than letting $p^*$ be the new path from $s_1$ to $s_3$, we let $e^*$ be one of the edges along this new path. This will still force the new path from $s_1$ to $s_3$ to be the shortest, so the optimal will be exactly the same. Like Force Path Cut, the directed and undirected cases have the same solution, resulting in the following corollary.
\begin{corollary}
Force Edge Cut is APX-hard, including the case where all weights and all costs are equal.
\end{corollary}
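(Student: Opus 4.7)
The plan is to reuse the reduction from 3-Terminal Cut that was used for Theorem~\ref{thm:pathHard}, changing only the chosen target. Given an instance of 3-Terminal Cut with graph $G$, terminals $s_1,s_2,s_3$, and budget $b$, I would form the exact same augmented graph $\hat{G}$: $G$ together with $M+1$ parallel $N$-hop paths from $s_1$ to $s_2$, $M+1$ parallel $N$-hop paths from $s_2$ to $s_3$, and a single $(2N-1)$-hop path $p^*$ from $s_1$ to $s_3$ whose intermediate vertices are all fresh. Set $s=s_1$, $t=s_3$, and pick $e^*$ to be any edge of $p^*$ whose two endpoints are both fresh internal vertices of $p^*$.

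The heart of the argument is a structural equivalence: on $\hat{G}$, a set $E'$ solves Force Edge Cut with target $e^*$ if and only if it solves Force Path Cut with target $p^*$. One direction is immediate, since making $p^*$ the unique shortest $s$-$t$ path routes every shortest path through every edge of $p^*$, including $e^*$. For the converse, I would observe that by construction both endpoints of $e^*$ have degree two in $\hat{G}$, so every simple $s$-$t$ path through $e^*$ must enter and leave these endpoints along $p^*$ and hence coincide with all of $p^*$; therefore requiring every shortest $s$-$t$ path to traverse $e^*$ is equivalent to requiring $p^*$ itself to be the shortest $s$-$t$ path.

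With this equivalence, the two optimization problems on $\hat{G}$ have identical feasible sets and identical optima, so the cost analysis from Theorem~\ref{thm:pathHard} transfers verbatim: the optimal 3-Terminal Cut on $G$ has the same size as the optimal Force Path Cut on $\hat{G}$, which equals the optimal Force Edge Cut on $\hat{G}$, and any Force Edge Cut feasible solution of size $k$ yields a 3-Terminal Cut feasible solution of size at most $k$. This makes the reduction linear, and APX-hardness transfers to Force Edge Cut on undirected unit-weight, unit-cost graphs. The directed case then follows by the same edge-doubling argument used at the end of the Force Path Cut proof, with $e^*$ taken to be one of the two directed copies of the original target edge and Appendix~\ref{subsec:proofDirected} invoked to rule out an optimal solution that removes both copies.

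The main obstacle I expect is the converse direction of the equivalence: one must rule out the possibility of covering $e^*$ ``cheaply'' via some other shortest $s$-$t$ path that uses $e^*$ but not the rest of $p^*$. Choosing $e^*$ to lie strictly between two freshly created degree-two vertices of $p^*$ eliminates this possibility, because such vertices have no edges outside $p^*$; once that structural claim is verified, the rest of the proof is a routine transcription of the Force Path Cut argument.
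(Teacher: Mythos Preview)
Your proposal is correct and takes essentially the same approach as the paper: reuse the 3-Terminal Cut reduction and target an edge on the added $(2N-1)$-hop path $p^*$, arguing that on $\hat{G}$ the Force Edge Cut and Force Path Cut instances have identical feasible sets (and hence identical optima), so the linear-reduction analysis of Theorem~\ref{thm:pathHard} carries over verbatim. The paper simply asserts ``this will still force the new path from $s_1$ to $s_3$ to be the shortest,'' whereas you spell out the degree-two argument explicitly; your extra restriction that both endpoints of $e^*$ be internal to $p^*$ is harmless but unnecessary, since even the terminal edges of $p^*$ have a degree-two fresh endpoint that forces any simple $s$--$t$ path through them to coincide with $p^*$.
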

By selecting a node along the new path from $s_1$ to $s_3$ to act as $v^*$, we can make the same argument for Force Node Cut.
\begin{corollary}
Force Node Cut is APX-hard, including the case when all weights and all costs are equal.
\end{corollary}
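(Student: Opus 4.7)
The plan is to adapt the reduction from 3-Terminal Cut already used for Force Path Cut and Force Edge Cut, this time designating a node on the newly added $s_1$-to-$s_3$ path as the target $v^*$. Concretely, I would reuse the construction of $\hat{G}$ from the Force Path Cut sketch: add $M+1$ vertex-disjoint $N$-hop paths between $s_1$ and $s_2$, another $M+1$ vertex-disjoint $N$-hop paths between $s_2$ and $s_3$, and a single new $(2N-1)$-hop path $\pi$ from $s_1$ to $s_3$. Set $s=s_1$, $t=s_3$, and take $v^*$ to be any internal vertex of $\pi$ (say, its midpoint).

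The key structural observation is that every internal vertex of $\pi$ has degree exactly two in $\hat{G}$, with both incident edges lying on $\pi$. Thus any simple $s$-$t$ path through $v^*$ must enter along one edge of $\pi$ and leave along the other; iterating this argument at the neighbors of $v^*$ along $\pi$ and continuing outward in both directions forces the path to traverse all $2N-1$ edges of $\pi$ before reaching $s_1$ on one side and $s_3$ on the other. Consequently $\pi$ is the only $s$-$t$ path in $\hat{G}$ that visits $v^*$, so an edge set $E'$ realizes the Force Node Cut objective with target $v^*$ if and only if $\pi$ survives in $\hat{G}\setminus E'$ and every other $s$-$t$ path in $\hat{G}\setminus E'$ has length strictly greater than $2N-1$. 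This is exactly the Force Path Cut condition on $\hat{G}$ with target $\pi$ used in the proof of Theorem~\ref{thm:pathHard}.

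With this equivalence in hand, the remainder of the argument mirrors the Force Path Cut proof: the optimal Force Node Cut budget on $\hat{G}$ equals the optimal 3-Terminal Cut budget on $G$, and any feasible Force Node Cut solution can be converted into a 3-Terminal Cut of no larger size, so the reduction is linear and APX-hardness transfers. For directed graphs, the same undirected-to-directed doubling reduction from Section~\ref{subsec:sketch} applies unchanged, since the node-target condition is orientation-agnostic in exactly the way the path-target condition is. The main step to get right is the structural lemma that $\pi$ is the unique $s$-$t$ walk through $v^*$ in $\hat{G}$, which reduces to confirming that the interior vertices of $\pi$ are genuinely fresh and share no incident edges with the other $M+1$ parallel paths or with $G$; this is immediate from how the construction is specified. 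No other obstacle arises, as every other piece is a direct citation of the Force Path Cut argument.
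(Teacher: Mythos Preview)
Your proposal is correct and takes essentially the same approach as the paper, which simply states that selecting a node along the new $s_1$-to-$s_3$ path as $v^*$ allows the Force Path Cut argument to go through unchanged. You supply the structural justification the paper leaves implicit---that the interior vertices of $\pi$ have degree two in $\hat{G}$, so $\pi$ is the unique $s$--$t$ path through $v^*$---which makes the equivalence with the Force Path Cut instance explicit.
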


\section{PATHATTACK}
\label{sec:PATHATTACK}
While Force Path Cut is APX-hard, the optimal solution can be approximated within a logarithmic factor in polynomial time. This section describes an algorithm to achieve this.

\subsection{Cutting Paths Via Set Cover}
We first note that Force Path Cut can be recast as an instance of Weighted Set Cover. In weighted set cover, there is a discrete universe $U$ of elements and a set $\mathcal{S}$ of subsets of $U$, where each subset $S\in\mathcal{S}$ has an associated cost $c(S)\geq 0$. The objective is to find the subset of $\mathcal{S}$ with the lowest total cost where the union of the elements comprises $U$, i.e., to find
\begin{align}
    \hat{\mathcal{S}}=&\argmin_{\mathcal{S}^\prime\subseteq\mathcal{S}}{\sum_{S\in\mathcal{S}^\prime}c(S)}\\
    \text{s.t.}&~\bigcup_{S\in\mathcal{S}^\prime}{S}=U.
\end{align}
When we solve Force Path Cut, the goal is to cut all paths from $s$ to $t$ that are not longer than $p^*$, and to do so by selecting edges for removal. This is a set cover problem. The universe $U$ is the set of all paths competing to be shortest. This set must be ``covered'' by removing all such paths from the graph. If any of these paths remains, $p^*$ will not be the shortest path. The subsets in $\mathcal{S}$ are edges: Each edge is the set of paths that use that edge. Cutting the edge is selecting that set for inclusion in the union, i.e., removing all paths that include the edge. Figure~\ref{fig:translation} illustrates the connection between the two problems.
\begin{figure*}
    \centering
    \includegraphics[width=0.85\textwidth]{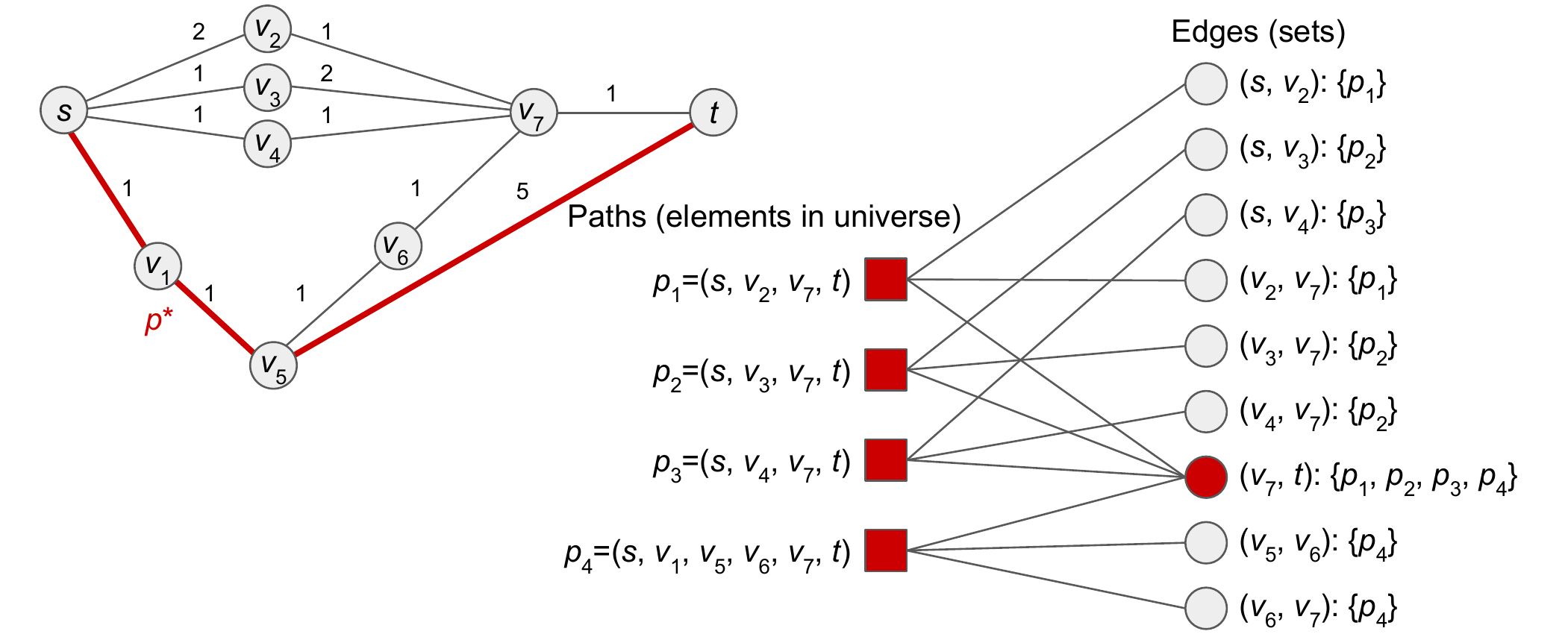}
    \caption{The Force Path Cut problem is an example of the Weighted Set Cover problem. In the bipartite graph on the right, the square nodes represent paths and the circle nodes represent edges. Note that edges along $p^*$ are not included. When the red-colored circle (i.e., edge $(v_7,t)$) is removed, then the red-colored squares (i.e., paths $p_1$, $p_2$, $p_3$, and $p_4$) are removed.}
    \label{fig:translation}
\end{figure*}

Weighted Set Cover is not only APX-hard, but cannot be approximated within less than a logarithmic factor~\cite{Raz1997}. There are, however, well known approximation algorithms that enable us to efficiently reach this asymptotic lower bound. The challenge for Force Path Cut is that the universe of competing paths may be extremely large. We address this challenge in the remainder of this section.

\subsection{Approximation Algorithms for Fixed Path Sets}
We consider two approximation algorithms for Weighted Set Cover, both of which achieve a logarithmic factor approximation for the optimal solution~\cite{Vazirani2003}. The first is a greedy method, which iteratively selects the most cost effective set and includes it in the solution, i.e., it selects the set where the number of uncovered elements divided by the cost is the lowest. We likewise iteratively remove the edge that cuts the most uncut paths in $P$ for the lowest cost. We refer to this procedure as \GreedyPathCover{}, and we provide its pseudocode in Algorithm~\ref{alg:greedySC}. We have a fixed set of paths $P\subset P_{p^*}$. Note that this algorithm only uses costs, not weights: the paths of interest have already been determined and we only need to determine the cost of breaking them. \texttt{GreedyPathCover} performs a constant amount of work at each edge in each path in the initialization loop and the edge and path removal. We use lazy initialization to avoid initializing entries in the tables associated with edges that do not appear in any paths. Thus, populating the tables and removing paths takes time that is linear in the sum of the number of edges over all paths, which in the worst case is $O(|P|N)$. Finding the most cost-effective edge takes $O(M)$ time with a na\"{i}ve implementation, and this portion is run at most once per path, leading to an overall running time of $O(|P|(N+M))$. Using a more sophisticated data structure, like a Fibonacci heap, to hold the number of paths for each edge would enable finding the most cost effective edge in constant time, but updating the counts when edges are removed would take $O(\log{M})$ time, for an overall running time of $O(|P|N\log{M})$. The worst-case approximation factor is the harmonic function of the size of the universe~\cite{Vazirani2003}, i.e., $H_{|\mathcal{U}|}=\sum_{n=1}^{|\mathcal{U}|}{1/n}$,
which implies that the \texttt{GreedyPathCover} algorithm has a worst-case approximation factor of $H_{|P|}$. 
\begin{algorithm}
\KwIn{Graph $G=(V, E)$, costs $c$, target path $p^*$, path set $P$, edges $\Ekeep$}
\KwOut{Set $E^\prime$ of edges to cut}
$T_P\gets$ empty hash table;  \tcp{set of paths for each edge}
$T_E\gets$ empty hash table; \tcp{set of edges for each path}
$N_P\gets$ empty hash table; \tcp{path count for each edge}
\ForEach{$e \in E$}{
$T_P[e]\gets\emptyset$\;
$N_P[e]\gets0$\;
}
\ForEach{$p\in P$}{
$T_E[p]\gets\emptyset$\;
\ForEach{$e\in E_p\setminus \Ekeep$}{
$T_P[e]\gets T_P[e]\cup \{p\}$\;
$T_E[p]\gets T_E[p]\cup \{e\}$\;
$N_P[e]\gets N_P[e]+1$\;
}
}
$E^\prime\gets\emptyset$\;
\While{$\max_{e\in E}{N_P[e]}>0$}{
$e^\prime\gets\arg\max_{e\in E}{N_P[e]/c(e)}$; \tcp{find most cost-effective edge}
$E^\prime\gets E^\prime\cup\{e^\prime\}$\;
\ForEach{$p \in T_P[e^\prime]$}{
\ForEach{$e_1\in T_E[p]$}{
$N_P[e_1] \gets N_P[e_1] - 1$; \tcp{decrement path count}
$T_P[e_1] \gets T_P[e_1] \setminus \{p\}$; \tcp{remove path}
}
$T_E[p]\gets\emptyset$; \tcp{clear edges}
}
}
\Return $E^\prime$
\caption{\texttt{GreedyPathCover}}
\label{alg:greedySC}
\end{algorithm}

For the second approximation algorithm, we introduce the integer program formulation of Force Path Cut, modeled after the formulation for Weighted Set Cover. The objective is to minimize the cost of the edges that are cut. Let $\cVec\in\reals_{\geq0}^M$ be a vector of edge removal costs, where each entry corresponds to an edge. The binary vector $\Delta\in\{0,1\}^M$ indicates the edges to be removed. In the integer program formulation of Weighted Set Cover, each dimension corresponds to a set, and each constraint corresponds to an element. Each element forms a linear constraint, forcing at least one of the sets containing that element to be selected. To solve Force Path Cut, let $P_{p^*}$ be the set of paths that must be cut. For each path $p\in P_{p^*}$, let $\xVec_p\in \{0,1\}^M$ be the indicator vector for $E_p$, the set of edges along $p$. We solve Force Path Cut via the integer program
\begin{align}
    \hat{\Delta}=&\argmin_{\Delta}\cVec^\top\Delta\label{eq:minCostObj}\\
    \text{s.t.} &~\Delta\in\{0, 1\}^{M}\\
    &~\xVec_p^\top\Delta\geq1~~\forall p\in P_{p^*}\\
    &~\xVec_{\textrm{keep}}^\top\Delta=0\label{eq:dontCutP*}.
\end{align}
The constraint (\ref{eq:dontCutP*}) prevents any edges in the set $\Ekeep$ from being cut.\footnote{This could also be achieved by removing these variables from the optimization.} With target paths, we set $\Ekeep=E_{p^*}$, but we allow greater flexibility for the methods outlined in Section~\ref{sec:edgeAlgorithm}, which target nodes or edges.

The second algorithm is a randomized algorithm that uses this formulation. As with the greedy approach, we focus on a subset of paths $P\subset P_{p^*}$. This algorithm relaxes the integer program (\ref{eq:minCostObj})--(\ref{eq:dontCutP*}) by replacing the binary constraint $\Delta\in\{0,1\}^M$ with a continuous constraint $\Delta\in[0,1]^M$. When we find the optimal solution $\hat{\Delta}$ to the relaxed problem, we perform the following randomized rounding procedure for each edge $e$:
\begin{enumerate}
    \item Treat the corresponding entry $\hat{\Delta}_e$ as a probability.
    \item Draw $\lceil\ln{(4|P|)}\rceil$ i.i.d. Bernoulli random variables with probability $\hat{\Delta}_e$.\label{item:sample}
    \item Cut $e$ only if at least one random variable from step~\ref{item:sample}  is 1.
\end{enumerate}
The resulting graph must satisfy two criteria. First, the constraints must all be satisfied, i.e., all paths in $P$ must be cut. In addition, the cost of cutting the edges must not exceed $\ln{(4|P|)}$ times the fractional optimum, i.e., the optimal objective of the relaxed linear program (LP). If one of these criteria is not satisfied, the randomized rounding procedure is run again. Each criterion has less than $1/4$ probability of not being satisfied~\cite{Vazirani2003rounding}. Thus, the expected number of randomized rounding trials to achieve a satisfactory solution is less than 2. We present pseudocode for this procedure, which we call \texttt{RandPathCover}, in Algorithm~\ref{alg:LPCut}.
\begin{algorithm}
\KwIn{Graph $G=(V, E)$, costs $\cVec$, path $p^*$, path set $P$, edges $\Ekeep$}
\KwOut{Set $E^\prime$ of edges to cut}
$\hat{\Delta}\gets$ relaxed cut solution to (\ref{eq:minCostObj})--(\ref{eq:dontCutP*}) with paths $P$\;
$\Delta \gets \mathbf{0}$\;
$E^\prime\gets\emptyset$\;
not\_cut$\gets\mathbf{True}$\;
\While{$\cVec^\top\Delta > \cVec^\top\hat{\Delta}(4\ln{(4|P|)})$ $\mathbf{or}$ $\mathrm{not\_cut}$}{
$E^\prime\gets\emptyset$\;
\For{$i\gets 1$ to $\lceil\ln{(4|P|)}\rceil$}{
\tcp{randomly select edges based on $\hat{\Delta}$}
$E_1\gets\{e\in E \textrm{ with probability }\hat{\Delta}_e\}$\;
$E^\prime\gets E^\prime\cup E_1$\;
}
$\Delta\gets$ indicator vector for $E^\prime$\;
not\_cut$\gets(\exists p\in P$ where $E_p\cap E^\prime\neq \emptyset$)\;
}
\Return $E^\prime$
\caption{\texttt{RandPathCover}}
\label{alg:LPCut}
\end{algorithm}

\subsection{Constraint Generation}
\label{subsec:constraintGeneration}
Algorithms~\ref{alg:greedySC} and~\ref{alg:LPCut} work with a fixed set of paths $P$, but to solve Force Path Cut, enumeration of the full set of relevant paths may be intractable. Consider, for example, the case where $G$ is a clique (i.e., a complete graph) and all edges have weight 1 except the edge joining $s$ and $t$, which has weight $N$. Among all simple paths from $s$ to $t$, the longest path is the direct path $(s, t)$ that only uses one edge; all other simple paths are shorter, including $(N-2)!$ paths of length $N-1$. Setting $p^*=(s,t)$ makes the full set of constraints (i.e., one for each path that need to be cut) extremely large. However, if $P$ were to include only those paths of length $2$ and $3$, we would obtain the optimal solution with only $(N-2)^2+(N-2)$ constraints. In general, we can solve linear programming problems---even those with infinitely many constraints---as long as we have a polynomial-time method to identify a constraint that a candidate solution violates.

In our case, the oracle that provides a violated constraint, if one exists, is the shortest path algorithm. After applying a candidate solution, we see if $p^*$ is the shortest path from $s$ to $t$ in the resulting graph. If so, the procedure terminates. If not, the shortest path is added as a new constraint. This procedure works in an iterative process with the approximation algorithms to expand $P$ as new constraints are discovered. We refer to the resulting algorithm as \PATHATTACK{}, and provide pseudocode in Algorithm~\ref{alg:PATHATTACK}. Depending on whether we use \texttt{GreedyPathCover} or \texttt{RandPathCover}, we refer to the algorithm as \texttt{PATHATTACK-Greedy} or \texttt{PATHATTACK-Rand}, respectively.

\begin{algorithm}
\KwIn{Graph $G=(V, E)$, costs $c$, weights $w$, target path $p^*$, edges $E_\textrm{keep}$, cover alg. \texttt{Cvr}}
\KwOut{Set $E^\prime$ of edges to cut}
$E^\prime\gets\emptyset$\;
$P\gets\emptyset$\;
$\cVec\gets$ vector from costs $c(e)$ for $e\in E$\;
$G^\prime\gets(V, E\setminus E^\prime)$\;
$s, t\gets$ source and destination nodes of $p^*$\;
$p\gets$ shortest path from $s$ to $t$ in $G^\prime$ (not including $p^*$)\;
\While{$p$ is not longer than $p^*$}{
$P\gets P\cup\{p\}$\;
$E^\prime\gets$\texttt{Cvr}$(G, \cVec, p^*, P)$\;
$G^\prime\gets(V, E\setminus E^\prime)$\;
$p\gets$ shortest path from $s$ to $t$ in $G^\prime$ (not including $p^*$) using weights $w$\;
}
\Return $E^\prime$
\caption{\texttt{PATHATTACK}}
\label{alg:PATHATTACK}
\end{algorithm}
\subsection{\PATHATTACK{} Convergence and Approximation Guarantees}
While the approximation factor for Set Cover is a function of the size of the universe (all paths that need to be cut), this is not the fundamental factor in the approximation in our case. The approximation factor for \texttt{PATHATTACK-Greedy} is based only on the paths we consider explicitly. Using only a subset of constraints, the worst-case approximation factor is determined by the size of that subset. By the final iteration of \texttt{PATHATTACK}, however, we have a solution to Force Path Cut that is within $H_{|P|}$ of the optimum of the less constrained problem, using $|P|$ from the final iteration. 
This yields the following proposition:
\begin{proposition}
The approximation factor of \texttt{PATHATTACK-Greedy} is at most $H_{|P|}$ times the optimal solution to Force Path Cut.\label{prop:greedyFactor}
\end{proposition}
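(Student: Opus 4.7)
\begin{proof*}{Proof proposal}
The plan is to combine the classical $H_n$ guarantee for greedy weighted set cover with the observation that the constraint generation procedure only ever underconstrains the problem, never overconstrains it. Fix the path set $P$ present at the final iteration of \texttt{PATHATTACK-Greedy}, and let $E^\prime$ be the edge set it returns. Let $\opt_P$ denote the minimum total cost of any edge set that cuts every path in $P$ (and contains no edge of $\Ekeep$), and let $\opt_{\mathrm{FPC}}$ denote the optimal objective of the original Force Path Cut instance.

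First I would note that the final call to \texttt{GreedyPathCover} is a fresh run of the standard greedy weighted set cover heuristic on a set cover instance whose universe is exactly $P$ and whose sets are the edges (each edge $e$ corresponding to the subset of paths in $P$ that traverse $e$), with costs $c(e)$. By the classical analysis of greedy weighted set cover (see, e.g., the reference to~\cite{Vazirani2003} in the text), the returned cover has cost at most $H_{|P|}\cdot \opt_P$.

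Next I would compare $\opt_P$ with $\opt_{\mathrm{FPC}}$. Because \PATHATTACK{} only terminates when the oracle fails to produce a competing path, every path that Force Path Cut requires to be cut is implicitly included among those considered; in particular $P\subseteq P_{p^*}$. Therefore any feasible solution for Force Path Cut is also a feasible cover of $P$, which gives $\opt_P\leq \opt_{\mathrm{FPC}}$. Chaining the two inequalities yields
\begin{equation*}
  \cVec^\top\mathbf{1}_{E^\prime}\;\leq\;H_{|P|}\cdot\opt_P\;\leq\;H_{|P|}\cdot\opt_{\mathrm{FPC}},
\end{equation*}
which is the claimed bound.

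The only subtle point, and the one I would be most careful about, is justifying that the output $E^\prime$ is actually feasible for Force Path Cut (not merely for the set cover restricted to $P$): this is exactly the termination condition of the outer \texttt{while} loop in Algorithm~\ref{alg:PATHATTACK}, since the loop exits only when the shortest-path oracle certifies that no path shorter than (or equal in length to) $p^*$ remains outside $p^*$ itself. Once feasibility is in hand, the two-line chain above is essentially immediate, so I do not expect any major obstacle beyond carefully stating which universe the $H_{|P|}$ factor is measured against.
\end{proof*}
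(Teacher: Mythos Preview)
Your proposal is correct and follows essentially the same argument the paper gives in the paragraph immediately preceding the proposition: invoke the classical $H_{|P|}$ greedy set cover bound on the final path set, observe that $P\subseteq P_{p^*}$ so the restricted optimum lower-bounds $\opt_{\mathrm{FPC}}$, and note that the termination condition of Algorithm~\ref{alg:PATHATTACK} guarantees feasibility for Force Path Cut. Your write-up is in fact more explicit than the paper's, which states the conclusion without a formal proof.
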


A similar argument holds for \texttt{PATHATTACK-Rand}, applying the results of \cite{Vazirani2003}:
\begin{proposition}
\texttt{PATHATTACK-Rand} yields a worst-case $O(\log |P|)$ approximation to Force~Path Cut with high probability.\label{prop:LPFactor}
\end{proposition}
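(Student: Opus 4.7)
The plan is to follow the template of Proposition~\ref{prop:greedyFactor}, substituting the classical randomized-rounding analysis for Weighted Set Cover in place of the greedy bound. First I would argue feasibility: when \PATHATTACK{} exits its outer \textbf{while} loop, the shortest-path oracle has failed to return any $s$-$t$ path in $G^\prime=(V,E\setminus E^\prime)$ that is no longer than $p^*$, so the returned edge set $E^\prime$ is already feasible for the full Force Path Cut instance, not merely for the restricted instance on the final path set $P$.

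For the cost bound I would invoke the guarantee behind \texttt{RandPathCover} established in~\cite{Vazirani2003rounding}: its rejection-sampling loop returns only an $E^\prime$ whose cost is within a factor of $4\ln(4|P|)$ of the optimum of the LP relaxation on the path set $P$. Writing $\mathrm{LP}^*_P$ for that LP optimum, $\mathrm{LP}^*$ for the LP optimum on the full set $P_{p^*}$, and $\opt$ for the integer optimum of Force Path Cut, dropping constraints can only enlarge the feasible region of a minimization, so $\mathrm{LP}^*_P\le\mathrm{LP}^*$; and since the LP is a relaxation of the integer program, $\mathrm{LP}^*\le\opt$. Chaining these inequalities gives $\cVec^\top\Delta\le 4\ln(4|P|)\cdot\opt=O(\log|P|)\cdot\opt$, which is exactly the claimed approximation factor.

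For the ``with high probability'' qualifier I would analyze the retry loop of \texttt{RandPathCover}. By the union bound over the two failure modes analyzed in~\cite{Vazirani2003rounding} (the rounded solution misses some path in $P$, or its cost exceeds the threshold), each independent rounding attempt succeeds with probability at least $1/2$; capping the loop at $k$ attempts therefore fails with probability at most $2^{-k}$. Any polynomial choice of $k$ drives the failure probability below an inverse polynomial in $N$, while each attempt runs in polynomial time by LP-solvability.

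The main obstacle is a bookkeeping one rather than a conceptual one: in the stated bound $|P|$ must refer to the path set at termination of the outer \PATHATTACK{} loop, not an intermediate snapshot. I would resolve this by observing that constraint generation adds exactly one path per outer iteration and, by the same convergence reasoning used for \texttt{PATHATTACK-Greedy}, halts after polynomially many iterations, so $|P|$ at termination remains polynomial in $N$ and $M$ and the guarantee is $O(\log N)$ in the usual sense.
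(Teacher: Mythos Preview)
Your proposal is correct and mirrors the paper's own argument, which is essentially the one-line remark ``A similar argument holds for \texttt{PATHATTACK-Rand}, applying the results of~\cite{Vazirani2003}'' following Proposition~\ref{prop:greedyFactor}. You have simply unpacked that remark: feasibility of the returned $E^\prime$ for the full Force Path Cut instance comes from the termination condition of the constraint-generation loop, and the chain $\mathrm{LP}^*_P\le\mathrm{LP}^*\le\opt$ together with the $4\ln(4|P|)$ rounding guarantee gives the approximation factor relative to the less-constrained optimum, exactly as the paper argues for the greedy variant.

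One small overreach: in your final paragraph you assert that constraint generation ``halts after polynomially many iterations'' by the same reasoning as for \texttt{PATHATTACK-Greedy}, and therefore that $|P|$ is polynomial and the bound is $O(\log N)$. The paper does not establish polynomial convergence for either \texttt{PATHATTACK-Greedy} or the standard \texttt{PATHATTACK-Rand}; that guarantee is reserved for the ellipsoid-based variant in Theorem~\ref{thm:PATHATTACKconvergence}. Fortunately the proposition only claims $O(\log|P|)$ with $|P|$ the terminal path set, so this extra claim is unnecessary for the proof and can simply be dropped.
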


There is also a variation of \texttt{PATHATTACK-Rand} can guarantee polynomial time convergence. While the number of implicit constraints may be extremely large, each one is a standard linear inequality constraint, which implies that the feasible region is convex. Given a polynomial-time oracle that returns a constraint violated by a proposed solution $\hat{\Delta}\in[0,1]^M$ to the relaxed LP, we could use Khachiyan's ellipsoid algorithm (see~\cite{Gacs1981}), which can solve a linear program with an arbitrary (or infinite) number of linear constraints in a polynomial number of iterations~\cite{Grotschel1981}. We use the randomized rounding procedure from Algorithm~\ref{alg:LPCut} to achieve such an oracle. In Appendix~\ref{sec:convergence}, we prove that this oracle terminates in polynomial time with high probability. This results in the following theorem.

\begin{theorem}
\PATHATTACK{} using the ellipsoid algorithm converges in polynomial time with high probability.
\label{thm:PATHATTACKconvergence}
\end{theorem}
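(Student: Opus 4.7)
The plan is to reduce the claim to the Grötschel--Lovász--Schrijver (GLS) framework: a convex LP is solvable to optimality by Khachiyan's ellipsoid algorithm in time polynomial in the input bit-length whenever one has a polynomial-time separation oracle that, given a candidate point, either certifies feasibility or returns a violated linear inequality. Since the LP (\ref{eq:minCostObj})--(\ref{eq:dontCutP*}) has only polynomially many box and equality constraints, the only non-trivial family is the path-cut inequalities $\xVec_p^\top\Delta\geq 1$ over $p\in P_{p^*}$. My task therefore is to build such an oracle on top of the randomized rounding of Algorithm~\ref{alg:LPCut} and to bound its running time.

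Given a candidate $\hat{\Delta}\in[0,1]^M$, I would invoke the rounding procedure (with the number of Bernoulli trials inflated to $\lceil\ln(4|P_{p^*}|)\rceil$, which remains polynomial in $N$ since $|P_{p^*}|$ is upper-bounded by the number of simple $s$--$t$ paths) to obtain a $0/1$ edge-removal vector $\Delta$. A single shortest-path computation in $G$ with those edges deleted then either shows that $p^*$ is already a shortest $s$--$t$ path, in which case $\Delta$ is a feasible integer solution and \texttt{PATHATTACK} can terminate outright, or else it produces a shortest path $p$ whose $w$-length is at most $w(p^*)$; in the latter case $p\in P_{p^*}$, and I would hand the constraint $\xVec_p^\top\Delta\geq 1$ back to the ellipsoid as a separating hyperplane.

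For correctness and complexity, I would apply the Chernoff-style argument underlying randomized set cover: if $\hat{\Delta}$ is feasible, then for any specific $p\in P_{p^*}$ the probability that no edge of $p$ is rounded up is at most $\prod_{e\in E_p}(1-\hat{\Delta}_e)^{\lceil\ln(4|P_{p^*}|)\rceil}\leq e^{-\ln(4|P_{p^*}|)}=1/(4|P_{p^*}|)$, so a union bound over $P_{p^*}$ yields a constant probability of a feasible integer cut per rounding attempt; hence $O(\log(1/\delta))$ retries suffice to either certify feasibility of $\hat{\Delta}$ or exhibit a genuinely violated $p$ with probability at least $1-\delta$. Each retry costs only a shortest-path call and a linear pass over the edges, so each oracle call runs in polynomial time with high probability.

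Plugging this oracle into the GLS ellipsoid machinery yields the theorem: the relaxed LP is solved to within the required precision in polynomially many ellipsoid iterations, each of which invokes a single polynomial-time, high-probability-correct oracle call, and a union bound over iterations keeps the overall failure probability inverse-polynomially small. The main obstacle I anticipate is synchronizing the union bound against the implicit exponential universe $P_{p^*}$ rather than against the working set $P$ that Algorithm~\ref{alg:LPCut} was designed around; the analysis must show that enlarging the number of Bernoulli trials to $\Theta(\log|P_{p^*}|)$ instead of $\Theta(\log|P|)$ still preserves both per-iteration polynomial running time and the approximation guarantee, so that the rounded cut returned at termination inherits the $O(\log|P_{p^*}|)$ approximation factor promised by Proposition~\ref{prop:LPFactor}.
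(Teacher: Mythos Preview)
Your overall framework matches the paper's: reduce to the GLS ellipsoid method with a separation oracle built from randomized rounding, and bound the per-call work. The one cosmetic difference is that you fix the number of Bernoulli trials at $\lceil\ln(4|P_{p^*}|)\rceil$ via the crude bound $\ln|P_{p^*}|=O(N\log N)$, whereas the paper's oracle (Algorithm~\ref{alg:oracle}) starts at $\lceil\ln(4|P|)\rceil$ and adaptively increments up to $N\ln N$; either works.

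There is, however, a real gap in your oracle's correctness argument. You reason: if $\hat{\Delta}$ is feasible for \emph{all} of $P_{p^*}$, the union bound gives a constant probability that the rounded $\Delta$ cuts every path, so repeated trials either terminate or ``exhibit a genuinely violated $p$.'' The second alternative is not justified. When $\hat{\Delta}$ is infeasible and the rounded $\Delta$ leaves some shortest path $p$ uncut, nothing you have written forces $\xVec_p^\top\hat{\Delta}<1$; $p$ could be a path whose fractional constraint is already satisfied but which happened to survive this particular rounding draw. Handing such a $p$ back to the ellipsoid is not a valid separating hyperplane, and your argument as written does not rule this out.

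The missing step---which the paper supplies through its set $P_f$ of paths that $\hat{\Delta}$ \emph{does} satisfy---is that the per-path survival bound $\Pr[p\text{ survives}]\le e^{-(\sum_{e\in E_p}\hat{\Delta}_e)\lceil\ln(4|P_{p^*}|)\rceil}\le 1/(4|P_{p^*}|)$ holds for each $p$ individually as soon as $\xVec_p^\top\hat{\Delta}\ge 1$, irrespective of whether $\hat{\Delta}$ is globally feasible. A union bound over $P_f$ (not over all of $P_{p^*}$) then shows that with constant probability the rounded $\Delta$ cuts every path in $P_f$; consequently any competing shortest path that survives in the residual graph must lie in $P_{p^*}\setminus P_f$ and is automatically a fractional violator. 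The paper's oracle makes this airtight by explicitly testing $\sum_{e\in p}\hat{\Delta}_e<1$ on the returned path and retrying otherwise. Once you insert this per-path argument (or the explicit check), your proof goes through along the same lines as the paper's.
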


While this guarantees polynomial time convergence, we do not use the ellipsoid algorithm in our experiments in Section~\ref{sec:experiments}, as other methods converge much faster in practice.

\subsection{PATHATTACK for Node Removal}
\label{subsec:nodeRemoval}
This paper is focused on the case where edges are removed from a graph, but there are applications in which removing nodes is more appropriate. In a computer network, for example, it may be more likely for a node to be taken offline (e.g., via a denial of service attack) rather than to have particular connections disallowed. Force Path Cut via node removal is a more complicated problem; for a given $p^*$, there is not always a solution. Consider, for example, a triangle of nodes $u$, $v$, and $w$, where all weights are equal, and $p^*=(u, v, w)$: none of the three nodes can be removed because it would destroy $p^*$, so $p^*$ cannot be made the shortest path due to the existence of the edge connecting $u$ to $w$.

In cases where it is possible, however, we can still use \PATHATTACK{}. The mapping to Weighted Set Cover is analogous: a path is an element and a node are sets that include all paths. Given a graph and $p^*$, we can check if it is possible to make $p^*$ the shortest path from $s$ to $t$ via node removal: if $p^*$ is the shortest path from $s$ to $t$ in the induced subgraph of the nodes on $p^*$, then there is a solution. If not, there is none; the method to check has demonstrated that, even if all other nodes are removed, $p^*$ is not the shortest path. If there is a possible solution, we can run \PATHATTACK{} for node removal and achieve the same convergence and approximation guarantees as we have for edge removal. As this is ancillary to the main focus of the paper, we relegate further discussion of node removal to Appendix~\ref{sec:nodeRemoval}.

\section{Heuristics for Force Edge Cut and Force Node Cut}
\label{sec:edgeAlgorithm}
Solving Force Edge Cut or Force Node Cut requires an additional layer of optimization. Since these problems do not consider a fixed path, there is the possibility that a candidate solution could be improved by cutting the current shortest path through the target.

Take, for example, the graph in Figure~\ref{fig:badCut}. The shortest path from $s$ to $t$ via $e^*$ is 4 hops and uses the sole low-cost edge. If we solve Force Path Cut with this path as $p^*$, all of the $k$ paths along the top of the figure have to be cut. If we remove the low-cost edge, all these paths would be cut and the shortest path from $s$ to $t$ will go through $e^*$. Since that edge is being preserved as part of $p^*$, however, each of the $k$ parallel 2-hop paths has to be cut individually and incur a cost of $\cmax$. For large $k$, we see that this cost is within a constant factor of removing all edges in the graph, when a low-cost solution was available. The existence of this scenario proves the following theorem.
\begin{theorem}
Solving Force Path Cut targeting the shortest path through $e^*$ (or $v^*$) may yield a cost $\Omega(M\cmax/\cmin)$ times greater than solving Force Edge Cut (or Force Node Cut), where $\cmax$ and $\cmin$ are, respectively, the maximum and minimum edge removal costs.
\end{theorem}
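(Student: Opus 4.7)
The plan is to turn the informal scenario of Figure~\ref{fig:badCut} into an explicit parameterized family of instances whose Force Path Cut cost (when $p^*$ is the current shortest path through $e^*$) exceeds their Force Edge Cut optimum by a factor of $\Omega(M\cmax/\cmin)$. The construction has three ingredients: a distinguished target edge $e^*$ embedded in a (longish) $s$-to-$t$ route; a single ``bottleneck'' edge $\hat{e}$ of cost $\cmin$ that simultaneously lies on the shortest $s$-$t$ path through $e^*$ and on every path of a family of short $s$-to-$t$ detours that bypass $e^*$; and $k$ internally disjoint $2$-hop bypass paths of the form $s \to u \to t$ that all share $\hat{e}$ and whose remaining edges have cost $\cmax$. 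I would also include at least one $e^*$-containing $s$-$t$ route that does not use $\hat{e}$ (longer than the bypasses), so that deleting $\hat{e}$ still admits an $s$-$t$ path through $e^*$.

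With the construction in hand, the first step of the analysis is to bound the Force Edge Cut optimum from above. Removing the single edge $\hat{e}$, at cost $\cmin$, simultaneously destroys all $k$ bypass paths and the short $e^*$-route that used $\hat{e}$; the new shortest $s$-$t$ path must then take the alternate, longer route that still passes through $e^*$. Hence the Force Edge Cut optimum is at most $\cmin$.

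Next I would lower-bound the cost incurred by solving Force Path Cut with $p^*$ set to the current shortest $s$-$t$ path through $e^*$. Because $p^*$ uses $\hat{e}$ by construction, $\hat{e}\in E_{p^*}\subseteq\Ekeep$ is protected. Every bypass path must nonetheless be cut, but since the bypasses are internally disjoint outside $\hat{e}$ and their remaining edges all have cost $\cmax$, each requires its own cut of cost $\cmax$, giving a total cost of at least $k\cmax$. Finally, a direct count shows the total number of edges is $M=\Theta(k)$ (a constant number on the $e^*$-routes plus $2k$ edges in the bypasses, with $\hat{e}$ counted once), so $k=\Omega(M)$ and the ratio of Force Path Cut cost to Force Edge Cut optimum is at least $k\cmax/\cmin=\Omega(M\cmax/\cmin)$. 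The Force Node Cut version follows by choosing $v^*$ to be an interior vertex of the long $e^*$-route; this selection leaves the shortest path through $v^*$ (and hence $p^*$ and $\Ekeep$) unchanged, so the same accounting applies verbatim.

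The main obstacle is not the arithmetic but the careful design of the gadget: I need $\hat{e}$ to be the single common bottleneck of all $k$ bypasses \emph{and} of the shortest $e^*$-route, while still guaranteeing that removing $\hat{e}$ leaves some $s$-$t$ path through $e^*$ (otherwise Force Edge Cut would be infeasible and the ratio undefined). Once a topology with these three simultaneous properties is pinned down---essentially the star-plus-detour picture above---the two bounds $\mathrm{OPT}\le\cmin$ and $\mathrm{cost}(\PATHATTACK\text{ on }p^*)\ge k\cmax$ fall out by inspection, completing the proof.
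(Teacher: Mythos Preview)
Your plan matches the paper's proof: exhibit one explicit family (the Figure~\ref{fig:badCut} gadget) where removing the single $\cmin$-cost bottleneck edge solves Force Edge Cut, but protecting that edge as part of $p^*$ forces $k=\Omega(M)$ separate $\cmax$-cost cuts. The only slip is a self-contradiction in your gadget description---internally disjoint $2$-hop paths $s\to u_i\to t$ cannot all share a common edge $\hat e$; make the bypasses longer (e.g., $s\xrightarrow{\hat e}a\to u_i\to t$ with the $u_i$ distinct) so that $\hat e$ is genuinely their common first hop, and the rest of your analysis goes through verbatim.
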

\begin{figure}
    \centering
    \includegraphics[width=0.5\textwidth]{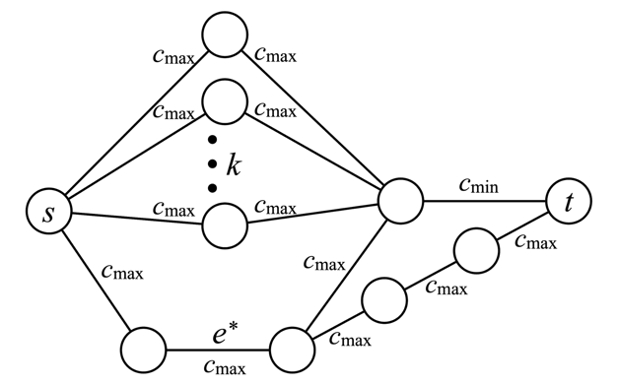}
    \caption{A scenario where preserving the initial shortest path through $e^*$ results in a bad solution. Here, all edges have the same weight and the labels are removal costs. The shortest path through $e^*$ also uses the edge with cost $c_{\min}$. However, if we solve Force Path Cut with this path as $p^*$, the $k$ parallel paths on the top of the figure need to be cut, resulting in a cost of $kc_{\max}$. If, on the other hand, the edge with cost $c_{\min}$ were removed, then the shortest path from $s$ to $t$ would go through $e^*$, at a factor of $k\frac{c_{\max}}{c_{\min}}$ lower cost. If $k$ is $\Omega(M)$, then solution is within a constant factor of removing all edges, when a solution with the lowest possible cost was available.}
    \label{fig:badCut}
\end{figure}

To both minimize cost and allow the flexibility to alter the target path, we formulate a nonconvex optimization problem. We start with a formulation of a linear program to obtain the shortest path through $e^*$ (or $v^*$). 
Recall the linear program formulation of the shortest path problem (see, e.g., \cite{Aneja1978}). Using a graph's incidence matrix, finding the shortest path can be formulated as a linear program. In the incidence matrix, denoted by $\CMat$, each row corresponds to a node and each column corresponds to an edge. The column representing the edge from node $u$ to node $v$ contains $-1$ in the row for $u$, $1$ in the row for $v$, and zeros elsewhere. If the graph is undirected, the edge orientation is arbitrary. To identify the shortest path from $s$ to $t$, define the vector $\dVec\in\{-1,0,1\}^N$, which is $-1$ in the row corresponding to $s$, $1$ in the row for $t$, and zeros elsewhere. The shortest path is the solution to the integer linear program
\begin{align}
    \hat{\xVec}=&\argmin_{\xVec}\xVec^\top\wVec\label{eq:pathLengthObj}\\
    \text{s.t.} &~\xVec\in\{0, 1\}^{M}\label{eq:binaryCut}\\
    &~\CMat\xVec=\dVec\label{eq:pathConstraint}.
\end{align}
The resulting vector $\hat{\xVec}$ is an indicator for the edges in the shortest path.\footnote{In the undirected case, to denote traversal of an edge in the opposite direction of its arbitrary orientation, we consider $\xVec_\textrm{pos}$ and $\xVec_\textrm{neg}$, with constraint~(\ref{eq:pathConstraint}) replaced with $\CMat(\xVec_\textrm{pos}-\xVec_\textrm{neg})=\dVec$ and the objective replaced with $(\xVec_\textrm{pos}+\xVec_\textrm{neg})^\top\wVec$. To restrict traversing an edge in both directions, we add the constraint $\xVec_\textrm{pos}+\xVec_\textrm{neg}\leq\ones$} Constraint (\ref{eq:pathConstraint}) ensures that $\hat{\xVec}$ is a path, and the objective guarantees the result has minimum weight. Due to the structure of the incidence matrix, we can relax the constraint~(\ref{eq:binaryCut}) into the continuous interval $\xVec\in[0,1]^M$. If the shortest path from $s$ to $t$ is unique, $\hat{\xVec}$ will be an indicator vector for the shortest path despite this relaxation. If there are multiple shortest paths, $\hat{\xVec}$ will be a convex combination of the vectors for these paths.

To obtain the shortest path through $e^*$, we perform a similar optimization over two paths: the path from $s$ to $e^*$ and the path from $e^*$ to $t$.\footnote{For brevity, we focus on the solution to Force Edge Cut. The solution to Force Node Cut is similar.} Let $e^*=(u,v)$ be the target edge. (In the undirected setting, we consider both $(u, v)$ and $(v, u)$ independently and choose the lower-cost solution.) Since we consider two paths, we have two vectors $\dVec_1$ and $\dVec_2$, analogous to $\dVec$ in (\ref{eq:pathConstraint}). Assuming that $s\neq u$ and $t\neq v$, $\dVec_1$ is $-1$ in the row of $s$ and $1$ in the row of $u$, with zeros elsewhere, and $\dVec_2$ similarly has $-1$ in the row for $v$ and $1$ in the row for $t$. (If $s=u$, $\dVec_1$ is all zeros, as is $\dVec_2$ if $t=v$.) Since we are optimizing two paths, we use two path vectors, $\xVec_1\in[0,1]^M$ and $\xVec_2\in[0,1]^M$.

It is not, however, sufficient to obtain a path from $s$ to $u$ and one from $v$ to $t$: The concatenation must also be a simple (acyclic) path. To ensure the resulting path has no cycles, we add a constraint to ensure any node is visited at most once. Since the path vectors are optimized over edges, we ensure any node occurs at most twice in the set of edges, aside from the terminal nodes, which occur at most once. To obtain this constraint, we use the matrix $\Cabs$, which contains the absolute values of the incidence matrix, i.e., the $i$th row and $j$th column of $\Cabs$ contains $|\CMat_{ij}|$. We also define $\dabs\in\{1,2\}^N$ to be $1$ in the rows associated with $s$, $t$, $u$, and $v$, and $2$ elsewhere. This yields the linear program 
\begin{align}
    \hat{\xVec}_1, \hat{\xVec}_2=&\argmin_{\xVec_1, \xVec_2}(\xVec_1+\xVec_2)^\top\wVec\\
    \text{s.t.} &~\xVec_1, \xVec_2\in[0, 1]^{M}\\
    &~\CMat\xVec_1=\dVec_1\\
    &~\CMat\xVec_2=\dVec_2\\
    &~\Cabs(\xVec_1+\xVec_2)\leq\dabs.
\end{align}

To solve Force Edge Cut, we use the same constraint generation technique as in \PATHATTACK{}, plus a nonlinear constraint to ensure the returned path does not contain any removed edges. As in \PATHATTACK{}, we use a vector $\Delta\in[0, 1]$. In addition to constraining $\Delta$ to not cut $e^*$, we add the nonconvex bilinear constraint that $\Delta$ is $0$ anywhere $\xVec_1$ or $\xVec_2$ is nonzero. Finally, we again consider a subset of paths $P$ we want to ensure are cut. The resulting nonconvex program
\begin{align}
    \hat{\xVec}_1, \hat{\xVec}_2, \hat{\Delta}=&\argmin_{\xVec_1, \xVec_2, \Delta}(\xVec_1+\xVec_2)^\top\wVec\label{eq:ForceEdgeObj}\\
    \text{s.t.} &~\xVec_1, \xVec_2, \Delta\in[0, 1]^{M}\\
    &~\CMat\xVec_i=\dVec_i,~i\in\{1,2\}\\
    &~\Cabs(\xVec_1+\xVec_2)\leq\dabs\\
    &~\xVec_i^\top\Delta=0,~i\in\{1,2\}\\
    &~\Delta_{e^*}=0\\
    &~\Delta^\top\cVec\leq b\\
    &~\Delta^\top\xVec_p\geq 1~\forall p\in P
\end{align}
provides a partial solution.

The constraint generation mechanism differs slightly from the Force Path Cut case. When solving Force Path Cut, there is a specific target path whose length does not change. For Force Edge Cut, when a new path is added to the constraint set $P$, it may change the shortest uncut path through $e^*$, and thus change the length threshold for inclusion. Thus, we want $P$ to include all paths that not longer than the shortest path in the solution, which is not available until the problem has been solved. As in \PATHATTACK{}, each time we solve the optimization problem, we find the shortest path after removing the edges indicated by $\hat{\Delta}$ as well as $e^*$. If this path is not longer than the shortest uncut path through $e^*$---the path indicated by $\hat{\xVec}_1$, $\hat{\xVec}_2$, and $e^*$---then this alternative path must also be cut, and we add a constraint to achieve this. Algorithm~\ref{alg:e*constraintGen} provides psuedocode for this modified constraint generation procedure. 
\begin{algorithm}
\KwIn{Graph $G=(V, E)$, weights $w$, costs $c$, source $s$, destination $t$, target edge $e^*$, budget $b$}
\KwOut{Set $E^\prime$ of edges to cut}
$\CMat\gets$ unweighted incidence matrix of $G$\;
$\wVec\gets$ weight vector from $w$; $\cVec\gets$ cost vector from $c$\;
$\dVec_1\gets$ $s$ to $e^*$ vector; $\dVec_2\gets$ $e^*$ to $t$ vector\;
$\Cabs\gets |\CMat|$; $\dabs\gets$ ``no cycle'' vector\;
$P\gets\emptyset$\;
 $\mathrm{done}\gets \mathbf{False}$\;
\While{$\mathbf{not}$ $\mathrm{done}$}{
    $\hat{\xVec}_1, \hat{\xVec}_2, \hat{\Delta}\gets$ solution to (\ref{eq:ForceEdgeObj})\;
    $\hat{\Delta}\gets$ extract single path indicator from $\hat{\Delta}$ if not binary\;
    $p_1\gets$ path from $\hat{\xVec}_1$, $e^*$, and $\hat{\xVec}_2$\;
    $E^\prime\gets$ edges with nonzeros in $\hat{\Delta}$\;
    $G^\prime\gets(V, E\setminus (E^\prime\cup\{e^*\}))$\;
    $p_2\gets$ shortest path from $s$ to $t$ in $G^\prime$ (using weights $w$)\;
    \eIf{$p_2$ is not longer than $p_1$}{
        $P\gets P\cup\{p_2\}$\;
    }{
        $\mathrm{done}\gets \mathbf{True}$\;
    }
}
\Return $E^\prime$\;
\caption{Constraint generation procedure for Force Edge Cut.}
\label{alg:e*constraintGen}
\end{algorithm}

To minimize the cost of the attack, we perform a binary search with respect to the budget $b$. We obtain upper and lower bounds for the budget by running \PATHATTACK{} targeting the shortest path through $e^*$. We run the standard \PATHATTACK{} to get the upper bound, and remove the constraint that the target path is uncut for the lower bound, instead only constraining that $e^*$ is not cut. If, during the search, a new upper bound is discovered (i.e., a path through $e^*$ is discovered that requires a budget smaller than the one under consideration), we create new upper and lower bounds based on the new satisfactory path. Algorithm~\ref{alg:binarySearch} outlines this procedure.
\begin{algorithm}
\KwIn{Graph $G=(V, E)$, weights $w$, costs $c$, source $s$, destination $t$, target edge $e^*$, tol. $\epsilon$}
\KwOut{Set $E^\prime$ of edges to cut}
$p\gets$ shortest path from $s$ to $t$ via $e^*$\;
$E_p\gets$ edges in $p$\;
$\Eupper\gets$\PATHATTACK{}$(G, c, w, p, E_p, \texttt{RandPathCover})$\;
$\bupper\gets\sum_{e\in \Eupper}{c(e)}$\;
$\Elower\gets$\PATHATTACK{}$(G, c, w, p, \{e^*\}, \texttt{RandPathCover})$\;
$\blower\gets\sum_{e\in \Elower}{c(e)}$\;
\While{$\bupper-\blower > \epsilon$}{
    $\bmid \gets (\bupper+\blower)/2$\;
    $\Emid\gets$Algorithm~\ref{alg:e*constraintGen}$(G, w, c, s, t, \bmid)$\;
    \eIf{Algorithm~\ref{alg:e*constraintGen} could not be solved}{
        $\blower\gets\bmid$;\ \ \ \ \ \tcp{budget is too small}
        remove constraints accrued since last iteration
    }{
        \tcp{budget is sufficient}
        $G^\prime\gets(V, E\setminus \Emid)$\;
        $p\gets$ shortest path from $s$ to $t$ in $G^\prime$;~~\tcp{this path will include $e^*$}
        $\Etemp\gets$\PATHATTACK{}$(G, c, w, p, E_p, \texttt{RandPathCover})$\;
        $\btemp\gets\sum_{e\in \Etemp}{c(e)}$\;
        \eIf{$\btemp < \bmid$}{
            $\bupper\gets\btemp$\;
            $\Eupper\gets\Etemp$\;
        }{
            $\bupper\gets\bmid$\;
            $\Eupper\gets\Emid$\;
        }
        $\Etemp\gets$\PATHATTACK{}$(G, c, w, p, \{e^*\}, \texttt{RandPathCover})$\;
        $\btemp\gets\sum_{e\in \Etemp}{c(e)}$\;
        \If{$\btemp > \blower$}{
            $\blower\gets\btemp$\;
            $\Elower\gets\Etemp$\;
        }
        
    }
}
\Return $\Eupper$;~~\tcp{return the best valid solution found}
\caption{Combinatorial Search}
\label{alg:binarySearch}
\end{algorithm}

While Algorithm~\ref{alg:binarySearch} will converge to within $\epsilon$ of an optimal solution in a logarithmic number of iterations, each iteration uses Algorithm~\ref{alg:e*constraintGen}, which involves solving a nonconvex optimization problem. It could, therefore, require exponential time to complete any given iteration, if finding an optimal solution is required. Thus, in addition to the combinatorial optimization method, we consider a heuristic algorithm that seeks to identify the bottlenecks that prevent an optimal solution, as illustrated in Figure~\ref{fig:badCut}, and that is guaranteed to run in polynomial time.

As in the combinatorial optimization, we leverage \PATHATTACK{} with and without constraints to avoid cutting the target path. In this case, after running \PATHATTACK{} while only preventing $e^*$ from being cut, we consider the edges on the target path that are cut by \PATHATTACK{}. For each of these edges, we consider the possibility of either (1) removing the edge from the graph entirely, or (2) marking it to never be removed. If removal of any of these edges causes $t$ to become unreachable from $s$, we add that edge to a list of uncuttable edges. Otherwise, we consider the case in which each of these edges is removed, each time finding the shortest path through $e^*$. We run \PATHATTACK{} in each case, and find the edge whose removal results in the lowest upper bound on the removal budget. This edge is added to a list of edges that will always be removed. This procedure terminates when the upper and lower bounds converge. A pseudocode description of this heuristic search is provided in Algorithm~\ref{alg:heuristicSearch}.

\begin{algorithm}
\KwIn{Graph $G=(V, E)$, weights $w$, costs $c$, source $s$, destination $t$, target edge $e^*$, tol. $\epsilon$}
\KwOut{Set $E^\prime$ of edges to cut}
$\Ealways\gets\emptyset$\;
$\Enever\gets\emptyset$\;
$\Ebest\gets\emptyset$\;
$\cbest\gets\infty$\;
\Repeat{$\cbest \leq \blower+\epsilon$}{
    $G^\prime\gets(V, E\setminus \Ealways)$\;
    $p\gets$ shortest path from $s$ to $t$ via $e^*$ in $G^\prime$\;
    $E_p\gets$ edges in $p$\;
    $\Eupper\gets$\PATHATTACK{}$(G, c, w, p, E_p, \texttt{RandPathCover})$\;
    $\bupper\gets\sum_{e\in \Eupper}{c(e)}$\;
    $\Elower\gets$\PATHATTACK{}$(G^\prime, c, w, p, \{e^*\}\cup\Enever, \texttt{RandPathCover})$\;
    $\blower\gets\sum_{e\in \Elower\cup\Ealways}{c(e)}$\;
    \If{$\bupper< \cbest$}{
        $\cbest\gets\bupper$\;
        $\Ebest\gets\Eupper$\;
    }
    \If{$\blower<\bupper$}{
        budget $\gets$ empty hash table\;
        \For{$e\in E_p\cap\Elower$}{
            remove $e$ from $G$\;
            $E_1\gets$\PATHATTACK{}$(G, c, w, p, E_p, \texttt{RandPathCover})$\;
            budget[$e$]$ \gets c(e)+\sum_{e_1\in E_1}{c(e_1)}$\;
            add $e$ to $G$\;
        }
        \eIf{$\exists e\in E_p\cap\Elower$ where removing $e$ disconnects $s$ and $t$}{
        $\Enever\gets\Enever\cup{\{e\}}$\;
        }{
        $e_{\mathrm{new}}\gets\argmin_{e\in E_p\cap\Elower}{\mathrm{budget}[e]}$\;
        $\Ealways\gets\Ealways\cup\{e\}$\;
        }
    }
}
\Return $\Ebest$\;
\caption{Heuristic Search}
\label{alg:heuristicSearch}
\end{algorithm}

By Theorem~\ref{thm:PATHATTACKconvergence}, PATHATTACK converges in polynomial time. Each iteration of the algorithm considers at most $N-1$ edges (the maximum number of edges in a path) and runs \PATHATTACK{} twice for each. At each iteration, at least one edge is added to either (1) the set of edges to never cut or (2) the set of edges to always cut. The total number of iterations is, therefore, bounded by the number of edges. This means that \PATHATTACK{} will be run $O(MN)$ times over the course of Algorithm~\ref{alg:heuristicSearch}, which yields the following theorem.
\begin{theorem}
Algorithm~\ref{alg:heuristicSearch} completes in polynomial time.
\end{theorem}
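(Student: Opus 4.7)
The plan is to bound separately the cost of a single iteration of the outer \textbf{repeat} loop and the number of iterations, and then multiply. Each individual iteration performs a constant number of shortest-path computations on $G$ or a subgraph (polynomial in $N$ and $M$), constructs edge sets by scanning $E_p$ and $\Elower$ (linear in $M$), and invokes \PATHATTACK{} a bounded number of times outside the inner \textbf{for} loop, plus one invocation per edge of $E_p\cap\Elower$ inside that loop. Since any simple path has at most $N-1$ edges, the inner loop executes $O(N)$ times, so each iteration of the outer loop runs \PATHATTACK{} at most $O(N)$ times and otherwise does polynomial bookkeeping. By Theorem~\ref{thm:PATHATTACKconvergence}, each \PATHATTACK{} invocation terminates in polynomial time with high probability, so the per-iteration cost is polynomial.

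Next I would bound the number of outer iterations. The key invariant is that the sets $\Ealways$ and $\Enever$ are monotonically growing, disjoint subsets of $E$. I would argue that in any iteration where the termination condition $\cbest\leq\blower+\epsilon$ does not hold, the algorithm strictly grows one of these two sets: the block guarded by $\blower<\bupper$ either detects an edge in $E_p\cap\Elower$ whose removal disconnects $s$ from $t$ and adds it to $\Enever$, or else picks a minimizer $e_{\mathrm{new}}$ and adds it to $\Ealways$. In the alternative case $\blower=\bupper$, the termination condition is met (since $\cbest\leq\bupper=\blower$, assuming $\epsilon\geq 0$), so the loop exits. Hence before termination, each iteration augments $\Ealways\cup\Enever$ by at least one edge, and since these sets are disjoint and contained in $E$, there can be at most $M$ iterations.

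Multiplying, the total number of \PATHATTACK{} calls across the algorithm is $O(MN)$, and all other work per iteration is polynomial in $N$ and $M$, so the overall running time is polynomial (with high probability, inheriting the qualifier from Theorem~\ref{thm:PATHATTACKconvergence}).

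The main obstacle is the middle step: verifying that the outer loop really does make progress in every non-terminating iteration. In particular one has to check that the inner branch selecting $e_{\mathrm{new}}$ always picks an edge not already in $\Ealways$ (which follows because $e_{\mathrm{new}}\in E_p\cap\Elower$ and edges of $\Ealways$ have been removed from $G^\prime$, so they cannot appear in $p$ or $\Elower$ for the current iteration), and symmetrically that the edge added to $\Enever$ is new. Once that monotone growth is established, the rest is a routine multiplication of polynomial per-iteration cost by the linear bound on iterations.
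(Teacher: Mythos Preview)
Your proposal is correct and follows essentially the same approach as the paper: bound each outer iteration by $O(N)$ calls to \PATHATTACK{} (since $|E_p|\leq N-1$), invoke Theorem~\ref{thm:PATHATTACKconvergence} for the per-call cost, argue that each non-terminating iteration strictly grows $\Ealways\cup\Enever\subseteq E$, and conclude there are at most $M$ iterations and $O(MN)$ \PATHATTACK{} calls in total. You supply more detail than the paper does---in particular, verifying that the newly added edge is genuinely new (because $E_p\subseteq E\setminus\Ealways$ and $\Elower\cap\Enever=\emptyset$) and handling the $\blower\geq\bupper$ branch explicitly---but the skeleton is the same.
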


\section{Experiments}
\label{sec:experiments}

This section presents baselines, datasets,  experimental setup, and results. 

\subsection{Baseline Methods}
We consider two simple greedy methods as baselines for assessing performance of \PATHATTACK{}. Each of these algorithms iteratively computes the shortest path $p$ between $s$ and $t$; if $p$ is not longer than $p^*$, it uses some criterion to cut an edge from $p$. When we cut the edge with minimum cost, we refer to the algorithm as \texttt{GreedyCost}. We also consider a version where we cut the edge in $p$ with the largest ratio of eigenscore\footnote{The eigenscore of an edge is the product of the entries in the principal eigenvector of the adjacency matrix corresponding to the edge's vertices.} to cost, since edges with high eigenscores are known to be important in network  flow~\cite{tongCIKM2012}. This version of the algorithm is called \texttt{GreedyEigenscore}. In both cases, edges from $p^*$ are not allowed to be cut. The baseline method to solve Force Path Cut (or Force Node Cut) is to identify the shortest path through $e^*$ (or $v^*$), use this path as $p^*$, and solve \PATHATTACK{}.

\subsection{Synthetic and Real Networks}
\label{sec:graphdata}
We use both synthetic and real networks in our  experiments. For the synthetic networks, we run seven different graph models to generate 100 synthetic networks of each model. We pick parameters to yield networks with similar numbers of edges ($\approx 160$K). We use 16,000-node Erd\H{o}s--R\'{e}nyi graphs, both undirected (ER) and directed (DER), with edge probability $0.00125$, 16,000-node Barab\'{a}si--Albert (BA) graphs with average degree 20, 16,000-node Watts--Strogatz (WS) graphs with average degree 20 and rewiring probability $0.02$, $2^{14}$-node stochastic Kronecker graphs (KRON), $285\times285$ lattices (LAT), and 565-node complete graphs (COMP).

We use seven weighted and unweighted networks. The unweighted networks are the Wikispeedia graph (WIKI)~\cite{West2009}, an Oregon autonomous system network (AS)~\cite{Leskovec2005}, and a Pennsylvania road network (PA-ROAD)~\cite{Leskovec2009}. The weighted networks are Central Chilean Power Grid (GRID)~\cite{Kim2018}, Lawrence Berkeley National Laboratory network data (LBL), the Northeast US Road Network (NEUS), and the DBLP coauthorship graph (DBLP)~\cite{Benson2018}.  All real networks are undirected except for WIKI and LBL. The networks range from 444 edges on 347 nodes to over 8.3M edges on over 1.8M nodes, with average degree ranging from 2.5 to  46.5 and triangle count ranging from 40 to nearly 27M. Further details on the real and synthetic networks---including URLs to the real data---are provided in Appendix~\ref{sec:datasets}.

For the synthetic networks and unweighted real networks, we consider three different edge-weight assignment schemes: Poisson, uniform random, or equal weights. For Poisson weights, each edge $e$ has an independently random weight $w_e=1+w_e^\prime$, where $w_e^\prime$ is drawn from a Poisson distribution with rate parameter 20. For uniform weights, each weight is drawn from a discrete uniform distribution of integers from 1 to 41. This yields the same average weight as Poisson weights.

\subsection{Experimental Setup}
\label{subsec:setup}
For each graph---considering graphs with different edge-weighting schemes as distinct---we run 100 experiments. In each experiment, we select $s$ and $t$ uniformly at random among all nodes, with the exception of LAT, PA-ROAD, and NEUS, where we select $s$ uniformly at random and select $t$ at random among nodes 50 hops away from $s$.\footnote{This alternative method of selecting the destination was used due to the computational expense of identifying successive shortest paths in large grid-like networks.} Given $s$ and $t$, we identify the shortest simple paths and use the 100th, 200th, 400th, and 800th shortest as $p^*$ in four experiments. For the large grid-like networks (LAT, PA-ROAD, and NEUS), this procedure is run using only the 60-hop neighborhood of $s$. We focus on the case where the edge removal cost is equal to the weight (distance). For Force Edge Cut and Force Node Cut, we consider consecutive shortest paths from $s$ to $t$ until we see 5 edges (or nodes) not on the initial shortest path. The 5th edge (or node) we see that was not on the original shortest path is used as $e^*$ (or $v^*$).

When running Algorithm~\ref{alg:e*constraintGen}, we let the nonconvex optimization (\ref{eq:ForceEdgeObj}) run no more than 10 minutes. If a feasible point is not found in that time, we assume the model is infeasible and increase the budget. We stop the procedure every 30 seconds to check the best candidate solution and see if it satisfies our criteria. In addition, we stop the optimization if the objective matches the length of our best incumbent solution. If we consider a budget for over 8 hours, we consider it infeasible and increase the lower bound. The entire procedure is terminated, and the lowest upper bound returned, if it is still running after 24 hours.

The experiments were run on Linux machines with 32 cores and 192 GB of memory. The LP in \texttt{PATHATTACK-Rand} was implemented using Gurobi 9.1.1, and shortest paths were computed using  \texttt{shortest\_simple\_paths} in NetworkX.\footnote{Gurobi is at \url{https://www.gurobi.com}. NetworkX is at \url{https://networkx.org}. Code from the experiments is at \url{https://github.com/bamille1/PATHATTACK}.} The combinatorial search method is set to use 4 threads.

\subsection{PATHATTACK Results}
\label{sec:results}

We treat the result of \texttt{GreedyCost} as our baseline cost and report the cost of other algorithms' solutions as a reduction from the baseline. With one exception\footnote{\texttt{GreedyEigenscore} only outperforms \texttt{GreedyCost} in COMP with uniform weights.}, \texttt{GreedyCost} outperforms \texttt{GreedyEigen\-score} in both running time and edge removal cost, so we omit the \texttt{GreedyEigenscore} results for clarity of presentation. In all experiments, we edge removal costs equal to the weights.  Fig.~\ref{fig:pathUnweighted} shows the results on both synthetic and real unweighted graphs, which have had synthetic weights added to the edges. Fig.~\ref{fig:pathWeighted} shows the results on real weighted networks. In these figures, the 800th shortest path is used as $p^*$; other results were similar and omitted for brevity.

\begin{figure*}
    \centering
    \includegraphics[width=\textwidth]{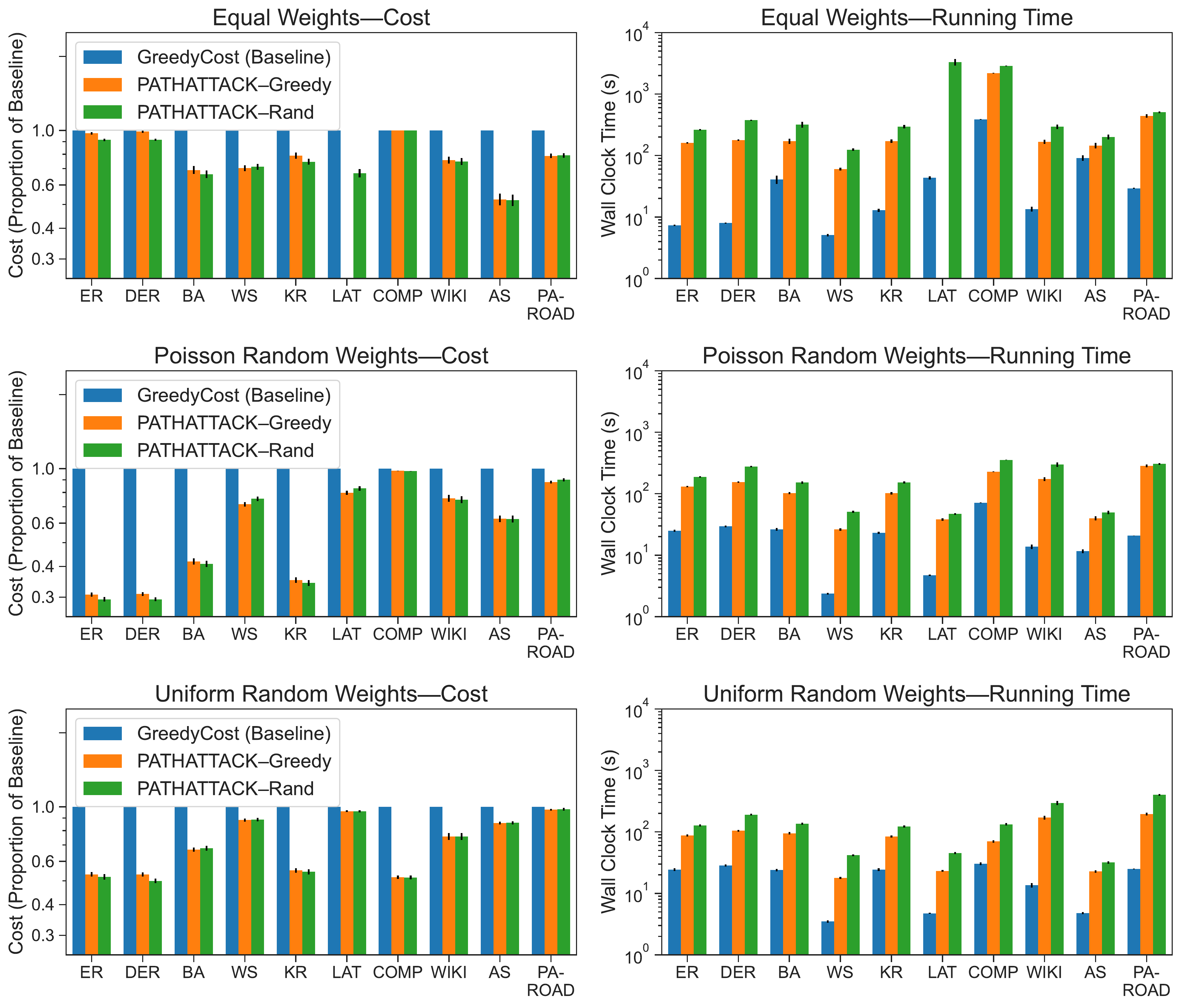}
    \caption{\PATHATTACK{} results on synthetic and real unweighted graphs with synthetic weights added, in terms of edge removal cost (left column) and running time (right column). Lower is better for both metrics. Cost is plotted as a proportion of the cost using the \texttt{GreedyCost} baseline. Bar heights are means across 100 trials and error bars are standard errors. Results are shown with equal weights on all edges (top row) and edge weights drawn from Poisson (middle row) and Uniform (bottom row) distributions. \texttt{PATHATTACK-Greedy} operated on lattices with equal weights for over one day without converging, so results were not collected for this case. \PATHATTACK{} yields a substantial reduction in cost in ER, BA, KR, WIKI, and AS graphs, while the baseline is often near-optimal for LAT and PA-ROAD.}
    \label{fig:pathUnweighted}
\end{figure*}

\begin{figure*}
    \centering
    \includegraphics[width=\textwidth]{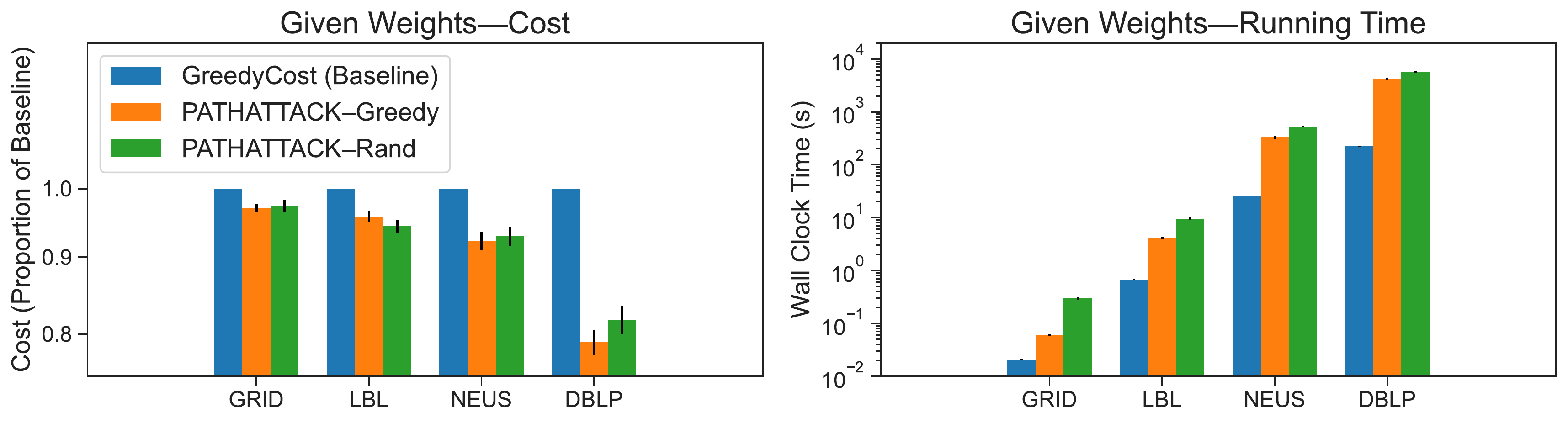}
    \caption{\PATHATTACK{} results on real weighted graphs in terms of edge removal cost (left) and running time (right). Lower is better for both metrics. Cost is plotted as a proportion of the cost using the \texttt{GreedyCost} baseline. Bar heights are means across 100 trials and error bars are standard errors. Note the difference in scale on the vertical axes from Fig.~\ref{fig:pathUnweighted}. \PATHATTACK{} yields a substantial improvement in performance for DBLP, while the baseline performs well on the other (less clustered) networks.}
    \label{fig:pathWeighted}
\end{figure*}

Comparing the cost achieved by \texttt{PATHATTACK} to those obtained by the greedy baseline, we observe some interesting phenomena. Lattices and road networks, for example, have a similar tradeoff: \texttt{PATHATTACK} provides a mild improvement in cost at the expense of an order of magnitude additional processing time. Considering that \texttt{PATHATTACK-Rand} usually results in the optimal solution (over 86\% of the time), this means that the baselines often achieve near-optimal cost with a na\"{i}ve algorithm. On the other hand, ER, BA, and KR graphs follow a trend more similar to the AS and WIKI networks, particularly in the randomly weighted cases: The cost is cut by a substantial fraction---enabling the attack with a smaller budget---for a similar or smaller time increase. This suggests that the time/cost tradeoff is much less favorable for less clustered, grid-like networks (note the clustering coefficients in Appendix~\ref{sec:datasets}). 

Cliques (COMP) are particularly interesting in this case, showing a phase transition as the entropy of the weights increases. When edge weights are equal, cliques behave like an extreme version of the road networks: an order of magnitude increase in run time with no decrease in cost. With Poisson weights, \texttt{PATHATTACK} yields a slight improvement in cost, whereas when uniform random weights are used, the clique behaves much more like an ER or BA graph. In the unweighted case, $p^*$ is a three-hop path, so all other two- and three-hop paths from $s$ to $t$ must be cut, which the baseline does efficiently. Adding Poisson weights creates some randomness, but most edges have a weight that is about average, so it is still similar to the unweighted scenario. With uniform random weights, we get the potential for much different behavior (e.g., short paths with many edges) for which the greedy baseline's performance suffers.

There is an opposite, but milder, phenomenon with PA-ROAD and LAT: using higher-entropy weights \emph{narrows} the cost difference between the baseline and \texttt{PATHATTACK}. This may be due to the source and destination being many hops away. With the terminal nodes many hops apart, many shortest paths between them could go through a few low-weight (thus low-cost) edges. A very low weight edge between two nodes would be very likely to occur on many of the shortest paths, and would be found in an early iteration of the greedy algorithm and removed, while considering more shortest paths at once would yield a similar result. 
We also note that, in the weighted graph data, LBL and GRID behave similarly to road networks. Among our real datasets, these have a low clustering coefficient 
(see Appendix~\ref{sec:datasets}). This lack of overlap in nodes' neighborhoods may lead to better relative performance with the baseline, since there may not be a great deal of overlap between candidate paths. 

\subsection{Results Targeting a Node}
In this section, we present results for the case where the adversary targets a node $v^*$. The baseline in these experiments is to run \PATHATTACK{} using the shortest path from $s$ to $t$ through $v^*$ as $p^*$. We call this method \PATHATTACK{}-$v^*$. We present results using this method along with results for the heuristic search method (Algorithm~\ref{alg:heuristicSearch}) and combinatorial optimization.

In most cases, all three methods yield a solution of the same cost. To clarify the performance differences, we separate these cases from those where the costs differ between methods. For the case where all methods result in the same cost, we show running time results on unweighted graphs in Fig.~\ref{fig:nodeUnweighted_same}. As when targeting paths, lattices and road networks are similar: they are the only graphs where the heuristic methods do not match the result of the combinatorial optimization a majority of the time, largely due to the equal-weight case. Watts--Strogtaz graphs, which have a lattice-like component, also frequently have different results across methods. Fig.~\ref{fig:nodeUnweighted_different} illustrates cases where not all algorithms yield the same cost. Again, lattices and roads are distinct: here they see a much more substantial improvement from the combinatorial optimization than heuristics, with Watts--Strogatz graphs also sometimes being similar. There are many cases where the heuristic search yields the same result as the \PATHATTACK{}-$v^*$ baseline. Upon inspection, many of these cases result from multiple shortest paths using $v^*$: The heuristic overlooks one solution because it does not cut the current $p^*$, which results in no edges to consider in the inner loop of Algorithm~\ref{alg:heuristicSearch}.
\begin{figure*}[t]
    \centering
    \includegraphics[width=0.6\textwidth]{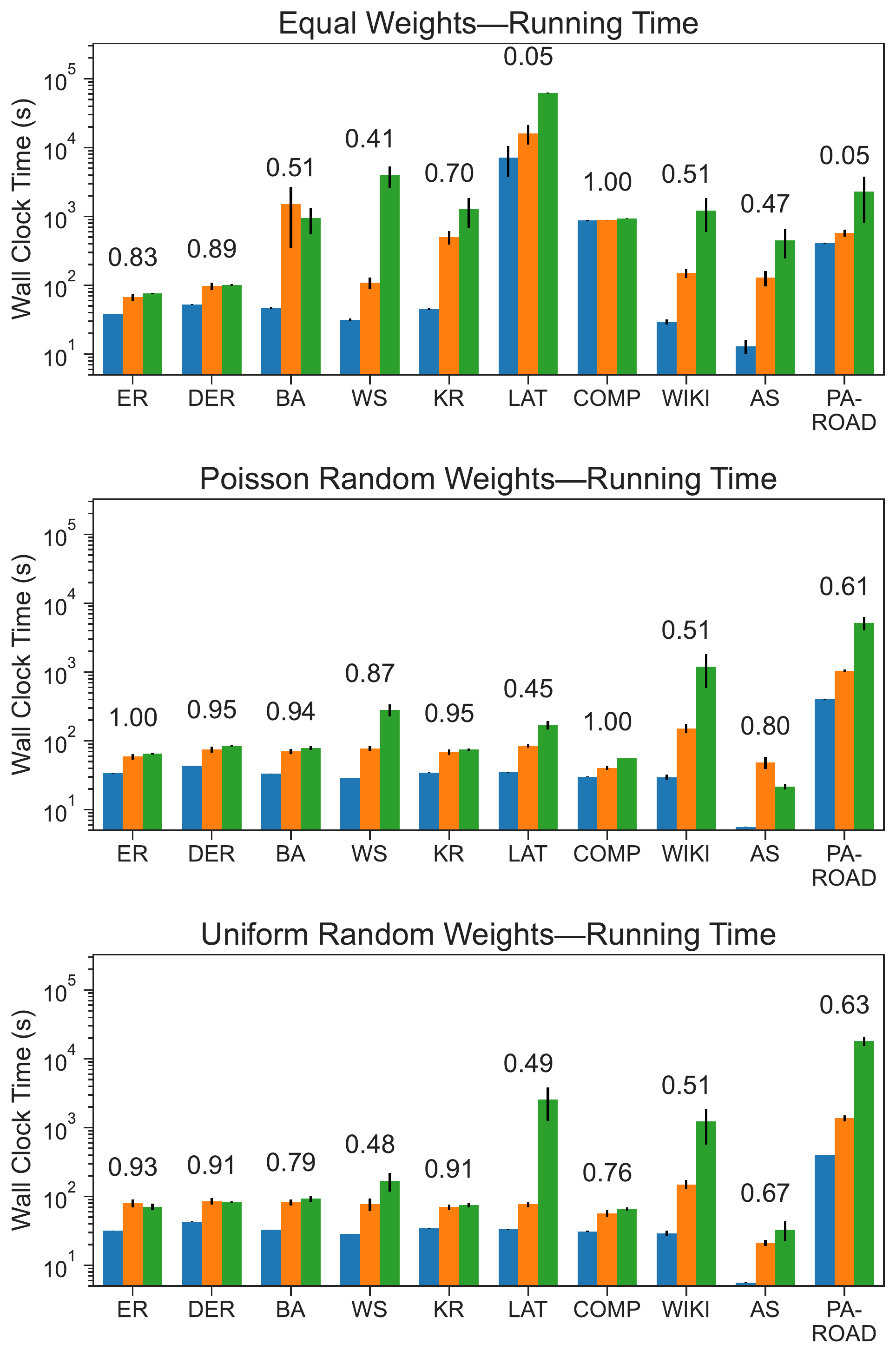}
    \caption{Running time results on unweighted networks when targeting a specific node in cases where all three algorithms yield the same cost. Lower time is better. Annotations are the proportion of 100 trials where all costs are the same. Bar heights are means across these trials and error bars are standard errors. Results are shown with equal weights on all edges (top) and edge weights drawn from Poisson (middle) and Uniform (bottom) distributions. With the exception of LAT and PA-ROAD, the baseline method matches the combinatorial optimization in a majority of cases and has a substantially smaller time requirement.}
    \label{fig:nodeUnweighted_same}
\end{figure*}
\begin{figure*}[t]
    \centering
    \includegraphics[width=\textwidth]{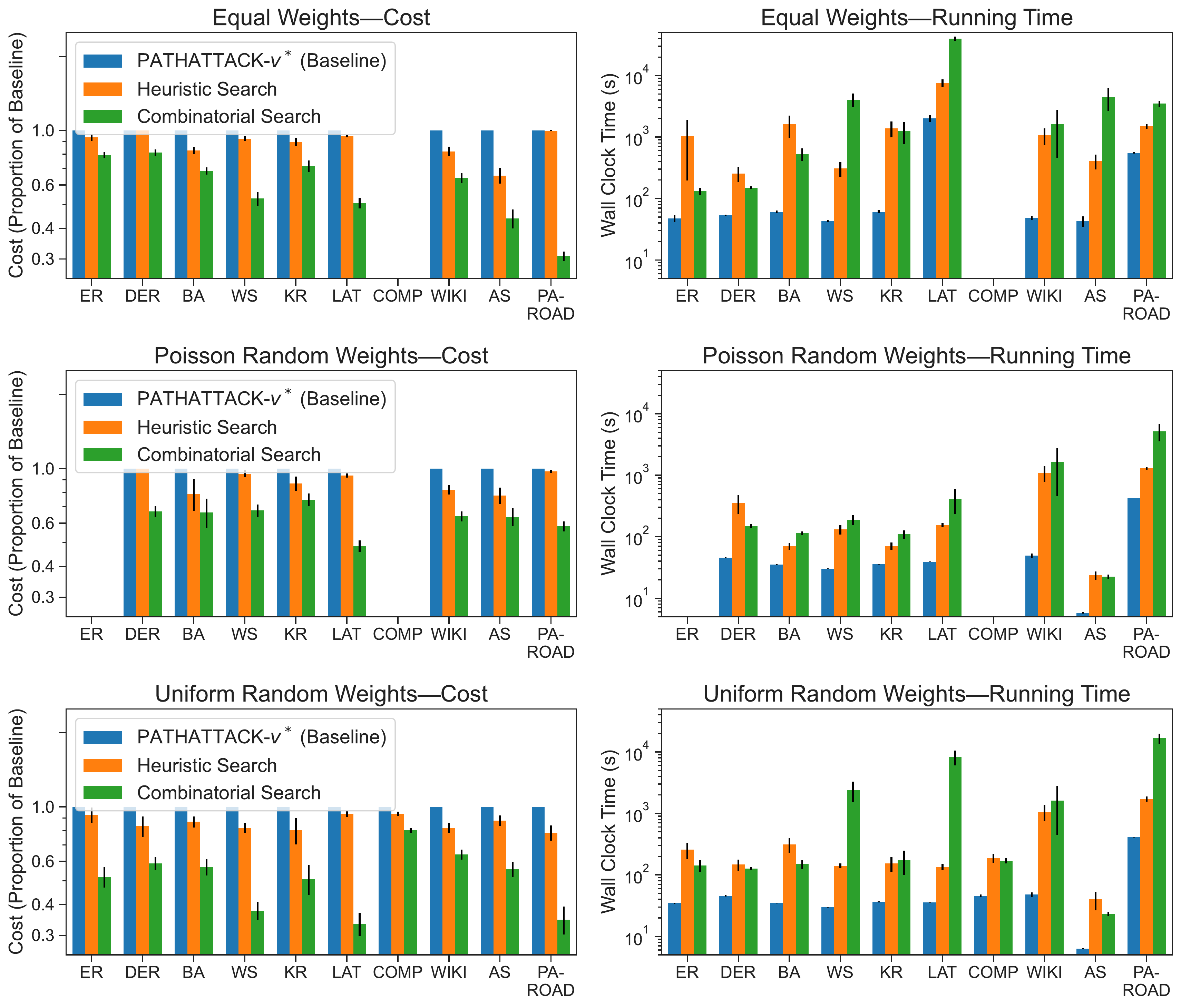}
    \caption{Results on unweighted networks when targeting a specific node in cases where not all node-targeting algorithms yield the same cost, in terms of edge removal cost (left column) and running time (right column). Lower is better for both metrics. Bar heights are means across these trials and error bars are standard errors. Results are shown with equal weights on all edges (top row) and edge weights drawn from Poisson (middle row) and Uniform (bottom row) distributions. Note that COMP with equal and Poisson weights and ER with Poisson weights are not included, as the methods matched in all trials. The greatest cost reductions from using combinatorial search often coincide with large running time increases.  }
    \label{fig:nodeUnweighted_different}
\end{figure*}

For weighted networks, running time in cases where all methods yield the same cost is plotted in Fig.~\ref{fig:nodeWeighted_same}. In all cases, the three algorithms usually find equal-cost solutions, though the combinatorial method frequently times out. The power grid network has a particularly large increase in running time when using the combinatorial method. In cases where not all methods yield the same cost, shown in Fig.~\ref{fig:nodeWeighted_different}, the heuristic search achieves the same cost as the combinatorial search more often than in the unweighted graphs. In fact, on DBLP, the combinatorial search typically times out and the heuristic search outperforms it.
\begin{figure*}[t]
    \centering
    \includegraphics[width=0.6\textwidth]{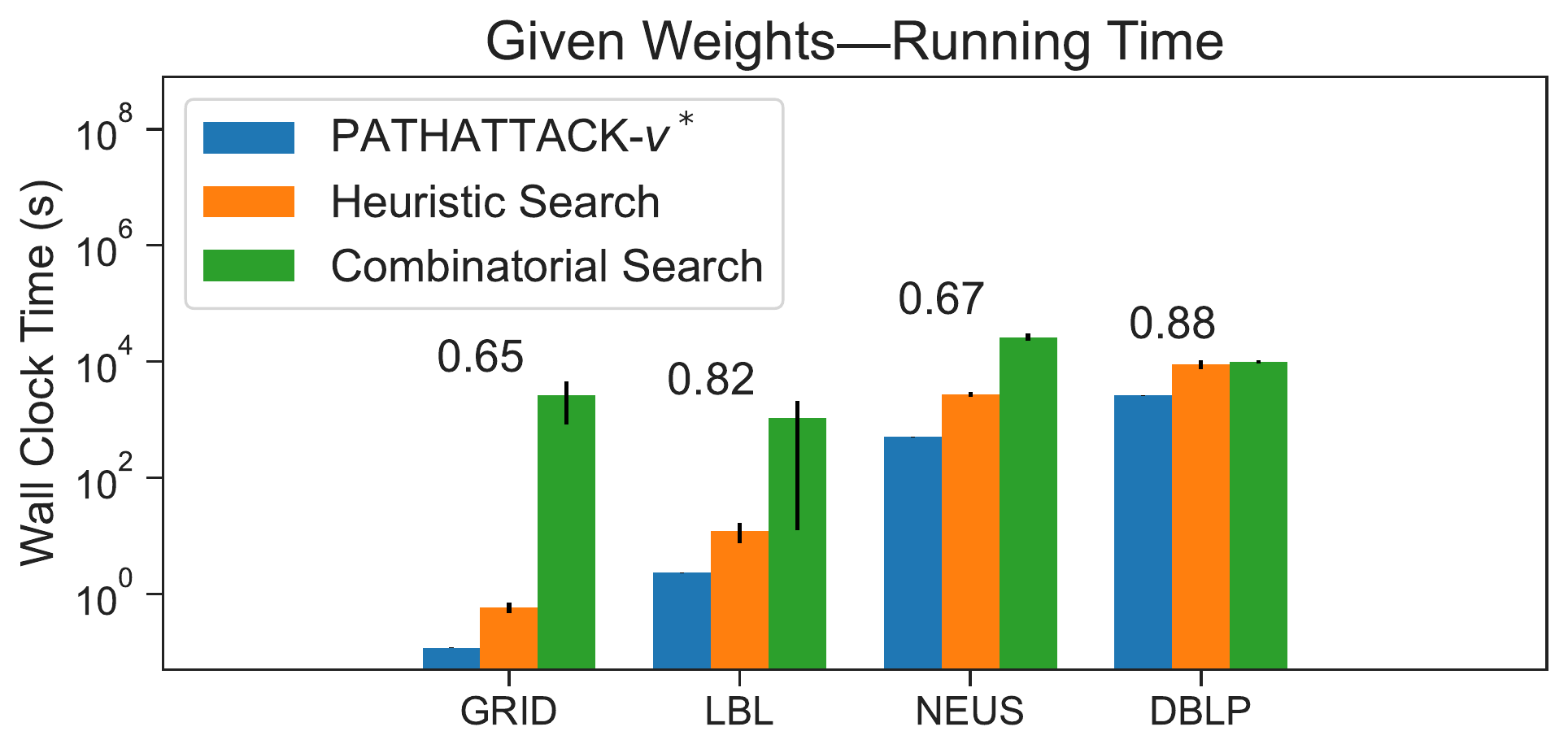}
    \caption{Running time results on weighted networks when targeting a specific node in cases where all three algorithms yield the same cost. Lower time is better. Annotations are the proportion of 100 trials where all costs are the same. Bar heights are means across these trials and error bars are standard errors. In contrast to the unweighted networks, the heuristic methods match the combinatorial optimization in a large majority of cases, even in grid-like networks.}
    \label{fig:nodeWeighted_same}
\end{figure*}
\begin{figure*}[t]
    \centering
    \includegraphics[width=\textwidth]{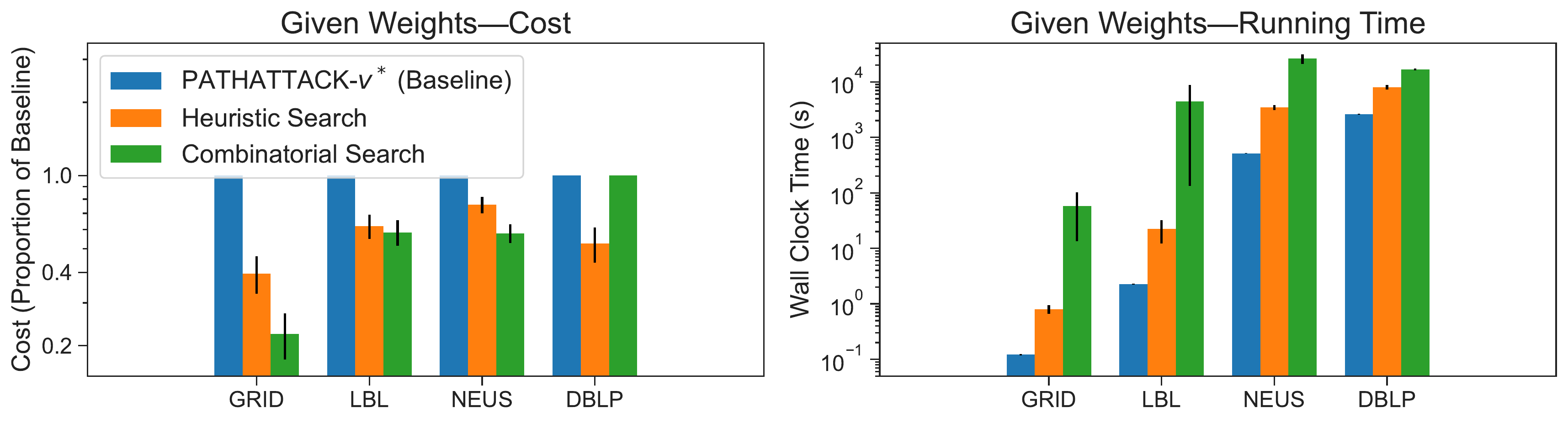}
    \caption{Results on weighted networks when targeting a specific node in cases where not all node-targeting algorithms yield the same cost, in terms of edge removal cost (left) and running time (right). Lower is better for both metrics. Bar heights are means across these trials and error bars are standard errors. Heuristic search in these cases is competitive with combinatorial search, even outperforming it sometimes due to timeouts.}
    \label{fig:nodeWeighted_different}
\end{figure*}

\section{conclusion}
\label{sec:conclusion}
In this paper, we introduce the Force Path Cut, Force Edge Cut, and Force Node Cut problems, in which edges are removed from a graph in order to make the shortest path from one node to another, respectively, be a specific path, use a specific edge, or use a specific node. We show that all three problems are hard to approximate, but that a logarithmic approximation exists for Force Path Cut via existing approximation algorithms for Set Cover. We leverage these methods to develop a new algorithm called \PATHATTACK{} and demonstrate its efficacy in solving Force Path Cut using a thorough set of experiments on real and synthetic networks. We also use \PATHATTACK{} as part of a heuristic search method to solve Force Edge Cut and Force Node Cut, which yields performance similar to a more much more computationally intensive combinatorial search.

There is a gap between the approximation factor of \PATHATTACK{} and the lower bound implied by APX-hardness; it remains an open problem whether Force Path Cut is APX-complete or if there is no constant-factor approximation. The best possible polynomial-time approximation factor for 3-Terminal Cut is $12/11$~\cite{Cunningham1999}, and our results imply that the best such approximation for Force Path Cut is between $12/11$ and  logarithmic. Future work will include defenses against \PATHATTACK{} to make networks more robust against adversarial manipulation.

\begin{acks}
BAM was supported by the United States Air Force under Contract No. FA8702-15-D-0001. TER was supported in part by the Combat Capabilities Development Command Army Research Laboratory (under Cooperative Agreement No.~W911NF-13-2-0045) and by the Under Secretary of Defense for Research and Engineering under Air Force Contract No.~FA8702-15-D-0001. YV was supported by grants from the Army Research Office (W911NF1810208, W911NF1910241) and National Science Foundation (CAREER Award IIS-1905558). Any opinions, findings, conclusions or recommendations expressed in this material are those of the authors and should not be interpreted as representing the official policies, either expressed or implied, of the funding agencies
 or the U.S.~Government. The U.S.~Government is authorized to reproduce and distribute reprints for Government purposes not withstanding any copyright notation here on.
\end{acks}

\bibliographystyle{ACM-Reference-Format}
\bibliography{bibfile}

\appendix

\section{Proof of Theorem~\ref{thm:pathHard}}
\label{sec:proof}
\subsection{Proof for Undirected Graphs}
\label{subsec:proofUndirected}
As noted in Section~\ref{subsec:sketch}, to prove Theorem~\ref{thm:pathHard}, we first prove that Force Path Cut is APX-hard for undirected graphs.
\begin{lemma}
Force Path Cut is APX-hard for undirected graphs, including the case where all weights and all costs are equal.
\label{lem:undirected}
\end{lemma}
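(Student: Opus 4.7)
The plan is to establish an L-reduction from 3-Terminal Cut with unit edge weights (known to be APX-hard~\cite{Dahlhaus1994}) to Force Path Cut with all weights and costs equal to 1. Since L-reductions preserve constant-factor approximability, this suffices for APX-hardness.

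Given an instance $(G, s_1, s_2, s_3)$ of 3-Terminal Cut, where $G$ has $N$ nodes and $M$ edges, I will construct $\hat{G}$ by augmenting $G$ with: (i) $M+1$ internally vertex-disjoint $N$-hop paths between $s_1$ and $s_2$; (ii) $M+1$ internally vertex-disjoint $N$-hop paths between $s_2$ and $s_3$; and (iii) one $(2N-1)$-hop path from $s_1$ to $s_3$, designated as $p^*$, with $s = s_1$ and $t = s_3$. All added edges carry unit weight and unit removal cost, matching the original graph's edges.

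For the forward direction, I will argue that any 3-Terminal Cut $E' \subseteq E$ is feasible for Force Path Cut on $\hat{G}$. The key step is to enumerate the simple $s_1$-to-$s_3$ paths in $\hat{G}\setminus E'$ of length at most $2N-1$ other than $p^*$. Because the new detours have new internal vertices only, any such path either (a) uses only original edges (length $\leq N-1$), or (b) combines an original inter-terminal segment with one or two $N$-hop new detours (total length at most $(N-1)+N = 2N-1$); a path using two new detours and no original segment traverses $s_1 \to s_2 \to s_3$ and has length exactly $2N > 2N-1$. Cases (a) and (b) both require the survival of an original inter-terminal segment, so disconnecting the terminals in $G$ eliminates them. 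This gives $\mathrm{OPT}_{\mathrm{FPC}}(\hat{G}) \leq \mathrm{OPT}_{\mathrm{3TC}}(G)$, i.e., the first L-reduction condition with $\alpha = 1$.

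The main obstacle is the solution-mapping step: from any feasible $\hat{E}'$, produce a feasible 3-Terminal Cut $g(\hat{E}')$ with $|g(\hat{E}')| \leq |\hat{E}'|$. I plan a case split on the size of $\hat{E}'$. If $|\hat{E}'| \leq M$, I claim $E' := \hat{E}' \cap E$ is already a valid 3-Terminal Cut. If not, some pair of terminals remains connected in $G \setminus E'$. If $s_1$ and $s_3$ are connected, the resulting original path of length $\leq N-1$ survives in $\hat{G}\setminus\hat{E}'$ and contradicts feasibility directly. If $s_1, s_2$ (say) are connected via an original path of length $\leq N-1$, then concatenation with any uncut $N$-hop new $s_2$-$s_3$ detour produces a length-$\leq 2N-1$ $s_1$-to-$s_3$ path distinct from $p^*$; feasibility therefore forces $\hat{E}'$ to cut all $M+1$ edge-disjoint new $s_2$-$s_3$ detours, so $|\hat{E}'| \geq M+1$, contradicting $|\hat{E}'| \leq M$. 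The case where $s_2$ and $s_3$ remain connected is symmetric. If instead $|\hat{E}'| > M$, define $g(\hat{E}') := E$, the trivial cut that isolates every terminal and has size $M < |\hat{E}'|$.

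With $|g(\hat{E}')| \leq |\hat{E}'|$ in hand, taking $\hat{E}'$ optimal yields $\mathrm{OPT}_{\mathrm{3TC}}(G) \leq |g(\hat{E}')| \leq \mathrm{OPT}_{\mathrm{FPC}}(\hat{G})$; combined with the forward bound, the two optima coincide. Then for any feasible $\hat{E}'$, $|g(\hat{E}')| - \mathrm{OPT}_{\mathrm{3TC}}(G) \leq |\hat{E}'| - \mathrm{OPT}_{\mathrm{3TC}}(G) = |\hat{E}'| - \mathrm{OPT}_{\mathrm{FPC}}(\hat{G})$, giving the second L-reduction condition with $\beta = 1$. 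Because the reduction introduces only unit-weight, unit-cost edges and preserves the original unit weights and costs, APX-hardness transfers directly to the special case in which all weights and all costs are equal.
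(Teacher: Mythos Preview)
Your proposal is correct and follows essentially the same approach as the paper: the identical construction (adding $M+1$ parallel $N$-hop paths between $s_1,s_2$ and between $s_2,s_3$, plus a single $(2N-1)$-hop $p^*$ from $s_1$ to $s_3$), the same solution-mapping $g$ via intersection with $E$ (with the trivial fallback $g(\hat{E}')=E$ when $\hat{E}'$ is too large), and the same $\alpha=\beta=1$ linear (L-)reduction analysis. The only minor difference is bookkeeping: the paper sharpens your $\leq N-1$ bound on the surviving original $s_1$--$s_2$ segment to $\leq N-2$ (since a simple continuation to $s_3$ forces the segment to avoid $s_3$), but your argument already goes through because a competing path of length $\leq 2N-1$ distinct from $p^*$ suffices to contradict feasibility.
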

To prove Lemma~\ref{lem:undirected}, we reduce 3-Terminal Cut to Force Path Cut via a linear reduction. Let $G=(V,E)$ be an undirected graph, where all weights are equal. We also have three terminal nodes $s_1$, $s_2$, and $s_3\in V$. Since we are proving the problem is APX-hard, we consider the optimization version of 3-Terminal Cut and Force Path Cut, where the goal is to minimize the budget. Thus, the goal of 3-Terminal Cut is to find the smallest set $E^\prime$ such that $s_1$, $s_2$, and $s_3$ are disconnected in $G^\prime=(V, E\setminus E^\prime)$. Dahlhaus et al. show in~\cite{Dahlhaus1994} that 3-Terminal Cut is APX-hard, even when all weights are equal.\footnote{More specifically, it is proved in \cite{Dahlhaus1994} that the problem is MAX SNP-hard, but this implies APX-hardness: if a problem is MAX SNP-hard, it has no polynomial time approximation scheme unless P=NP.}

We propose a linear reduction from 3-Terminal Cut to Force Path Cut. As discussed in~\cite{Dahlhaus1994}, a linear reduction from problem $A$ to problem $B$ consists of two functions $f$ and $g$, where $f$ maps an instance of $A$ to an instance of $B$, and $g$ maps a solution of $B$ to a solution of $A$. To be a linear reduction, the following conditions must hold:
\begin{enumerate}
    \item The functions $f$ and $g$ can be computed in polynomial time.\label{item:polynomial}
    \item Given $\instA$, an instance of problem $A$, the optimal solution to $\instB=f(\instA)$ must be at most  $\alpha$ times the optimal solution for $\instA$, for a constant $\alpha>0$, i.e., $\opt(\instB)\leq\alpha\cdot\opt(\instA)$.\label{item:scaleOpt}
    \item Given a solution $y$ to $\instB=f(\instA)$, $x=g(y)$ is a solution to $\instA$ such that $$|\cost(x)-\opt(\instA)|\leq \beta|\cost(y)-\opt(\instB)|$$ for a constant $\beta>0$.\label{item:scaleSol}
\end{enumerate}

We start by defining $f$, the function from an instance of 3-Terminal Cut to an instance of Force Path Cut. We are given an instance of 3-Terminal Cut as described above. Let $N=|V|$ and $M=|E|$. As shown in Figure~\ref{fig:reduction}, we add $M+1$ new paths of length $N$ from $s_1$ to $s_2$, and the same from $s_2$ to $s_3$. We add a single path of length $2N-1$ from $s_1$ to $s_3$. Algorithm~\ref{alg:problemMap} provides pseudocode for this procedure. 
\begin{algorithm}
\KwIn{Graph $G=(V, E)$, terminals $s_1,s_2,s_3$}
\KwOut{Graph $\hat{G}$, target path $p^*$}
$\hat{V}\gets V$; $N\gets|V|$\;
$\hat{E}\gets E$; $M\gets|E|$\;
\tcp{Create paths from $s_1$ to $s_2$}
\For{$i\gets 1$ to $M+1$}{
    $v_\mathrm{prev}\gets s_1$\;
    \For{$j\gets 1$ to $N-1$}{
        $v_{ij,1}\gets$ new node\;
        $\hat{V}\gets \hat{V}\cup\{v_{ij,1}\}$\;
        $\hat{E}\gets \hat{E}\cup\{\{v_\mathrm{prev}, v_{ij,1}\}\}$\;
        $v_\mathrm{prev}\gets v_{ij,1}$\;
    }
    $\hat{E}\gets \hat{E}\cup\{\{v_\mathrm{prev}, s_2\}\}$\;
}
\tcp{Create paths from $s_2$ to $s_3$}
\For{$i\gets 1$ to $M+1$}{
    $v_\mathrm{prev}\gets s_2$\;
    \For{$j\gets 1$ to $N-1$}{
        $v_{ij,2}\gets$ new node\;
        $\hat{V}\gets \hat{V}\cup\{v_{ij,2}\}$\;
        $\hat{E}\gets \hat{E}\cup\{\{v_\mathrm{prev}, v_{ij,2}\}\}$\;
        $v_\mathrm{prev}\gets v_{ij,2}$\;
    }
    $\hat{E}\gets \hat{E}\cup\{\{v_\mathrm{prev}, s_3\}\}$\;
}
\tcp{Create $p^*$ (a path from $s_1$ to $s_3$)}
$v_\mathrm{prev}\gets s_1$\;
$p^*\gets$ empty path\;
\For{$j\gets 1$ to $2N-2$}{
    $v_{ij,3}\gets$ new node\;
    $\hat{V}\gets \hat{V}\cup\{v_{ij,3}\}$\;
    $\hat{E}\gets \hat{E}\cup\{\{v_\mathrm{prev}, v_{ij,3}\}\}$\;
    append $\{v_\mathrm{prev}, v_{ij,3}\}$ to the end of $p^*$\;
    $v_\mathrm{prev}\gets v_{ij,3}$\;
}
$\hat{E}\gets \hat{E}\cup\{\{v_\mathrm{prev}, s_3\}\}$\;
append $\{v_\mathrm{prev}, s_3\}$ to the end of $p^*$\;

\Return $\hat{G}=(\hat{V}, \hat{E})$, $s_1$, $s_3$, $p^*$\;
\caption{Mapping from 3-Terminal Cut to Force Path Cut.}
\label{alg:problemMap}
\end{algorithm}

Applying Algorithm~\ref{alg:problemMap} to an instance of 3-Terminal Cut $(G, s_1, s_2, s_3)$, we get $(G^\prime, s_1, s_3, p^*)$, an instance of Force Path Cut. Note that, in these instances, all edge weights and removal costs are equal to 1. From the instance of Force Path Cut, we get a solution $E^\prime$. We also have a function that maps a solution to Force Path Cut to a solution to 3-Terminal Cut: include all edges in $E^\prime$ that existed in the original graph, i.e.,
\begin{equation}
    g(E^\prime)=\begin{cases}
    E^\prime\cap E&\textrm{if }|E^\prime| < M\\
    E&\textrm{otherwise}
    \end{cases}
\end{equation}
where $E^\prime$ is the solution to Force Path Cut and $E$ is the original edge set from the 3-Terminal Cut instance. (The edge set $E$ is not a parameter of $g$, as it is fixed within the context of the problem.)

We can see that both functions satisfy condition (\ref{item:polynomial}): The function $g$ simply removes up to $M$ edges from a set, and the body of each loop takes constant time in Algorithm~\ref{alg:problemMap}, and there are two nested loops taking $O(MN)$ time and a final loop taking $O(N)$ time. To show that this reduction satisfies condition (\ref{item:scaleOpt}), we first prove the following Lemma.
\begin{lemma}
Let $\instA$ be an instance of 3-Terminal Cut and $E^\prime$ be a solution to $\instA$. Then $E^\prime$ is also a solution to $f(\instA)$.
\label{lem:3TCSolvesFPC}
\end{lemma}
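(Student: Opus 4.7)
The plan is to verify, for an arbitrary 3-Terminal Cut solution $E^\prime$, that $f(\instA) = (\hat{G}, s_1, s_3, p^*)$ becomes a successful Force Path Cut instance after removing $E^\prime$ from $\hat{G}$. First I would observe that $E^\prime \subseteq E$ while the construction $f$ adds entirely fresh vertices and edges for the $2(M+1)$ parallel terminal-to-terminal paths and for $p^*$. Consequently none of these new edges are deleted: $p^*$ survives in $\hat{G} \setminus E^\prime$ with its original length $2N-1$, and each added parallel path remains intact as well.

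The core step is to lower-bound the length of any competing simple $s_1$-to-$s_3$ path $q \neq p^*$ in $\hat{G} \setminus E^\prime$. The key structural fact I would isolate is that every internal vertex introduced by $f$, whether on a parallel path or on $p^*$, has degree exactly $2$ in $\hat{G}$, with both neighbors sitting on the same added path. Therefore $q$ can only enter an added path through one of its terminal endpoints and must exit through the other; in particular, if $q$ uses any internal edge of $p^*$ it must traverse $p^*$ in full, contradicting $q \neq p^*$. So $q$ decomposes into an alternating sequence of (a) subpaths confined to the original edges $E \setminus E^\prime$ connecting a pair of terminals, and (b) complete parallel paths between consecutive terminals.

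Because $E^\prime$ is a 3-Terminal Cut solution, no terminal is reachable from another in $(V, E \setminus E^\prime)$, which rules out option (a) entirely. Consequently $q$ must traverse at least one full $s_1$-$s_2$ parallel path and one full $s_2$-$s_3$ parallel path (no direct $s_1$-$s_3$ parallel path exists aside from $p^*$), contributing a minimum of $N + N = 2N > 2N-1 = |p^*|$. This gives $p^*$ as the strictly shortest $s_1$-$s_3$ path in $\hat{G} \setminus E^\prime$, so $E^\prime$ is a valid solution to Force Path Cut on $f(\instA)$.

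I anticipate the main subtlety will be in the second paragraph, specifically in justifying that internal nodes of added paths cannot serve as ``shortcuts'' for $q$. Since $\hat{G}$ is undirected and $q$ must be simple, one must rule out excursions that leave a new path mid-way by re-entering through an already-visited terminal. The degree-$2$ argument handles this cleanly, but it relies on the fact that $f$ does not accidentally identify new internal vertices with existing vertices of $G$ or of other added paths; this is guaranteed because Algorithm~\ref{alg:problemMap} allocates each $v_{ij,\cdot}$ as a fresh node, a detail I would call out explicitly before invoking the degree bound.
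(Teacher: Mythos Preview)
Your proposal is correct and follows essentially the same approach as the paper. Both arguments observe that the new edges attach to the original graph only at the terminals, so any competing $s_1$--$s_3$ path avoiding $p^*$ must route through $s_2$ via two full length-$N$ parallel paths, giving length $2N > 2N-1$; your degree-$2$ argument simply makes explicit the structural justification that the paper states more informally.
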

\begin{proof}
Since $E^\prime$ is a solution to $\instA$, the graph $G^\prime=(V, E\setminus E^\prime)$ has at least three connected components, where one contains $s_1$, one contains $s_2$, and one contains $s_3$. The edges $\Enew$ added by Algorithm~\ref{alg:problemMap} create paths between the connected components, but do not connect to any vertices in the original graph other than $s_1$, $s_2$, and $s_3$. Thus, there are two modes of traversing from $s_1$ to $s_3$ via a simple path: (1) traverse the new path from $s_1$ to $s_3$ denoted as $p^*$ by Algorithm~\ref{alg:problemMap}, or (2) move from $s_1$ to $s_2$ via edges from $\Enew$, then from $s_2$ to $s_3$ via edges from $\Enew$. By construction, the path directly from $s_1$ to $s_3$ (the path added in the final loop of Algorithm~\ref{alg:problemMap}) passes through $2N-2$ intermediate nodes, having a length of $2N-1$. Taking the indirect route first requires taking one of the $M+1$ paths from $s_1$ to $s_2$, which has length $N$, then taking one of the $M+1$ paths from $s_2$ to $s_3$, which also has length $N$. Thus, the total length of any path via $s_2$ is $2N$, which is longer than $p^*$. Thus, $p^*$ is the shortest path from $s_1$ to $s_3$ in $\hat{G}^\prime=(\hat{V}, \hat{E}\setminus E^\prime)$, so $E^\prime$ is a solution to $f(a)$.
\end{proof}
An immediate consequence of Lemma~\ref{lem:3TCSolvesFPC} is that condition (\ref{item:scaleOpt}) is satisfied, as stated formally below.
\begin{corollary}
If $\instA$ is an instance of 3-Terminal Cut, then $\opt(f(\instA))\leq\opt(\instA)$, satisfying condition (\ref{item:scaleOpt}) with $\alpha=1$. 
\label{cor:optimal}
\end{corollary}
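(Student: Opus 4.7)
The plan is to derive the corollary as an essentially immediate consequence of Lemma~\ref{lem:3TCSolvesFPC}. Let $\instA = (G, s_1, s_2, s_3)$ be an instance of 3-Terminal Cut, and let $E^\prime_{\mathrm{OPT}} \subseteq E$ be an optimal solution, so that $\opt(\instA) = |E^\prime_{\mathrm{OPT}}|$ (recall that all removal costs equal $1$). I would then invoke Lemma~\ref{lem:3TCSolvesFPC} on $E^\prime_{\mathrm{OPT}}$ to conclude that $E^\prime_{\mathrm{OPT}}$ is a feasible solution to $f(\instA)$; that is, removing exactly these edges from $\hat{G}$ leaves $p^*$ as the shortest path from $s_1$ to $s_3$.

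Next, I would observe that the cost of $E^\prime_{\mathrm{OPT}}$ in the Force Path Cut instance $f(\instA)$ coincides with its cost in $\instA$, since every edge of the original graph $G$ carries unit removal cost in both instances and $E^\prime_{\mathrm{OPT}} \subseteq E \subseteq \hat{E}$. Because $\opt(f(\instA))$ is the minimum cost over all feasible Force Path Cut solutions, it is upper bounded by the cost of any particular feasible solution. Hence $\opt(f(\instA)) \leq |E^\prime_{\mathrm{OPT}}| = \opt(\instA)$, which is exactly the required inequality with $\alpha = 1$ in condition~(\ref{item:scaleOpt}).

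The substantive work has already been done in Lemma~\ref{lem:3TCSolvesFPC}, where the careful choice of $M+1$ parallel $N$-hop paths between each consecutive pair of terminals and a single $(2N-1)$-hop path for $p^*$ was used to guarantee that the only way a path from $s_1$ to $s_3$ in $\hat{G}^\prime$ can be shorter than $p^*$ is through the original edges in $E$. Given that lemma, the present corollary is a short verification rather than a proof with real content. The only subtlety worth explicitly noting is that $E^\prime_{\mathrm{OPT}} \subseteq E \subseteq \hat{E}$, so it is well-defined as a candidate solution for the Force Path Cut instance built on the augmented graph $\hat{G}$; this is immediate from the fact that $f$ only adds new vertices and edges without altering the existing ones.
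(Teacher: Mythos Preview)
Your proposal is correct and follows essentially the same argument as the paper: take an optimal solution $E^\prime$ to $\instA$, invoke Lemma~\ref{lem:3TCSolvesFPC} to see that it is feasible for $f(\instA)$, and conclude that $\opt(f(\instA))\leq |E^\prime|=\opt(\instA)$. Your additional remarks about unit costs and the inclusion $E^\prime_{\mathrm{OPT}}\subseteq E\subseteq\hat{E}$ are accurate elaborations, but the proof content is identical.
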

\begin{proof}
Let $E^\prime$ be the optimal solution to $\instA$, i.e., $|E^\prime|=\opt(\instA)$. By Lemma~\ref{lem:3TCSolvesFPC}, $E^\prime$ also solves $f(\instA)$. Thus, the optimal solution to $f(\instA)$ can be no larger than $|E^\prime|$, and therefore $\opt(f(\instA))\leq\opt(\instA)$.
\end{proof}
While the above corollary is sufficient to satisfy condition (\ref{item:scaleOpt}), we can make a stronger statement that is useful to prove condition (\ref{item:scaleSol}): the optimal solutions of the two problems are the same.
\begin{lemma}
For an instance of 3-Terminal Cut $\instA$, $\opt(\instA)=\opt(f(\instA))$. In particular, if $E^\prime$ is an optimal solution for $\instA$, then it is also an optimal solution for $f(\instA)$.
\label{lem:sameOpt}
\end{lemma}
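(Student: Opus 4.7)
The plan is to establish the reverse inequality $\opt(\instA) \leq \opt(f(\instA))$ complementing Corollary~\ref{cor:optimal}, from which both assertions of the lemma follow: the optima coincide, and any optimum $E^\prime$ for $\instA$ is already feasible for $f(\instA)$ by Lemma~\ref{lem:3TCSolvesFPC} with matching cost, hence optimal there as well. Concretely, I will take an optimal $\hat{E}^\prime$ for $f(\instA)$ and argue that $g(\hat{E}^\prime)$ is a valid 3-Terminal Cut solution with $|g(\hat{E}^\prime)| \leq |\hat{E}^\prime|$.

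The trivial case is $|\hat{E}^\prime| \geq M$: then $g(\hat{E}^\prime) = E$, which disconnects every pair of vertices in $G$ and has size $M \leq |\hat{E}^\prime|$. The substantive case is $|\hat{E}^\prime| < M$, so $g(\hat{E}^\prime) = \hat{E}^\prime \cap E$ and the size bound $|g(\hat{E}^\prime)| \leq |\hat{E}^\prime|$ is automatic. Here I would first observe that among the $M+1$ internally disjoint parallel $s_1$-$s_2$ paths added by Algorithm~\ref{alg:problemMap}, strictly fewer than $M+1$ can be hit by $\hat{E}^\prime$, so at least one full length-$N$ parallel $s_1$-$s_2$ path survives in $\hat{G} \setminus \hat{E}^\prime$; the analogous statement holds for the $s_2$-$s_3$ parallel paths.

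Suppose, for contradiction, that $\hat{E}^\prime \cap E$ fails to pairwise separate the three terminals in $G$. The easy sub-case is that $s_1, s_3$ remain connected in $G$: any such path has length at most $N-1$, persists in $\hat{G} \setminus \hat{E}^\prime$ using only original edges, and is strictly shorter than $p^*$ (length $2N-1$), contradicting that $\hat{E}^\prime$ solves Force Path Cut. The other two sub-cases are symmetric, so I focus on $s_1, s_2$ remaining connected by a shortest original-graph path $P_{12}$ of length $d_{12} \leq N-1$. Concatenating $P_{12}$ with a surviving parallel $s_2$-$s_3$ path produces an $s_1$-$s_3$ walk in $\hat{G} \setminus \hat{E}^\prime$ of length $d_{12}+N$; simplicity is automatic because the parallel gadget is vertex-disjoint from $G$ except at $s_2$.

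I expect the main obstacle to be the tie-breaking case $d_{12} = N-1$, where the naive combined walk merely matches the length of $p^*$ rather than beating it. I will resolve it by noting that a simple path of length $N-1$ in an $N$-vertex graph is Hamiltonian, so $P_{12}$ must traverse $s_3$ at some strictly interior position. The corresponding prefix of $P_{12}$ is then a simple $s_1$-$s_3$ path in $G$ of length at most $N-2$, and, using only original edges, it yields an $s_1$-$s_3$ path in $\hat{G} \setminus \hat{E}^\prime$ strictly shorter than $p^*$, the desired contradiction. Once pairwise disconnection of $s_1, s_2, s_3$ in $G$ is established, $g(\hat{E}^\prime)$ is a valid 3-Terminal Cut solution of the claimed size, and combining with Corollary~\ref{cor:optimal} completes the proof.
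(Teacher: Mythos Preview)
Your argument is correct and follows essentially the same route as the paper's proof: take an optimal $\hat{E}^\prime$ for $f(\instA)$, show that $\hat{E}^\prime\cap E$ must already separate all three terminals in $G$ (using the survival of at least one of the $M+1$ parallel gadget paths), and combine with Corollary~\ref{cor:optimal}.

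One small slip worth tightening: the parallel $s_2$--$s_3$ gadget is vertex-disjoint from $G$ except at \emph{both} endpoints $s_2$ and $s_3$, not just $s_2$. So simplicity of the concatenation $P_{12}$ followed by the gadget path is only automatic once you know $P_{12}$ avoids $s_3$. That, however, is free once you have disposed of the $s_1$--$s_3$ sub-case: if $P_{12}$ passed through $s_3$, its prefix would already witness $s_1$--$s_3$ connectivity in $G\setminus(\hat{E}^\prime\cap E)$, which you have ruled out. With $s_3\notin P_{12}$, the path $P_{12}$ uses at most $N-1$ vertices and hence has length $d_{12}\le N-2$, so the concatenation has length at most $2N-2<2N-1$. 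This is exactly how the paper argues, and it makes your separate Hamiltonian analysis of the tie case $d_{12}=N-1$ unnecessary: that case simply cannot occur once $s_3$ is excluded from $P_{12}$.
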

\begin{proof}
Let $G=(V,E)$ be the graph in $\instA$, and $\hat{G}=(\hat{V}, \hat{E})$ be the graph in problem $f(\instA)$. Let $\hat{E}^\prime$ be an optimal solution to $f(\instA)$. Partition $\hat{E}^\prime$ into the edges that occur in the original graph $E_1=\hat{E}^\prime\cap E$, and those that do not, $E_2=\hat{E}^\prime\setminus E$. By Lemma~\ref{lem:3TCSolvesFPC}, if $E_1$ is a solution to $\instA$, it is also a solution to $f(\instA)$. Therefore, if $E_1$ is a solution to $\instA$, $E_2=\emptyset$ (otherwise we contradict the assumption that $\hat{E}^\prime$ is an optimal solution to $f(\instA)$. Thus, we focus on the case where $E_1$ is not a solution to $\instA$. In this case, within the graph $G_1=(V, E\setminus E_1)$, not all terminals $s_1$, $s_2$, and $s_3$ are disconnected. If there is a path from $s_1$ to $s_3$, the length of this path is at most $N-1$, which is shorter than $p^*$. This contradicts the assumption that $\hat{E}^\prime$ is a solution to $f(\instA)$. There are two other possibilities: $s_1$ and $s_2$ are connected in $G_1$, or $s_2$ and $s_3$ are connected. If $s_1$ and $s_2$ are connected, there is a path between the terminals of length at most $N-2$ (excluding $s_3$). Algorithm~\ref{alg:problemMap} inserts $M+1$ independent parallel paths from $s_2$ to $s_3$. If any of these paths remains, there is a path in $\hat{G}^\prime=(\hat{V}, \hat{E}\setminus\hat{E}^\prime)$ from $s_2$ to $s_3$ of length $N$, which would create a path from $s_1$ to $s_3$ of length at most $2N-2$, which is shorter than $p^*$. Thus, $\hat{E}^\prime$ would have to include at least one edge from all $M+1$ parallel paths from $s_2$ to $s_3$ inserted by Algorithm~\ref{alg:problemMap}. This means that $|E_2|\geq M+1$, implying that $|\hat{E}^\prime|\geq M+1$. This contradicts the assumption that $\hat{E}^\prime$ is an optimal solution: any solution to $\instA$ is a solution to $f(\instA)$ and its cost is at most $M$. The analogous argument holds if $s_2$ and $s_3$ are connected in $G_1$. Thus, a solution to $f(\instA)$ cannot be optimal if it does not include a solution to $\instA$, implying that $\opt(f(\instA))\geq\opt(\instA)$. This in conjunction with Corollary~\ref{cor:optimal} proves that the optima of $\instA$ and $f(\instA)$ are the same.
\end{proof}

To show that the reduction also satisfies condition (\ref{item:scaleSol}), we first prove the following lemma.
\begin{lemma}
Let $\instA$ be an instance of 3-Terminal Cut where all weights are equal, with $G=(V, E)$. Let $\instB=f(\instA)$ be the instance of Force Path Cut obtained by applying Algorithm~\ref{alg:problemMap} to $\instA$, with $\hat{G}=(\hat{V}, \hat{E})$. Further, let $\hat{E}^\prime$ be a solution to $\instB$. Then $g(\hat{E}^\prime)$ is a solution to $\instA$.
\label{lem:gSolves3TC}
\end{lemma}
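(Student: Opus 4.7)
The plan is to argue by cases on the size of $\hat{E}^\prime$ and then, in the nontrivial case, derive a contradiction by constructing an $s_1$--$s_3$ path strictly shorter than $p^*$ in $\hat{G}\setminus \hat{E}^\prime$ whenever $g(\hat{E}^\prime)$ fails to disconnect the three terminals in the original graph.

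First I would dispatch the easy case $|\hat{E}^\prime|\geq M$: then by definition $g(\hat{E}^\prime)=E$, which removes every edge of $G$ and therefore trivially disconnects $s_1$, $s_2$, and $s_3$. So assume $|\hat{E}^\prime|<M$, which gives $g(\hat{E}^\prime)=\hat{E}^\prime\cap E$. The key structural observation is that Algorithm~\ref{alg:problemMap} inserts $M+1$ internally disjoint $s_1$--$s_2$ parallel paths (each of length $N$) and similarly $M+1$ internally disjoint $s_2$--$s_3$ parallel paths. Since $|\hat{E}^\prime|\leq M-1<M+1$, the set $\hat{E}^\prime$ cannot hit all of the $s_1$--$s_2$ parallel paths nor all of the $s_2$--$s_3$ parallel paths. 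Hence in $\hat{G}\setminus\hat{E}^\prime$ there is an intact $s_1$--$s_2$ parallel path of length $N$ and an intact $s_2$--$s_3$ parallel path of length $N$.

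Next I would suppose for contradiction that $g(\hat{E}^\prime)=\hat{E}^\prime\cap E$ fails to solve $\instA$, i.e., in $G'=(V,E\setminus(\hat{E}^\prime\cap E))$ some pair of terminals is still connected. I split into subcases. If $s_1$ and $s_3$ are connected in $G'$, they are joined there by a simple path of length at most $N-1$; this path sits inside $\hat{G}\setminus\hat{E}^\prime$ and has length $N-1<2N-1=|p^*|$, contradicting the assumption that $\hat{E}^\prime$ solves Force Path Cut. If instead $s_1$ and $s_2$ are connected in $G'$ but $s_1$ and $s_3$ are not, take a shortest $s_1$--$s_2$ path $P$ in $G'$; it must avoid $s_3$, because otherwise a prefix of $P$ would connect $s_1$ to $s_3$ in $G'$. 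Since $P$ lies in the induced subgraph on $V\setminus\{s_3\}$ (which has $N-1$ vertices), $|P|\leq N-2$. Concatenating $P$ with an intact $s_2$--$s_3$ parallel path of length $N$ produces an $s_1$--$s_3$ walk in $\hat{G}\setminus\hat{E}^\prime$ of length at most $2N-2<2N-1=|p^*|$, again contradicting that $p^*$ is a shortest path there. The remaining subcase ($s_2$ connected to $s_3$ in $G'$ while $s_1$ is separated from $s_3$) is symmetric with the roles of $s_1$ and $s_3$ swapped. Since every subcase produces a contradiction, $g(\hat{E}^\prime)$ must in fact disconnect all three terminals, proving the lemma.

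The main obstacle I anticipate is handling the subcase in which the $s_1$--$s_2$ or $s_2$--$s_3$ path in $G'$ might pass through the third terminal. The trick is to observe that such a path can be truncated to a shorter path between the third terminal and one endpoint, which immediately reduces the situation to the already-handled case where two terminals are directly connected by a short path, and the length bound $N-2$ (rather than $N-1$) needed to beat $|p^*|=2N-1$ after appending a length-$N$ parallel path follows from restricting attention to the induced subgraph on $V\setminus\{s_3\}$.
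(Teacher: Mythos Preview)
Your proof is correct and follows essentially the same approach as the paper's: dispatch the trivial case $|\hat{E}^\prime|\geq M$, then in the nontrivial case assume some pair of terminals remains connected in $G'$ and derive a contradiction by exhibiting an $s_1$--$s_3$ path in $\hat{G}\setminus\hat{E}^\prime$ of length at most $2N-2<2N-1$, using the fact that fewer than $M+1$ removals cannot sever all of the parallel length-$N$ paths added by the reduction. Your write-up is in fact slightly more careful than the paper's in two places: you explicitly justify why the surviving $s_1$--$s_2$ path in $G'$ must avoid $s_3$ (via the prefix argument), and you state upfront that at least one parallel path of each type survives rather than folding this into the contradiction.
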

\begin{proof}
The case where $|\hat{E}^\prime|\geq M$ is trivial: If all edges are removed, then the terminals are disconnected. With the assumption that $|\hat{E}^\prime| < M$, partition $\hat{E}^\prime$ into two parts: $E_1=\hat{E}^\prime\cap E$ and $E_2=\hat{E}^\prime\setminus E1$, i.e., $E_1$ is the edges in the solution to $\instB$ that existed in the original graph, and $E_2$ consists of the edges in the solution added by $f$ (Algorithm~\ref{alg:problemMap}). Since $\hat{E}^\prime$ is a solution to $\instB$, $p^*$ is the shortest path from $s_1$ to $s_3$ in the graph $\hat{G}^\prime=(\hat{V}, \hat{E}\setminus\hat{E}^\prime)$. The length of $p^*$ is $2N-1$. Assume $E_1$ is not a solution to $\instA$, i.e., that $s_1$, $s_2$, and $s_3$ are not disconnected in $G_1=(V, E\setminus E_1)$. Then there would be a path between at least two of the three terminals in $G_1$. If there were a path from $s_1$ to $s_3$, the length of this path would be at most $N-1<2N-1$, so this contradicts the assumption that $p^*$ is the shortest path in $\hat{G}^\prime$. For the other cases, assume there is no such path (i.e., $s_1$ and $s_3$ are disconnected in $G_1$). Suppose there is a path between $s_1$ and $s_2$. This path's length is at most $N-2$ (since it cannot include $s_3$). In addition, either (1) one of the paths from $s_2$ to $s_3$ added by $f$ remains, or (2) all $M+1$ such paths were cut, which contradicts our assumption since it would require $\hat{E}^\prime$ to contain at least $M+1$ edges. The paths from $s_2$ to $s_3$ added in Algorithm~\ref{alg:problemMap} have length $N$, thus creating a path from $s_1$ to $s_3$ with length at most $2N-2<2N-1$, which also contradicts the assumption that $p^*$ is the shortest path. A similar argument is made for the case where a path from $s_2$ to $s_3$ remains in $G_1$: Such a path would have length at most $N-2$, and at least one of the paths added between $s_1$ and $s_2$ remains, resulting in a path of length at most $2N-2$. Thus, in the case where $|\hat{E}^\prime|< M$, $\hat{E}^\prime\cap E$ is a solution to $\instA$.
\end{proof}

With this result, we can show the reduction meets the final criterion.
\begin{lemma}
If $\instA$ is an instance of 3-Terminal Cut with $G=(V, E)$, $y$ is a solution to $\instB=f(\instA)$, and $x=g(y)$, then $$|\cost(x)-\opt(\instA)|\leq |\cost(y)-\opt(\instB)|,$$
satisfying (\ref{item:scaleSol}) with $\beta=1$. 
\label{lem:solutionCostScales}
\end{lemma}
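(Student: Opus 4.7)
The plan is to combine the two key structural facts already established, namely Lemma~\ref{lem:sameOpt} (the two optima coincide) and Lemma~\ref{lem:gSolves3TC} ($g$ sends feasible solutions of $\instB$ to feasible solutions of $\instA$), and then show that $g$ is cost-nonincreasing so that the signed error on the 3-Terminal Cut side is sandwiched below the signed error on the Force Path Cut side.

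Concretely, I would proceed as follows. First, abbreviate $\opt := \opt(\instA) = \opt(\instB)$, using Lemma~\ref{lem:sameOpt}. Since $y$ is a feasible solution of $\instB$, its cost satisfies $\cost(y) = |y| \geq \opt$, so that $|\cost(y)-\opt(\instB)| = \cost(y) - \opt$. Next, I would invoke Lemma~\ref{lem:gSolves3TC} to conclude that $x = g(y)$ is a feasible solution of $\instA$, hence $\cost(x) \geq \opt(\instA) = \opt$ and $|\cost(x) - \opt(\instA)| = \cost(x) - \opt$. With the absolute values dispatched, the claim reduces to the single inequality $\cost(x) \leq \cost(y)$.

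To establish $\cost(x) \leq \cost(y)$ I would split on the definition of $g$. In the case $|y| < M$ we have $x = y \cap E$ and therefore $\cost(x) = |y \cap E| \leq |y| = \cost(y)$ trivially. In the remaining case $|y| \geq M$ we have $x = E$ and $\cost(x) = M \leq |y| = \cost(y)$. Either way, $\cost(x) \leq \cost(y)$, so
\begin{equation*}
|\cost(x) - \opt(\instA)| \;=\; \cost(x) - \opt \;\leq\; \cost(y) - \opt \;=\; |\cost(y) - \opt(\instB)|,
\end{equation*}
which is the desired bound with $\beta = 1$.

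There is no real obstacle: the heavy lifting was done in Lemmas~\ref{lem:sameOpt} and~\ref{lem:gSolves3TC}, and what remains is a short two-case bookkeeping argument. The only subtlety worth flagging is handling the $|y| \geq M$ branch of $g$ so that the proof genuinely covers every feasible $y$, not just near-optimal ones; this is why I treat the two cases explicitly rather than appealing to $|y \cap E| \leq |y|$ alone.
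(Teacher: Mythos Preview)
Your proposal is correct and follows essentially the same approach as the paper: both invoke Lemma~\ref{lem:sameOpt} to equate the optima, Lemma~\ref{lem:gSolves3TC} for feasibility of $g(y)$, and then split on $|y| < M$ versus $|y| \geq M$ to obtain $\cost(x) \leq \cost(y)$. Your version is slightly more explicit in first stripping the absolute values via the feasibility lower bounds, whereas the paper leaves them in place and lets the same inequality do the work implicitly; the substance is identical.
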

\begin{proof}
Let $\hat{E}^\prime$ be a solution to $\instB$, and partition $\hat{E}^\prime$ into $E_1=\hat{E}^\prime\cap E$ and $E_2=\hat{E}^\prime\setminus E_1$. From Lemma~\ref{lem:gSolves3TC}, we know that $g(\hat{E}^\prime)$ solves $\instA$. From Lemma~\ref{lem:sameOpt}, we also know the optimal solutions are the same size. If $|\hat{E}^\prime|\geq M$, then $g(\hat{E}^\prime)=E$, and we have
\begin{equation}
    \left||E|-\opt(\instA)\right|=|M-\opt(\instA)|=|M-\opt(\instB)|\leq\left||\hat{E}^\prime|-\opt(\instB)\right|,
\end{equation}
so the condition holds. If $|\hat{E}^\prime|< M$, we have
\begin{equation}
    \left||\hat{E}^\prime\cap E|-\opt(\instA)\right|=\left||\hat{E}^\prime\cap E|-\opt(\instB)\right|\leq\left||\hat{E}^\prime|-\opt(\instB)\right|,
\end{equation}
which proves the claim.
\end{proof}
These intermediate results show that the proposed reduction is a linear reduction of 3-Terminal Cut to Force Path Cut, implying that Force Path Cut is APX-hard.
\begin{proof}[Proof of Lemma~\ref{lem:undirected}]
By construction, $f$, as described by Algorithm~\ref{alg:problemMap}, takes an instance of 3-Terminal Cut and maps it to an instance of Force Path Cut. As shown in the pseudocode, this takes $O(MN)$ time to compute. By Lemma~\ref{lem:gSolves3TC}, $g$ maps a solution to the instance of Force Path Cut obtained via $f$ to a solution to the original 3-Terminal Cut problem. The procedure of removing the original edge set takes polynomial time that varies depending on the data structure, e.g., $O(MN)$ per removal in an adjacency list. Thus, $f$ and $g$ are appropriate mappings that take polynomial time to compute, satisfying condition (\ref{item:polynomial}). By Corollary~\ref{cor:optimal}, the reduction satisfies condition (\ref{item:scaleOpt}), and by Lemma~\ref{lem:solutionCostScales}, it satisfies (\ref{item:scaleSol}). This means that $f$ and $g$ provide a linear reduction from 3-Terminal Cut to Force Path Cut. Since 3-Terminal Cut is APX-hard, Force Path Cut for undirected graphs is APX-hard as well.
\end{proof}
\subsection{Proof for Directed Graphs}
\label{subsec:proofDirected}
To prove that Force Path Cut is APX-hard for directed graphs, we formulate a linear reduction from Force Path Cut for undirected graphs to the directed case. The linear reduction in this case is simple. The function $f$ that maps an undirected instance of Force Path Cut to a directed one simply takes each edge from the former and includes the two edges between the associated nodes in both directions. The values of $p^*$, $s$, and $t$ remain the same. Formally, $f$ replaces $E$ with
\begin{equation}
    \hat{E}=\bigcup_{\{u,v\}\in E}{\{(u, v), (v, u)\}}.
\end{equation}
The graph $\hat{G}=(V, \hat{E})$ can be constructed in $O(N+M)$ time.

If we have a solution to Force Path Cut on the directed graph $\hat{G}$, there is a similarly simple mapping to a solution to undirected Force Path Cut: include the undirected version of each edge in the solution. This takes $O(M)$ time. We show that this mapping provides a solution to the original problem in the following lemma.
\begin{lemma}
Let $\instA$ be an undirected instance of Force Path Cut, and $f(\instA)$ be its corresponding directed instance. If $\hat{E}^\prime$ is a solution to $f(\instA)$, then the undirected solution
\begin{equation}
    E^\prime=\bigcup_{(u,v)\in \hat{E}^\prime}{\{\{u, v\}\}}
\end{equation}
solves $\instA$.\label{lem:directedSolvesUndirected}
\end{lemma}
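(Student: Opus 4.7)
The plan is to verify the three requirements for $E^\prime$ to solve $\instA$: (i) $E^\prime \subseteq E$, (ii) the removal cost stays within budget, and (iii) $p^*$ is strictly the shortest $s$-$t$ path in $G^\prime := (V, E \setminus E^\prime)$. Requirement (i) is immediate, since every directed edge $(u,v) \in \hat{E}^\prime \subseteq \hat{E}$ arises from an undirected edge $\{u,v\} \in E$. Requirement (ii) follows from $|E^\prime| \leq |\hat{E}^\prime|$ together with the equal-weight-and-cost regime that the reduction carries through.

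The heart of the proof is (iii), which I would establish via a lifting correspondence between walks in $G^\prime$ and in the directed residual $\hat{G}^\prime := (V, \hat{E} \setminus \hat{E}^\prime)$. The key observation is that an undirected edge $\{u,v\}$ survives in $G^\prime$ precisely when $\{u,v\} \notin E^\prime$, which by the definition of $E^\prime$ forces both $(u,v) \notin \hat{E}^\prime$ and $(v,u) \notin \hat{E}^\prime$. Consequently, any simple undirected $s$-$t$ path $p$ in $G^\prime$ can be oriented edge by edge into a simple directed $s$-$t$ path $\hat{p}$ in $\hat{G}^\prime$ with $|\hat{p}| = |p|$. Since $\hat{E}^\prime$ solves $f(\instA)$, the directed image of $p^*$ is strictly the shortest $s$-$t$ path in $\hat{G}^\prime$, which forces every alternative undirected $s$-$t$ path in $G^\prime$ to have length strictly greater than $|p^*|$.

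The main obstacle is confirming that $p^*$ itself remains a path in $G^\prime$, i.e., that no edge of $p^*$ is placed into $E^\prime$. For each consecutive pair $(u_i, u_{i+1})$ along $p^*$, the feasibility of $\hat{E}^\prime$ delivers $(u_i, u_{i+1}) \notin \hat{E}^\prime$ directly; the subtle part is ruling out $(u_{i+1}, u_i) \in \hat{E}^\prime$, which would otherwise drag $\{u_i, u_{i+1}\}$ into $E^\prime$ and destroy $p^*$. I would dispatch this with a \emph{WLOG normalization} step: argue that any reverse-$p^*$ edge in $\hat{E}^\prime$ can be excised without losing feasibility, since reintroducing such an edge can only enable simple directed $s$-$t$ walks that traverse $p^*$ backwards, and by the simplicity constraint such walks cannot be strictly shorter than $|p^*|$. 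Once this pruning is performed, no edge of $p^*$ survives in $E^\prime$, and the lifting argument of the previous paragraph completes the proof.
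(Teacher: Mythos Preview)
Your core lifting argument---that any $s$--$t$ path surviving in $G'$ lifts to a directed path of the same length in $\hat{G}'$---is exactly the paper's proof, which runs the same contradiction without ever checking that $p^*$ itself survives in $G'$. You are right to flag the reverse-$p^*$-edge obstruction; the paper simply ignores it.

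In fact the lemma \emph{as stated} is false, so no proof can succeed without modifying either the statement or the map $g$. Take the unit-weight triangle on $\{s,v,t\}$ with $p^*=(s,t)$ and budget $b=1$: then $\hat{E}'=\{(t,s)\}$ is a feasible solution to $f(\instA)$ (removing $(t,s)$ leaves $p^*$ strictly shortest in $\hat{G}$), but $E'=\{\{s,t\}\}$ destroys $p^*$ in $G$. Consequently your ``WLOG normalization'' cannot rescue the stated lemma---it quantifies over \emph{all} solutions $\hat{E}'$, and you may not assume a given $\hat{E}'$ omits reverse-$p^*$ edges. What your normalization \emph{does} establish is the corrected statement: for every solution $\hat{E}'$ to $f(\instA)$ there is a solution $\hat{E}''\subseteq\hat{E}'$ (drop the reverse-$p^*$ edges) with $g(\hat{E}'')$ solving $\instA$ and $|g(\hat{E}'')|\leq|\hat{E}'|$. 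That is precisely what the linear reduction actually needs, and your justification (reinserting a backward edge $(u_{i+1},u_i)$ cannot create a simple $s$--$t$ path of length at most $|p^*|$) is essentially the first case in the paper's proof of Lemma~\ref{lem:cutTwoWay}. So you have located a genuine gap that the paper's own proof shares; your fix is correct once it is framed as a redefinition of $g$ rather than as a WLOG assumption on the input $\hat{E}'$.
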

\begin{proof}
Suppose $E^\prime$ did not solve $\instA$, i.e., $p^*$ is not the shortest path from $s$ to $t$ in $G^\prime=(V, E\setminus E^\prime)$. Then there is some other path, $\hat{p}$, from $s$ to $t$ that is not longer than $p^*$. However, this path also exists in $\hat{G}^\prime=(V, \hat{E}\setminus\hat{E}^\prime)$: all edges from $E$ were added in the creation of $\hat{E}$, so if $\hat{p}$ were not in $\hat{G}^\prime$, at least one edge from $\hat{p}$ would have to be in $\hat{E}^\prime$. The mapping $g$ would include the undirected version of this edge in $E^\prime$, which would cut $\hat{p}$ in $G$ as well. Thus, the existence of $\hat{p}$ contradicts the assumption that $\hat{E}^\prime$ is a solution to $f(\instA)$, proving the claim.
\end{proof}
\begin{lemma}
Let $E^\prime$ be a solution to Force Path Cut on a directed graph. If $(u,v)\in E^\prime$ and $(v, u)\in E^\prime$, then either $E^\prime\setminus\{(u,v)\}$ or $E^\prime\setminus\{(v,u)\}$ is also a solution to Force Path Cut.\label{lem:cutTwoWay}
\end{lemma}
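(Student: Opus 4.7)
The plan is to argue by contradiction. Suppose that neither $E^\prime \setminus \{(u,v)\}$ nor $E^\prime \setminus \{(v,u)\}$ is itself a valid solution. Since $E^\prime$ is a valid solution, any $s$-to-$t$ path of length at most $|p^*|$ distinct from $p^*$ in $G \setminus (E^\prime \setminus \{(u,v)\})$ must actually use $(u,v)$---otherwise it would already lie in $G \setminus E^\prime$, contradicting that $p^*$ is the shortest path there. So there exist simple $s$-to-$t$ paths $p_1, p_2$, each distinct from $p^*$ and of length at most $|p^*|$, such that $p_1$ uses $(u,v)$ but no other edge of $E^\prime$, and $p_2$ uses $(v,u)$ but no other edge of $E^\prime$. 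In particular, $p_1$ does not use $(v,u)$ (still removed in the graph in which $p_1$ lives), and $p_2$ does not use $(u,v)$.

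Next, I would decompose each witness around its restored edge, writing $p_1 = q_{su}\cdot(u,v)\cdot q_{vt}$ and $p_2 = q_{sv}\cdot(v,u)\cdot q_{ut}$, where the four subwalks all avoid $E^\prime$ entirely. Cross-splice to form two $s$-to-$t$ walks in the graph $(V, E \setminus E^\prime)$: Walk $A := q_{su}\cdot q_{ut}$ (joined at $u$) and Walk $B := q_{sv}\cdot q_{vt}$ (joined at $v$). Their combined length satisfies
\begin{equation}
|A| + |B| \;=\; |p_1| + |p_2| - w(u,v) - w(v,u) \;\leq\; 2|p^*| - w(u,v) - w(v,u).
\end{equation}
In the equal-unit-weight setting of Section~\ref{subsec:proofDirected}, the right-hand side is at most $2|p^*| - 2 < 2|p^*|$, so at least one of $A, B$ has length strictly less than $|p^*|$. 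Extracting a simple $s$-to-$t$ path from the shorter of the two walks produces a path in $(V, E \setminus E^\prime)$ strictly shorter than $p^*$, contradicting the validity of $E^\prime$.

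The main subtlety is ensuring that the cross-spliced walks really do avoid \emph{every} edge of $E^\prime$, not merely $(u,v)$ and $(v,u)$. This holds because each $p_i$ uses exactly one edge of $E^\prime$---the restored edge---so the four subwalks $q_{su}, q_{vt}, q_{sv}, q_{ut}$ all lie in $(V, E \setminus E^\prime)$, and the spliced walks inherit this. A secondary subtlety, which does not affect the use of the lemma in the proof of Theorem~\ref{thm:pathHard} but would matter in full generality, is the zero-weight case $w(u,v) = w(v,u) = 0$: there the strict inequality above degrades to equality, and one would need a separate argument that the spliced walks are distinct from $p^*$ in order to contradict uniqueness of the shortest path.
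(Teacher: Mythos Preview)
Your proof is correct and follows essentially the same approach as the paper: both argue by contradiction, obtain two witness paths (or distance bounds) from the assumption that neither single-edge restoration is safe, and then add the resulting length inequalities to force an $s$--$t$ walk in $G\setminus E'$ strictly shorter than $p^*$. Your cross-splicing of $q_{su}\cdot q_{ut}$ and $q_{sv}\cdot q_{vt}$ is exactly the combinatorial content of the paper's summed inequality $d(s,u)+d(u,t)+d(s,v)+d(v,t)+2\le 2\ell_{p^*}$; the only cosmetic difference is that the paper splits into cases according to whether $u$ or $v$ lies on $p^*$ to name the contradicted quantity, whereas you bypass that case analysis by observing directly that one spliced walk must be shorter than $p^*$.
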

\begin{proof}
Let $d(v_1, v_2)$ be the distance between $v_1$ and $v_2$ in the directed graph $G^\prime=(V, E\setminus E^\prime)$. (If a path does not exist between $v_1$ and $v_2$, then $d(v_1,v_2)=\infty$.)

First, consider the case where both $u$ and $v$ are on $p^*$. Assume without loss of generality that $u$ precedes $v$ on the path. Then $(v,u)$ did not need to be removed from the edge set. Since $p^*$ is the shortest path from $s$ to $t$, it is comprised of shortest paths between all intermediate nodes on the path; for example, the shortest path from $s$ to $v$ and the shortest path from $v$ to $t$. Removing $(v,u)$ from $E^\prime$---leaving the edge in the graph when solving Force Path Cut---would not change the status of $p^*$ as the shortest path from $s$ to $t$: The shortest path  from $s$ to $v$ would still include $u$, and adding $(v,u)$ back to the graph would only enable moving backward along the path. This means that $E^\prime\setminus\{(v,u)\}$ is also a solution.

Now consider a case where one of $u$ and $v$ is part of $p^*$, but the other is not. Without loss of generality, assume $u$ is on $p^*$ and $v$ is not. Let $\ell_{p^*}$ be the length of $p^*$. Since $p^*$ is the shortest path, we know that $d(s, v)+d(v, t)> \ell_{p^*}$. Since $u$ is on $p^*$, we have $d(s, u)+d(u,t)=\ell_{p^*}$. Suppose the claim does not hold: that neither $E^\prime\setminus\{(u,v)\}$ nor $E^\prime\setminus\{(v,u)\}$ is a solution. This means that adding $(u,v)$ back into the edge set must create a path not longer than $p^*$, so we have $d(s, u)+d(v, t)+1\leq \ell_{p^*}$. The same is true for $(v, u)$, implying that $d(s, v)+d(u, t)+1\leq\ell_{p^*}$. Adding the latter two inequalities, we have 
\begin{equation}
    d(s,u)+d(u,t)+d(s,v)+d(v,t)+2\leq2\ell_{p^*}\Rightarrow d(s,v)+d(v,t)\leq\ell_{p^*}-2,\label{eq:pathContradiction}
\end{equation}
where we use the equation $d(s, u)+d(u,t)=\ell_{p^*}$. This contradicts the first inequality, that the path from $s$ to $t$ via $v$ is longer than $p^*$. In the case where neither $u$ nor $v$ is on $p^*$, we replace $d(s, u)+d(u, t)=\ell_{p^*}$ with $d(s, u)+d(u, t)>\ell_{p^*}$, and the inequality in (\ref{eq:pathContradiction}) becomes strict. Thus, if both $(u,v)$ and $(v,u)$ are in $E^\prime$, then either $E^\prime\setminus\{(u,v)\}$ or $E^\prime\setminus\{(v,u)\}$ is also a solution to Force Path Cut.
\end{proof}

This lemma has an immediate consequence that is important for proving the reduction is linear.
\begin{corollary}
Let $E^\prime$ be an optimal solution to Force Path Cut on a directed graph. If $(u,v)\in E^\prime$, then $(v, u)\notin E^\prime$.
\end{corollary}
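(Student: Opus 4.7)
The plan is to derive this corollary as an immediate consequence of the preceding lemma via a short contradiction argument. Suppose for contradiction that an optimal solution $E^\prime$ to Force Path Cut on a directed graph contains both $(u,v)$ and $(v,u)$. Then by the preceding lemma, at least one of $E^\prime\setminus\{(u,v)\}$ or $E^\prime\setminus\{(v,u)\}$ is itself a valid solution to Force Path Cut.

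Next, I would observe that removing a single edge strictly decreases the total cost of the solution. In the setting considered here for the APX-hardness reduction, all removal costs are equal to $1$, so $|E^\prime\setminus\{(u,v)\}|=|E^\prime|-1<|E^\prime|$, and likewise for the other choice. Since one of these strictly smaller edge sets is still a feasible solution, $E^\prime$ cannot have been optimal, giving the desired contradiction. Hence $(v,u)\notin E^\prime$ whenever $(u,v)\in E^\prime$.

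The only subtle point, and the single place where care is needed, is to justify that the strictly smaller feasible solution has strictly smaller cost. This is immediate in the equal-cost regime considered throughout the reduction (every edge has unit removal cost), so there is no genuine obstacle; the lemma does all the work, and the corollary follows by a one-line appeal to optimality. No further construction or case analysis is required.
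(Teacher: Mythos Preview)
Your proposal is correct and mirrors the paper's own proof almost exactly: both argue by contradiction, invoke the preceding lemma to drop one of the two antiparallel edges while remaining feasible, and note that the resulting strictly smaller solution contradicts optimality. The only minor addition you make is explicitly spelling out the unit-cost assumption to justify the strict cost decrease, which the paper leaves implicit.
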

\begin{proof}
From Lemma~\ref{lem:cutTwoWay}, if $E^\prime$ contained both $(u,v)$ and $(v,u)$, then removal of one of these edges would still be a solution. Since the resulting solution would be smaller than $E^\prime$, this contradicts the premise of the claim.
\end{proof}
In addition, we obtain the optimal solution to the undirected problem if we find the optimal solution to the directed problem via the reduction.
\begin{lemma}
Let $\instA$ be an undirected instance of Force Path Cut and $f(\instA)$ be the corresponding directed instance. Then the optimal solution to $f(\instA)$ is the optimal solution to $\instA$.
\label{lem:optEqual}
\end{lemma}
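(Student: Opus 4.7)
The plan is to establish $\opt(\instA)=\opt(f(\instA))$ by a two-sided inequality, leaning entirely on Lemma~\ref{lem:directedSolvesUndirected}, Lemma~\ref{lem:cutTwoWay}, and the corollary following Lemma~\ref{lem:cutTwoWay}.

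First I would show $\opt(f(\instA))\leq\opt(\instA)$. Let $E^*$ be an optimal solution to $\instA$ and form the directed edge set $\hat{E}^*=\bigcup_{\{u,v\}\in E^*}\{(u,v),(v,u)\}$. Because removing both directed copies of an undirected edge $\{u,v\}$ in $\hat{G}$ has the same effect on reachability and distances as deleting $\{u,v\}$ from $G$, every $s$-to-$t$ walk in $\hat{G}\setminus\hat{E}^*$ corresponds to an $s$-to-$t$ walk in $G\setminus E^*$, so $p^*$ remains shortest and $\hat{E}^*$ is a valid directed solution of size $2|E^*|$. I would then apply Lemma~\ref{lem:cutTwoWay} iteratively: while some pair $(u,v),(v,u)$ both sit in the current solution, delete one of them while preserving validity. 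After $|E^*|$ reductions the resulting directed solution has size $|E^*|$, giving $\opt(f(\instA))\leq|E^*|=\opt(\instA)$.

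Second I would show $\opt(\instA)\leq\opt(f(\instA))$. Let $\hat{E}'$ be an optimal solution to $f(\instA)$. By the corollary of Lemma~\ref{lem:cutTwoWay} proved just above this lemma, $\hat{E}'$ cannot contain both $(u,v)$ and $(v,u)$ for any pair, so the map from $\hat{E}'$ to its undirected projection $E'=\{\{u,v\}:(u,v)\in\hat{E}'\}$ is injective and $|E'|=|\hat{E}'|$. Lemma~\ref{lem:directedSolvesUndirected} then certifies that $E'$ solves $\instA$, so $\opt(\instA)\leq|E'|=\opt(f(\instA))$. Combining both inequalities yields the claimed equality.

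The main obstacle I anticipate is the first direction, since Lemma~\ref{lem:cutTwoWay} only guarantees that \emph{one} of the two reverse copies may be dropped without specifying which. I must therefore apply it one pair at a time and re-verify validity against the current (shrinking) edge set, rather than batch-deleting one direction from every pair simultaneously; an inductive argument on the number of reversed pairs present handles this cleanly. The rest of the proof is a bookkeeping exercise built on results already established in this appendix.
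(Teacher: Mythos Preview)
Your proposal is correct and follows essentially the same approach as the paper: both directions are proved via the same mechanism (doubling an undirected optimum and trimming via Lemma~\ref{lem:cutTwoWay} for one inequality, projecting a directed optimum via Lemma~\ref{lem:directedSolvesUndirected} for the other). Your version is slightly more explicit about the iterative application of Lemma~\ref{lem:cutTwoWay} and about invoking the corollary to ensure the projection preserves cardinality, but the argument is the same.
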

\begin{proof}
By Lemma~\ref{lem:directedSolvesUndirected}, we know that the optimal solution to $f(\instA)$ solves $\instA$, so $\opt(\instA)\leq\opt(f(\instA))$. Let $E^\prime$ be an optimal solution to $\instA$. Let $\hat{E}^\prime$ be a solution to $f(\instA)$ that includes both directed edges for each undirected edge in $E^\prime$, i.e.,
\begin{equation}
    \hat{E}^\prime=\bigcup_{\{u,v\}\in E^\prime}{\{(u,v), (v,u)\}}.
\end{equation}
Since $E^\prime$ is a solution to $\instA$, $\hat{E}^\prime$ is a solution to $f(\instA)$. (Otherwise a path $\hat{p}$ in $f(\instA)$ that is shorter than $p^*$ would not be removed by $E^\prime$.) However, $\hat{E}^\prime$ contains a pair of edges for each edge in $E^\prime$: for any $(u,v)\in\hat{E}^\prime$, we have $(v,u)\in\hat{E}^\prime$. By Lemma~\ref{lem:cutTwoWay}, one edge from each pair can be removed. This means that, for an optimal solution to $\instA$, we can find a solution to $f(\instA)$ that is the same size, which implies that $\opt(f(\instA))\leq\opt(\instA)$. This means that $\opt(\instA)=\opt(f(\instA))$, and an optimal solution for one problem can be applied to the other.
\end{proof}

Combining these intermediate results, we show that the reduction from undirected Force Path Cut to the directed version is linear, and the directed version of the problem is also APX-hard.
\begin{lemma}
Force Path Cut is APX-hard for directed graphs, including the case where all weights and all costs are equal.
\label{lem:directed}
\end{lemma}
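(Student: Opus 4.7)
The plan is to verify that the mapping $f$ (replace each undirected edge $\{u,v\}$ by the two directed arcs $(u,v)$ and $(v,u)$, keeping $s$, $t$, and $p^*$ unchanged) together with the mapping $g$ (replace each directed arc in the returned edge set by its undirected version) constitute a linear reduction from undirected Force Path Cut to directed Force Path Cut. Once this is established, Lemma~\ref{lem:undirected} (APX-hardness of the undirected case) transfers APX-hardness to the directed case, since linear reductions compose and preserve APX-hardness.

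First I would verify condition~(\ref{item:polynomial}): building $\hat{G}$ from $G$ takes $O(N+M)$ time, and collapsing a directed edge set to its undirected projection takes $O(M)$ time, so both $f$ and $g$ are polynomial. Next, for condition~(\ref{item:scaleOpt}), I would invoke Lemma~\ref{lem:optEqual} directly: the optima of the undirected instance $\instA$ and its directed image $f(\instA)$ coincide, so $\opt(f(\instA)) \leq \opt(\instA)$ holds with $\alpha=1$.

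The key step is condition~(\ref{item:scaleSol}). Given a feasible directed solution $\hat{E}^\prime$ to $f(\instA)$ and $E^\prime = g(\hat{E}^\prime)$, Lemma~\ref{lem:directedSolvesUndirected} already guarantees $E^\prime$ is feasible for $\instA$. Because $g$ identifies each pair $\{(u,v),(v,u)\}\subseteq\hat{E}^\prime$ with the single undirected edge $\{u,v\}$, we have $|E^\prime|\leq|\hat{E}^\prime|$. Combining this with $\opt(\instA)=\opt(f(\instA))$ from Lemma~\ref{lem:optEqual} yields
\begin{equation*}
\bigl| |E^\prime|-\opt(\instA)\bigr| \;=\; |E^\prime|-\opt(\instA) \;\leq\; |\hat{E}^\prime|-\opt(f(\instA)) \;=\; \bigl||\hat{E}^\prime|-\opt(f(\instA))\bigr|,
\end{equation*}
so condition~(\ref{item:scaleSol}) holds with $\beta = 1$. (The non-negativity of both differences follows from feasibility of $E^\prime$ for $\instA$ and of $\hat{E}^\prime$ for $f(\instA)$.)

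The main subtlety, and the part that the preceding lemmas were designed to handle, is ensuring that $g(\hat{E}^\prime)$ is genuinely feasible and not much larger than $\hat{E}^\prime$ in the worst case. Feasibility is handled by Lemma~\ref{lem:directedSolvesUndirected}, and the bound on the optimum (needed to relate slackness in the two problems) is handled by Lemma~\ref{lem:cutTwoWay} and Lemma~\ref{lem:optEqual}. With all three linear-reduction conditions verified and all weights and costs equal to $1$ throughout, APX-hardness of the undirected case implies APX-hardness of Force Path Cut on directed graphs in the uniform weight/cost regime, completing the proof.
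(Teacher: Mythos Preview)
Your proposal is correct and follows essentially the same approach as the paper: verify the three linear-reduction conditions using Lemma~\ref{lem:directedSolvesUndirected} for feasibility of $g$, Lemma~\ref{lem:optEqual} for $\alpha=1$, and the observation $|g(\hat{E}^\prime)|\leq|\hat{E}^\prime|$ together with $\opt(\instA)=\opt(f(\instA))$ for $\beta=1$. Your version is arguably slightly cleaner in that you explicitly justify dropping the absolute-value signs via feasibility, whereas the paper leaves that step implicit.
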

\begin{proof}
The function $f$ simply takes edges from an undirected graph and builds a directed graph with the same (directed) edges, which takes $O(N+M)$ time. The function $g$ takes a set of directed edges and converts it into a set of undirected edges, which takes $O(M)$ time. By Lemma~\ref{lem:directedSolvesUndirected}, $g$ maps to a true solution to the undirected Force Path Cut problem. This means that condition (\ref{item:polynomial}) is satisfied.

Lemma~\ref{lem:optEqual} guarantees that condition (\ref{item:scaleOpt}) is satisfied as well, with $\alpha=1$. Finally, let $\instA$ be an undirected instance of Force Path Cut. For any solution $\hat{E}^\prime$ to $f(\instA)$, we know that
\begin{equation}
    \frac{1}{2}|\hat{E}^\prime|\leq|g(\hat{E}^\prime)|\leq |\hat{E}^\prime|.
\end{equation}
 Thus, we have
 \begin{equation}
     \left||g(\hat{E}^\prime)|-\opt(\instA)\right|\leq\left||\hat{E}^\prime|-\opt(\instA)\right|=\left||\hat{E}^\prime|-\opt(\instB)\right|,
 \end{equation}
 so condition (\ref{item:scaleSol}) is satisfied with $\beta=1$.
 
 Since all three conditions are satisfied, $f$ and $g$ provide a linear reduction from Force Path Cut on an undirected graph to the same problem on a directed graph. By Lemma~\ref{lem:undirected}, Force Path Cut for undirected graphs is APX-hard, so the reduction implies it is APX-hard for directed graphs as well.
\end{proof}
Theorem~\ref{thm:pathHard} is a direct consequence of Lemma~\ref{lem:undirected} and Lemma~\ref{lem:directed}.
\section{PATHATTACK Convergence}
\label{sec:convergence}
We consider the case where the ellipsoid algorithm is used to optimize the relaxed version of the integer program. At each iteration, we consider the center of an ellipsoid and determine whether this point violates any constraints. Call this point $\Delta_f\in[0,1]^M$. In addition, let $P$ be the current set of explicit path constraints and $P_f$ be the set of path constraints---both implicit and explicit---that $\Delta_f$ does not violate.

Given $\Delta_f$, we perform the randomized rounding routine used in Algorithm~\ref{alg:LPCut}. With probability at least $1/2$, this procedure will yield a result that is within the guaranteed approximation margin (i.e., the objective of the integer solution is within $\ln{(4|P|)}$ of the fractional solution) and satisfying all explicit constraints. Thus, with high probability, it will yield such a solution in $O(\log{|E|})$ trials. Note that this holds for the full set $P_f$ if $|P_f|\leq \frac{1}{4}e^{\lceil\ln{(4|P|)}\rceil}$, since the two set sizes result in the same success bounds. If we attempt for $O(\log{|E|})$ trials and never find a violated constraint, we increment the number of Bernoulli random variables used in the randomized rounding procedure and the approximation factor. With high probability, this will yield a valid solution if $|P_f|\leq e^{\lceil\ln{(4|P|)}\rceil+1}$. We continue until we find a path that needs to be cut that is not (fractionally) cut by the solution of the relaxed problem. Algorithm~\ref{alg:oracle} provides pseudocode for this procedure. The algorithm will complete in polynomial time: There are at most $N\ln{N}$ iterations of the outer loop, at most $\lceil\log_2{|E|}\rceil$ iterations of the second loop, inside of which:
\begin{itemize}
    \item At most $O(|E|N\log{N})$ Bernoulli random variables instantiated and aggregated
    \item An $O(|E|)$-length vector is created
    \item Edges are removed from a graph (at most $O(|E|)$ time)
    \item In the worst case, the \emph{second} shortest path is found, which takes $O(N^3)$ time
    \item The constraints are checked ($O(N)$ time).
\end{itemize}
Each item takes polynomial time to complete, so the overall algorithm takes polynomial time.
\begin{algorithm}
\KwIn{Graph $G=(V, E)$, weights $w$, costs $\cVec$, path $p^*$, fractional cut vector $\Delta_f$, path set $P$}
\KwOut{Path $p$ that violates}
$s\gets$ first node of $p^*$\;
$t\gets$ last node of $p^*$\;
$\approxFactor\gets\lceil\ln{(4|P|)}\rceil$\;
$\hat{\Delta} \gets \mathbf{0}$\;
$E^\prime\gets\emptyset$\;
not\_cut$\gets\mathbf{True}$\;
\While{$(\approxFactor<N\ln{N})$ $\mathbf{and}$ $\mathrm{not\_cut}$}{
$\ctr\gets0$\;
\While{$(\cVec^\top\Delta_f > \cVec^\top\hat{\Delta}(4\cdot\approxFactor)$ $\mathbf{or}$ $\mathrm{not\_cut})$ $\mathbf{and}$ $\ctr<\lceil\log_2{|E|}\rceil$}{
$E^\prime\gets\emptyset$\;
\For{$i\gets 1$ to $\approxFactor$}{
\tcp{randomly select edges based on $\Delta_f$}
$E_1\gets\{e\in E \textrm{ with probability }\Delta_f[e]\}$\;
$E^\prime\gets E^\prime\cup E_1$\;
}
$\hat{\Delta}\gets$ indicator vector for $E^\prime$\;
$G^\prime\gets (V, E\setminus E^\prime)$\;
$p\gets$ shortest path (other than $p^*$) from $s$ to $t$ in $G^\prime$ with weights $w$\;
\eIf{$\sum_{(u,v)\in p}{\Delta_f[u, v]} \geq 1$}{
\tcp{$p$ does not violate $\Delta_f$}
not\_cut$\gets$\textbf{True}\;
}{
not\_cut$\gets$\textbf{False}\;
\If{$p$ is longer than $p^*$}{
\tcp{found solution (if within approximation factor)}
$p\gets\emptyset$\;
}
}
$\ctr\gets\ctr+1$\;
}
$\approxFactor\gets\approxFactor+1$
}
\Return $p$
\caption{Randomized Rounding Oracle}
\label{alg:oracle}
\end{algorithm}

\section{PATHATTACK for Node Removal}
\label{sec:nodeRemoval}
As discussed in Section~\ref{subsec:nodeRemoval}, \PATHATTACK{} can be used for node removal in addition to edge removal. We refer to the associated problem as \emph{Force Path Remove}: Given a weighted graph $G=(V,E)$, $w:E\rightarrow\reals_{\geq0}$, where each node has a cost of removal, $c:V\rightarrow\reals_{\geq0}$, a path $p^*$ from $s\in V$ to $t\in V$, and a budget $b$, is there a subset of nodes $V^\prime\subset V$ such that $\sum_{v\in V^\prime}{c(v)}\leq b$ and $p^*$ is the shortest path from $s$ to $t$ in $G^\prime=(V\setminus V^\prime, E\setminus E_{V^\prime})$, where $E_{V^\prime}\subset E$ is the set of edges with at least one node in $V^\prime$? In this section, we prove that this problem is also APX-hard, and provide empirical results on the same datasets as shown in Section~\ref{sec:experiments}.
\subsection{Computational Complexity}
We propose the following reduction from Force Path Cut to Force Path Remove for unweighted, undirected graphs:
\begin{itemize}
    \item Take the input graph $G=(V,E)$ and create its line graph $\hat{G}=(\hat{V},\hat{E})$
    \item Create new nodes $\hat{s}$ and $\hat{t}$ and add these to $\hat{V}$
    \item For all edges $e\in E$ incident to $s$, create a new edge in $\hat{E}$ connecting $\hat{s}$ to the corresponding node $v_e\in\hat{V}$
    \item For all edges $e\in E$ incident to $t$, create a new edge in $\hat{E}$ connecting $\hat{t}$ to the corresponding node $v_e\in\hat{V}$
    \item Let $\hat{p}^*$ be the path from $\hat{s}$ to $\hat{t}$ where all intermediate nodes are those that correspond to the edges in $p^*$, in the same order
    \item Solve Force Path Remove on $\hat{G}$ targeting $\hat{p}^*$.
\end{itemize}
This procedure is the function $f$ that maps an instance of Force Path Cut to an instance of Force Path Remove. When we solve Force Path Remove, we obtain a set of nodes $\hat{V}^\prime$ to remove. Since all nodes in $\hat{G}$ other than $\hat{s}$ and $\hat{t}$ correspond to edges in $E$---and $\hat{s}$ and $\hat{t}$ cannot be in $\hat{V}^\prime$---the function $g$ mapping a solution to Force Path Remove to a solution to Force Path Cut is to make $E^\prime$ the set of edges in $E$ corresponding to the nodes in $\hat{V}^\prime$. 

We first prove that the reduction yields a solution to Force Path Cut in the following lemma.
\begin{lemma}
Let $\instA$ be an unweighted, undirected instance of Force Path Cut and $\hat{V}^\prime$ be a solution to $f(\instA)$. Then $E^\prime=g(\hat{V}^\prime)$ is a solution to $\instA$.
\end{lemma}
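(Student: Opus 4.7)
The plan is to argue by contradiction: suppose $E^\prime = g(\hat{V}^\prime)$ fails to solve $\instA$. Then there exists a simple path $p$ from $s$ to $t$ in $G^\prime = (V, E \setminus E^\prime)$ that is not longer than $p^*$. I will lift $p$ to a simple path from $\hat{s}$ to $\hat{t}$ in $\hat{G}$ after removing $\hat{V}^\prime$ that is not longer than $\hat{p}^*$, contradicting the assumption that $\hat{V}^\prime$ solves $f(\instA)$.

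First I would make the edge-to-node correspondence precise. If $p$ traverses edges $e_1, e_2, \ldots, e_k$ in order, the line-graph nodes $v_{e_1}, v_{e_2}, \ldots, v_{e_k}$ are pairwise adjacent in consecutive positions because consecutive edges of $p$ share a vertex in $G$. Prepending $\hat{s}$ (adjacent to $v_{e_1}$ since $e_1$ is incident to $s$) and appending $\hat{t}$ (adjacent to $v_{e_k}$ since $e_k$ is incident to $t$) yields a walk $\hat{p}$ from $\hat{s}$ to $\hat{t}$ of $k+1$ hops. Simplicity of $p$ (its edges are distinct) makes the interior nodes of $\hat{p}$ distinct, and $\hat{s}, \hat{t}$ are freshly introduced, so $\hat{p}$ is simple. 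The identical construction applied to $p^*$ produces $\hat{p}^*$ and shows $|\hat{p}^*| = |p^*| + 1$.

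Next I would verify that $\hat{p}$ survives the node removal. By the definition of $g$, an edge $e$ lies in $E^\prime$ if and only if $v_e \in \hat{V}^\prime$; since $p$ avoids every edge in $E^\prime$, none of $v_{e_1}, \ldots, v_{e_k}$ lies in $\hat{V}^\prime$. We also use the standing assumption (see Section~\ref{subsec:nodeRemoval}) that neither terminal is removable, so $\hat{s}, \hat{t} \notin \hat{V}^\prime$. Hence $\hat{p}$ exists in the residual graph, and $|\hat{p}| = |p| + 1 \leq |p^*| + 1 = |\hat{p}^*|$, which contradicts $\hat{V}^\prime$ being a solution to $f(\instA)$.

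The main obstacle I anticipate is bookkeeping at the boundary: I must check that removing $v_e$ for an $e$ incident to $s$ (or $t$) consistently kills both the corresponding edge of $G$ and the auxiliary edge $\{\hat{s}, v_e\}$ (or $\{\hat{t}, v_e\}$) of $\hat{G}$, so that no spurious $\hat{s}$–$\hat{t}$ paths are created or destroyed. I also note that only this ``forward'' direction of the path correspondence is needed; the reverse direction (every $\hat{s}$–$\hat{t}$ path in $\hat{G} \setminus \hat{V}^\prime$ corresponds to a simple $s$–$t$ path in $G \setminus E^\prime$) can fail in general because line-graph paths may encode non-simple edge walks, but that direction is not required for this lemma.
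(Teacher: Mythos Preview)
Your proposal is correct and follows essentially the same contradiction argument as the paper: both assume a competing path $p$ in $G\setminus E'$, lift it via the edge-to-node correspondence to a path $\hat p$ from $\hat s$ to $\hat t$ in $\hat G\setminus \hat V'$, and observe that $\hat p$ is no longer than $\hat p^*$, contradicting that $\hat V'$ solves $f(\instA)$. Your version is more explicit about the length offset ($|\hat p|=|p|+1$), simplicity of the lifted path, and why $\hat s,\hat t\notin\hat V'$, but these are refinements of the same idea rather than a different route.
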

\begin{proof}
Suppose $E^\prime$ were not a solution to $\instA$. This means that $p^*$ is not the shortest path from $s$ to $t$ in $G^\prime=(V, E\setminus E^\prime)$. Let $p^\prime$ be the competing path. Since $p^\prime$ remains a path from $s$ to $t$, none of its edges are in $E^\prime$, and therefore none of those edges' corresponding nodes from $\hat{G}$ are in $\hat{V}^\prime$. This means that all edges in $p^\prime$ have nodes in $\hat{G}$. These nodes, however, form a path $\hat{p}^\prime$ from $\hat{s}$ to $\hat{t}$, and since the length of $p^\prime$ is less than or equal to the length of $p^*$, $\hat{p}^\prime$ is also no longer than $\hat{p}^*$, which contradicts the assumption that removing $\hat{V}^\prime$ makes $\hat{p}^*$ the shortest path from $\hat{s}$ to $\hat{t}$.
\end{proof}

In addition, the optimal solution to Force Path Cut is also the optimal solution to the derived Force Path Remove instance.
\begin{lemma}
Let $\instA$ be an instance of Force Path Cut. If $\hat{V}^\prime$ is an optimal solution of $f(\instA)$, then $E^\prime=g(\hat{V}^\prime)$ is an optimal solution of $\instA$.
\label{lem:optNodeRemovalIsOptEdgeRemoval}
\end{lemma}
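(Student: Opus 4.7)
The plan is to combine the preceding lemma with a converse construction that turns any Force Path Cut solution on $G$ into a Force Path Remove solution on $\hat{G}$ of the same size. Given an optimal $\hat{V}^\prime$ for $f(\instA)$, set $E^\prime = g(\hat{V}^\prime)$; the preceding lemma already guarantees that $E^\prime$ is a feasible solution to $\instA$. No element of $\hat{V}^\prime$ can be $\hat{s}$ or $\hat{t}$, since removing either would disconnect $\hat{p}^*$, and the remaining nodes of $\hat{G}$ are in one-to-one correspondence with the edges of $G$; hence $|E^\prime| = |\hat{V}^\prime|$, and it suffices to prove $|\hat{V}^\prime| \leq \opt(\instA)$.

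To establish this bound, let $E^*$ be any optimal solution of $\instA$ and define $\hat{V}^* = \{v_e : e \in E^*\} \subset \hat{V}$. I will argue $\hat{V}^*$ is feasible for $f(\instA)$, which by optimality of $\hat{V}^\prime$ will give $|\hat{V}^\prime| \leq |\hat{V}^*| = |E^*|$. Suppose for contradiction there is a path $\hat{q}$ from $\hat{s}$ to $\hat{t}$ in $\hat{G}$ whose intermediate nodes all lie outside $\hat{V}^*$ and whose edge-length is at most that of $\hat{p}^*$. By the line-graph adjacency and the way $\hat{s}$ and $\hat{t}$ are attached to the incidence classes of $s$ and $t$, the intermediate nodes of $\hat{q}$ correspond to edges of $G \setminus E^*$ that form an $s$-to-$t$ walk $W$ in $G$, whose edge count equals the number of intermediate nodes of $\hat{q}$, i.e.\ one less than the edge-length of $\hat{q}$. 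Since $\hat{p}^*$ has exactly one more edge than $p^*$, the walk $W$ has at most $|p^*|$ edges, and removing any cycles in $W$ yields a simple $s$-$t$ path in $G \setminus E^*$ no longer than $p^*$, contradicting that $E^*$ solves $\instA$. Hence $\hat{V}^*$ is feasible, the bound follows, and $E^\prime$ is optimal.

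The main obstacle is the length bookkeeping across the reduction: the auxiliary vertices $\hat{s}$ and $\hat{t}$ introduce a uniform $+1$ shift in edge length between $G$ and $\hat{G}$, which must be tracked consistently so that ``no longer than $\hat{p}^*$'' in $\hat{G}$ precisely matches ``no longer than $p^*$'' in $G$. The walk-to-simple-path step is routine in the unweighted setting, but it must be stated in order to turn the walk $W$ into a bona fide competing path for $\instA$; a further cosmetic check is that the line-graph correspondence indeed carries adjacent nodes of $\hat{q}$ to edges of $G$ sharing a common endpoint, so that $W$ really is a walk and not merely a sequence of edges.
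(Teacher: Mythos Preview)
Your overall strategy matches the paper's: both argue that any optimal Force Path Cut solution $E^*$ lifts to a Force Path Remove solution $\hat{V}^* = \{v_e : e \in E^*\}$ of the same size, so $\opt(f(\instA)) \leq \opt(\instA)$, and combined with the preceding lemma and $|g(\hat{V}')| = |\hat{V}'|$ this forces equality. The paper's version is terser---it simply observes that removing $\hat{V}^*$ from $\hat{G}$ yields the augmented line graph of $G\setminus E^*$ and asserts that shortest paths transfer---so your more explicit length bookkeeping is a reasonable elaboration.

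There is, however, a real gap in the step you flag as ``cosmetic.'' You claim that the intermediate nodes of an $\hat{s}$--$\hat{t}$ path $\hat{q}$ in $\hat{G}$ correspond to edges of $G$ that form an $s$-to-$t$ walk $W$. But a path in a line graph need \emph{not} correspond to a walk in the base graph: consecutive edges sharing a vertex is strictly weaker than being traversable in sequence. With $e_1=\{s,a\}$, $e_2=\{a,c\}$, $e_3=\{a,t\}$, the nodes $v_{e_1},v_{e_2},v_{e_3}$ form a path in the line graph, yet no walk in $G$ uses $e_1,e_2,e_3$ in order (after $e_1,e_2$ you sit at $c$, which is not on $e_3$). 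The repair is easy and keeps your structure intact: the $\ell-1$ edges indexed by the intermediate nodes of $\hat{q}$ form a \emph{connected subgraph} of $G\setminus E^*$ containing both $s$ and $t$, hence contain a simple $s$--$t$ path of length at most $\ell-1 \leq |p^*|$. If that path is strictly shorter than $p^*$ you have your contradiction; if it has length exactly $|p^*|$, uniqueness of $p^*$ as the shortest path forces it to equal $p^*$, and then the line-graph adjacency structure of a simple path forces $\hat{q}=\hat{p}^*$. So the conclusion stands, but the ``walk'' step as written fails.
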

\begin{proof}
Suppose there were a better solution $E_1\subset E$, $|E_1|<|E^\prime|$. Consider the corresponding set of nodes $V_1\subset\hat{V}$. The augmented line graph of $G_1=(V, E\setminus E_1)$ would be the same as $\hat{G}$ with $V_1$ (and all edges adjacent to nodes in $V_1$) removed. Since $p^*$ is the shortest path from $s$ to $t$ in $G_1$, however, $\hat{p}^*$ would be the shortest path from $\hat{s}$ to $\hat{t}$ in the augmented line graph. This means that $V_1$ is a solution to $f(\instA)$. The fact that $|E_1|<|E^\prime|$, however, implies that $|V_1|<|\hat{V}^\prime|$, which contradicts the assumption that $\hat{V}^\prime$ is an optimal solution, proving the claim.
\end{proof}
Using these results, we now prove APX-hardness in the undirected case.
\begin{lemma}
Force Path Remove is APX-hard for undirected graphs, including the case where all costs are equal.\label{lem:removeUndirected}
\end{lemma}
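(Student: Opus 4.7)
My plan is to complete the reduction already set up in the excerpt by verifying that $f$ (the augmented-line-graph construction) together with $g$ (mapping a node set back to the corresponding edge set) is a \emph{linear} reduction from Force Path Cut on unweighted, undirected graphs to Force Path Remove. Since Lemma~\ref{lem:undirected} already established APX-hardness of the source problem in exactly this restricted setting, a linear reduction suffices to transfer the hardness to the target problem.

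First I would verify condition (\ref{item:polynomial}): the line graph of $G=(V,E)$ can be built in $O(N+M^2)$ time, and augmenting it with $\hat{s}$, $\hat{t}$ and their incident edges (plus writing out $\hat{p}^*$) is linear in the resulting size; the map $g$ just reads off the edges corresponding to the nodes in $\hat{V}^\prime$, which is $O(|\hat{V}^\prime|)$. Next, condition (\ref{item:scaleOpt}) follows directly from combining the (unnamed) lemma preceding Lemma~\ref{lem:optNodeRemovalIsOptEdgeRemoval} with Lemma~\ref{lem:optNodeRemovalIsOptEdgeRemoval} itself: together they show that the bijection between edges of $G$ (other than those incident to $s,t$ that are forced into $\hat{p}^*$) and non-terminal nodes of $\hat{G}$ induces a size-preserving correspondence between solutions, and in particular $\opt(f(\instA))=\opt(\instA)$, so $\alpha=1$.

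For condition (\ref{item:scaleSol}), observe that for \emph{any} solution $\hat{V}^\prime$ to $f(\instA)$, the set $E^\prime=g(\hat{V}^\prime)$ is a solution to $\instA$ by the lemma cited above, and by construction $|E^\prime|=|\hat{V}^\prime|$. Combined with $\opt(f(\instA))=\opt(\instA)$, this gives
\begin{equation*}
\bigl||g(\hat{V}^\prime)|-\opt(\instA)\bigr|
=\bigl||\hat{V}^\prime|-\opt(f(\instA))\bigr|,
\end{equation*}
so the condition is met with $\beta=1$. Because the constructed instance $f(\instA)$ assigns every non-terminal node of $\hat{G}$ a cost of $1$, the hardness we obtain is already for the restricted case of equal node-removal costs, as required by the statement. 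Citing Lemma~\ref{lem:undirected} for the APX-hardness of the source problem then yields Lemma~\ref{lem:removeUndirected}.

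The only real obstacle is making sure the line-graph construction truly preserves the ordering of competing paths: a competing $s$-$t$ walk of length $\ell$ in $G$ must correspond to an $\hat{s}$-$\hat{t}$ walk of length $\ell+1$ in $\hat{G}$ (and conversely), so that ``not longer than $p^*$'' on one side matches ``not longer than $\hat{p}^*$'' on the other. This is exactly what is used implicitly in the two preceding lemmas, and a brief sentence in the write-up should confirm that the added terminals $\hat{s}$, $\hat{t}$ add the same additive offset to every candidate path so that relative orderings are preserved. Everything else is bookkeeping.
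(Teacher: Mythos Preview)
Your proposal is correct and follows essentially the same route as the paper: verify that the augmented line-graph map $f$ together with the node-to-edge back-map $g$ is a linear reduction by checking polynomial computability, $\opt(f(\instA))=\opt(\instA)$ (giving $\alpha=1$) via the two preceding lemmas, and the size-preserving correspondence $|g(\hat{V}^\prime)|=|\hat{V}^\prime|$ (giving $\beta=1$), then invoke Lemma~\ref{lem:undirected}. Your explicit remark about the uniform $+1$ path-length offset under the line-graph construction is a nice clarification that the paper leaves implicit; the parenthetical about ``edges incident to $s,t$ that are forced into $\hat{p}^*$'' is slightly off (only the edges actually on $p^*$ correspond to protected nodes of $\hat{p}^*$, not all edges incident to $s$ or $t$), but this does not affect the argument.
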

\begin{proof}
To prove the claim, we show that the reduction described above satisfies the requirements of a linear reduction: $f$ and $g$ can be computed in polynomial time, the optimal solutions differ by at most a constant factor, and the absolute difference of any solution from the optimal solution. We consider each criterion in turn.

To create the line graph, a new graph is created with $|E|$ nodes. For each node in the original graph, edges are created in the new graph connecting two of the incident edges. Thus, for a node in $V$ with degree $d$, we have $\binom{d}{2}$ edges in $\hat{E}$, meaning that $|\hat{E}|$ is $O(|E|^2)$. After creating the line graph, we add two new nodes and their associated edges, which will be a total of $O(|V|)$. Overall, the new graph has $|E|+2$ nodes and $O(|E|^2+|V|)$ edges. Finally, to identify $\hat{p}^*$, we only need to keep track of which nodes are associated with edges in $p^*$ in the original graph, which requires $O(|V|)$ additional time to convert the node labels. This means that $f$ can be computed in polynomial time. To compute $g$, we only need to maintain a mapping of the edges in $G$ to the nodes in $\hat{G}$, which will take $O(|E|)$ time to populate. Converting the solution to Force Path Remove to a solution to Force Path Cut will take $O(|V^\prime|)=O(|E|)$ lookups in the mapping. Thus, both $f$ and $g$ can be computed in polynomial time.

By Lemma~\ref{lem:optNodeRemovalIsOptEdgeRemoval}, if $V^\prime$ is the optimal solution for $f(\instA)$, then $g(V^\prime)$ is the optimal solution for $\instA$, which means condition~(\ref{item:scaleOpt}) is satisfied with $\alpha=1$. In addition, since the solutions of the two problems are the same size---each node removed to solve $f(\instA)$ corresponds with an edge removed to solve $\instA$--- we have, for any proposed solution $x$ to $f(\instA)$,
\begin{equation}
    |\cost(x)-\opt(f(\instA))|=|\cost(g(x))-\opt(\instA)|,
\end{equation}
satisfying condition (\ref{item:scaleSol}) with $\beta=1$. Thus, all conditions of a linear reduction are satisfied. Since Force Path Cut is APX-hard for unweighted, undirected graphs (Lemma~\ref{lem:undirected}), this implies that Force Path Remove is APX-hard as well.
\end{proof}

We can trivially reduce from the undirected case to the directed case by creating a directed edge set that includes edges in both directions for each undirected edge in the original graph. Solving Force Path Remove on the resulting directed graph yields the same solution as solving the Force Path Remove on the original graph, thus corresponding solutions in the directed and undirected cases are always equally costly. The formal proof of the following lemma is straightforward, and we omit it for brevity.
\begin{lemma}
Force Path Remove is APX-hard for directed graphs, including the case where all costs are equal.\label{lem:removeDirected}
\end{lemma}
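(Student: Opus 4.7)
The plan is to mimic the edge-removal directed/undirected reduction in Lemma~\ref{lem:directed}, but exploit the fact that node removal is indifferent to edge orientation, which makes the argument much simpler than in the edge case. Specifically, I will produce a linear reduction from undirected Force Path Remove (known APX-hard by Lemma~\ref{lem:removeUndirected}) to the directed version.

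First I would define the mapping $f$ that sends an undirected instance $\instA=(G,w,c,s,t,p^*)$ with $G=(V,E)$ to a directed instance $f(\instA)=(\hat G,\hat w,c,s,t,p^*)$, where $\hat G=(V,\hat E)$ with $\hat E=\{(u,v),(v,u):\{u,v\}\in E\}$ and $\hat w(u,v)=\hat w(v,u)=w(\{u,v\})$. The node set, node costs, source, target, and target path are all reused verbatim; since $p^*$ is given as a sequence of nodes, it is automatically a valid directed path in $\hat G$. The reverse mapping $g$ is the identity on node subsets: a solution $V'\subseteq V$ for $f(\instA)$ is returned as the same $V'$ for $\instA$. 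Both $f$ and $g$ are clearly computable in $O(|V|+|E|)$ time, fulfilling condition~(\ref{item:polynomial}).

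The key lemma I would establish is that for any $V'\subseteq V\setminus\{s,t\}$, the residual directed graph $\hat G'=(V\setminus V',\hat E\setminus \hat E_{V'})$ and the residual undirected graph $G'=(V\setminus V',E\setminus E_{V'})$ have the same set of $s$-to-$t$ paths, viewed as node sequences, and each such path has identical length under $w$ and $\hat w$. This is because deleting a node wipes out all incident edges regardless of direction, so an edge $\{u,v\}$ survives in $G'$ iff both $(u,v)$ and $(v,u)$ survive in $\hat G'$; and any node sequence $s=v_0,v_1,\ldots,v_k=t$ is a path in $G'$ iff each consecutive $\{v_i,v_{i+1}\}$ is present iff the directed edge $(v_i,v_{i+1})$ is present in $\hat G'$. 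Consequently, $p^*$ is a shortest $s$-$t$ path in $G'$ if and only if it is a shortest $s$-$t$ path in $\hat G'$, so $V'$ solves $\instA$ iff it solves $f(\instA)$, and moreover the two solutions have identical cost $\sum_{v\in V'}c(v)$.

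From this equivalence, condition~(\ref{item:scaleOpt}) follows immediately with $\alpha=1$ (the optima coincide), and condition~(\ref{item:scaleSol}) follows with $\beta=1$ because $\cost(g(V'))=\cost(V')$ and the optima agree. Hence $f,g$ form a linear reduction, and since undirected Force Path Remove is APX-hard even in the equal-cost setting by Lemma~\ref{lem:removeUndirected}, directed Force Path Remove is APX-hard, including when all costs are equal. I do not foresee a hard step here; the only place one must be a bit careful is verifying the path-equivalence claim, in particular that forbidding $s,t\in V'$ (implicit in Force Path Remove, since otherwise $p^*$ would be destroyed) is preserved under both $f$ and $g$, and that the equal-cost case is preserved because $f$ does not introduce any new nodes and hence no new cost entries.
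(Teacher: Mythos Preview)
Your proposal is correct and follows essentially the same approach the paper sketches: replace each undirected edge by a pair of antiparallel directed edges and use the identity on node subsets as $g$, observing that node deletion is insensitive to edge orientation so feasible solutions and their costs coincide. The paper in fact omits the formal proof as ``straightforward,'' and your write-up supplies precisely the details (the path-equivalence claim and the verification of conditions~(\ref{item:polynomial})--(\ref{item:scaleSol}) with $\alpha=\beta=1$) that the paper leaves implicit.
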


The following theorem is a direct consequence of Lemmas~\ref{lem:removeUndirected} and \ref{lem:removeDirected}.
\begin{theorem}
Force Path Remove is APX-hard, including the case where all costs are equal.
\end{theorem}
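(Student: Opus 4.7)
The plan is to prove this theorem as an immediate corollary of the two preceding lemmas. Since the Force Path Remove problem as defined in Section~\ref{subsec:nodeRemoval} partitions into the directed and the undirected cases, and Lemma~\ref{lem:removeUndirected} handles the undirected case while Lemma~\ref{lem:removeDirected} handles the directed case, I only need to observe that APX-hardness of either restriction implies APX-hardness of the union. Concretely, if there were a polynomial-time approximation scheme for Force Path Remove in general, restricting its inputs to (say) undirected instances would immediately yield a PTAS for the restriction, contradicting Lemma~\ref{lem:removeUndirected}.

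The ``all costs equal'' clause requires no additional work, because both lemmas are stated with that restriction already built in. I would briefly note why the restriction is preserved through the reductions used to prove those lemmas: the line-graph construction in Lemma~\ref{lem:removeUndirected} places nodes of $\hat{G}$ in one-to-one correspondence with edges of $G$ (together with the auxiliary terminals $\hat{s}$ and $\hat{t}$, which are never removed), so a uniform edge-removal cost in the original Force Path Cut instance translates into a uniform node-removal cost in the Force Path Remove instance; and the directed wrapping used in Lemma~\ref{lem:removeDirected} leaves the node-cost distribution untouched.

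There is essentially no obstacle to this step: the heavy lifting, namely constructing the linear reductions and verifying that they preserve approximation ratios and the uniform-cost restriction, was already discharged in the two lemmas. The only subtlety worth calling out explicitly is that the problem statement in Section~\ref{subsec:nodeRemoval} presupposes feasibility (a node-removal solution must exist at all), but feasibility is a property of the input instance rather than of approximation complexity, and both reductions produce images that are manifestly feasible whenever the source is, so APX-hardness transfers without caveat. Thus the proof is essentially a single sentence invoking the two lemmas.
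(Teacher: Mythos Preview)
Your proposal is correct and matches the paper's own argument: the paper states the theorem as ``a direct consequence of Lemmas~\ref{lem:removeUndirected} and \ref{lem:removeDirected}'' without further proof. Your added remarks about why the equal-cost clause and feasibility carry through the reductions are sound but go beyond what the paper bothers to spell out.
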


\subsection{Experiments}
We use the same experimental setup as in Section~\ref{subsec:setup}, in this case only considering the 200th shortest path between the source and target. The baseline method is analogous to the edge removal case: the lowest-cost node along the shortest path is removed until $p^*$ is the shortest path. 
In all cases, the cost of edge removal is the edge weight, while the cost of node removal is degree. For node removal, we only consider values of $p^*$ that have viable solutions (i.e., $p^*$ is the shortest path from $s$ to $t$ in the subgraph induced on the nodes used by $p^*$).

Figure~\ref{fig:plots_unweighted_NR} shows results on unweighted graphs. We consider the equal-weighted case, without random weights added, as this case highlights the greatest difference between edge cuts and node removal. We did not consider cliques, since there is never a viable solution when $p^*$ is more than one hop. Results for edge removal are shown for comparison. There are a few substantial differences between relative performance using edge removal and node removal. For example, \PATHATTACK{} provides a much more significant cost reduction with node removal rather than edge removal for Erd\H{o}s--R\'{e}nyi graphs. In the vast majority of these cases ($99/100$ for edge removal, $97/100$ for node removal), randomized \PATHATTACK{} finds the optimal solution. This suggests that targeting the low-degree nodes---as the baseline does for node removal---is not an effective strategy in this context, i.e., the increased cost of removing higher-degree nodes pays off by removing more competing paths. Another major difference is that the baseline outperforms \PATHATTACK{} for lattices with equal weights. In this case, while \PATHATTACK{} finds the optimal solution with edge removal in $94\%$ of cases, it is only optimal $20\%$ of the time with node removal. Since degrees are equal, \texttt{GreedyCost} simply removes the first node that deviates from $p^*$, which works well for unweighted lattices, while the analogous strategy for edge removal is suboptimal. We see a similar phenomenon with the autonomous system graph: the baseline method of removing low-degree nodes yields a near-optimal solution, suggesting that focusing on disrupting the intermediate low-degree nodes between the endpoints and the hubs is a cost-effective strategy. While attacking high-degree nodes is useful for disconnecting networks with skewed degree distributions~\cite{Albert2000}, the degree-based cost and focus on making a particular path shortest make a different strategy ideal here.
\begin{figure}[t]
    \centering
    \includegraphics[width=\textwidth]{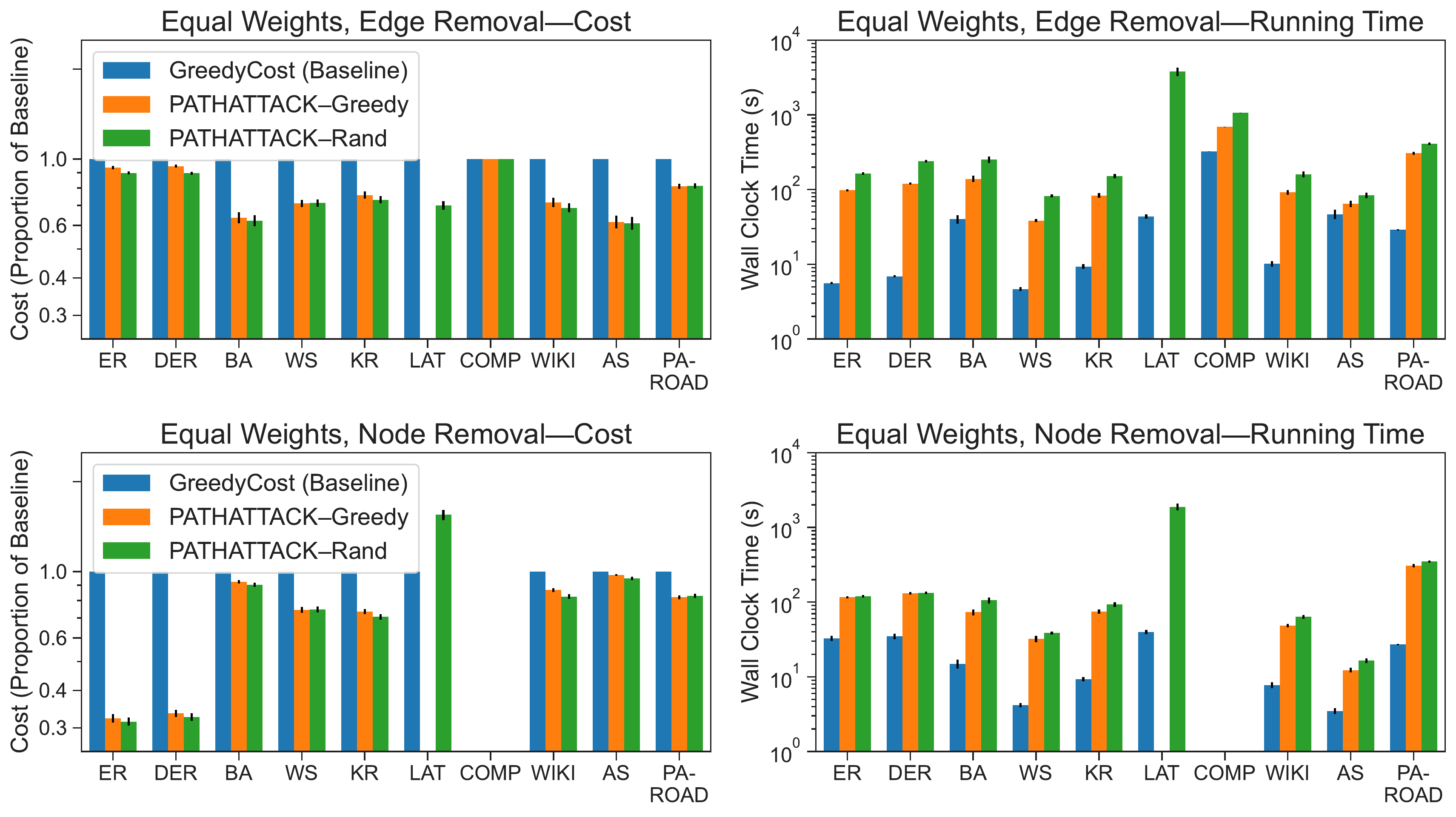}
    \caption{Results on unweighted networks when \PATHATTACK{} removes nodes rather than edges. Results are shown where all edges have equal weight and removal cost, in terms of attack cost (left column) and running time (right column). Lower is better for both metrics. Bar heights are means across these trials and error bars are standard errors. Results are shown for edge removal (top row) as a comparison to the results for node removal (bottom row). For (D)ER graphs, \PATHATTACK{} yields much more substantial improvement with node removal than edge removal, while for lattices \PATHATTACK{} actually underperforms the baseline. This demonstrates the variation in effectiveness of the baseline (greedily removing low-degree nodes) over differences in topology.}
    \label{fig:plots_unweighted_NR}
\end{figure}

Figure~\ref{fig:plots_weighted_NR} illustrates results on real weighted graphs. Unlike the autonomous system graph, here the power grid and road networks outperform the baseline more significantly using node removal. Here, the diversity of edge weights among the real graphs---and their contribution to path length---makes greedily selecting low-cost edges a more effective heuristic than selecting low-degree nodes.
\begin{figure}[t]
    \centering
    \includegraphics[width=\textwidth]{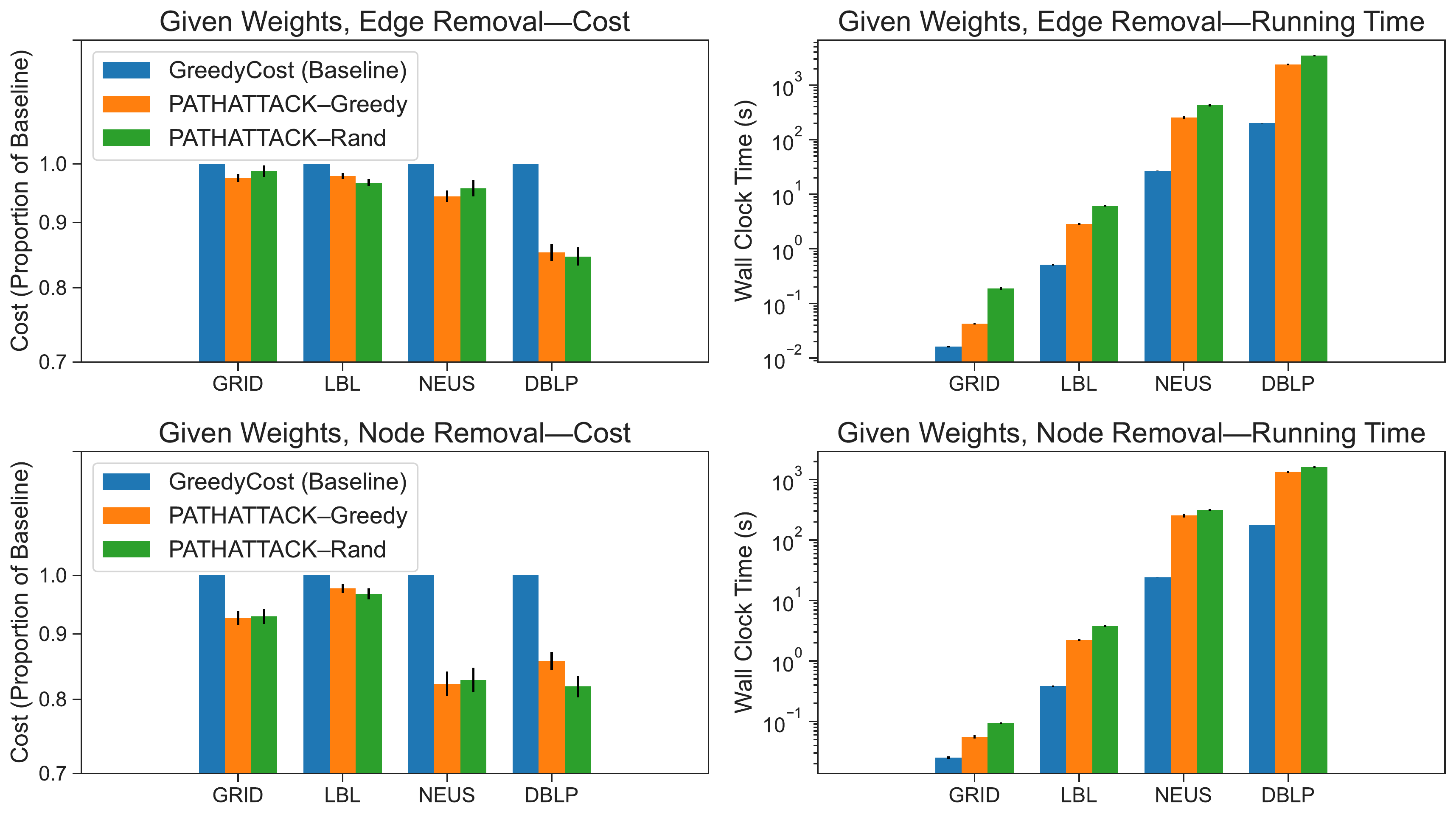}
    \caption{Results on weighted networks when \PATHATTACK{} removes nodes rather than edges. Results are shown in terms of attack cost (left column) and running time (right column). Lower is better for both metrics. Bar heights are means across these trials and error bars are standard errors. Results are shown for edge removal (top row) as a comparison to the results for node removal (bottom row). Unlike the unweighted graphs, \PATHATTACK{} outperforms the baseline by a higher margin with node removal than with edge removal.}
    \label{fig:plots_weighted_NR}
\end{figure}

\section{Dataset Features}
\label{sec:datasets}
Our experiments were run on several synthetic and real networks across different edge-weight distributions. All networks are undirected except DER, WIKI, and LBL. We described the edge-weight distributions in Section~\ref{sec:experiments} of the paper. 
Table \ref{table:syn_net_prop} provides summary statistics of the synthetic networks.

\begin{table}[!ht]
\renewcommand{\arraystretch}{1.0}
\centering
\begin{tabular}{|l|c|c|c|c|c|c|c|c|} 
\hline
Networks & Nodes & Edges & $\langle k\rangle$ & $\sigma_k$ & $\kappa$ & $\tau$ & $\triangle$ & $\varphi$ \\
\hline \hline
ER  & $16,000$ & $159,880$ & $19.985$ & $4.469$ & $0.001$ & $0.001$ & $1,326$ & $1$ \\
 & $\pm 0$ & $\pm38$ & $\pm0.05$ & $\pm0.02$ & $\pm0.0$ & $\pm0.0$ & $\pm39$ & $\pm0$ \\
\hline
DER  & 16,000 & 319,901 & $39.988$ & $6.317$ & 0.001 & 0.001 & 10,639 & $1$\\
 & $\pm 0$ & $\pm 568$ & $\pm0.071$ & $\pm0.034$ & $\pm0.0$ & $\pm0.0$ & $\pm105$ & $\pm0 $ \\
\hline
BA  & $16,000$ & $159,900$ & $19.987$ & $24.475$ & $0.007$ & $0.006$ & $17,133$ & $1$ \\
 & $\pm0$ & $\pm0$ & $\pm0$ & $\pm0.3$ & $\pm0.0$ & $\pm0$ & $\pm500$ & $\pm0$ \\
\hline
WS  & 16,000 & 160,000 & 20 & 0.629 & 0.669 & 0.668 & 677682.42 & 1 \\
 & $\pm 0$ & $\pm 0$ & $\pm 0$ & $\pm 0.007$ & $\pm 0.001$ & $\pm 0.001$ & $\pm818.171$ & $\pm0$ \\
\hline
KR  & 16,337 & 159,595 & ~19.537~ & ~16.537~ & ~$0.003$~ & $0.005$ & $8,492$ & ~$1.18$~ \\
& $\pm22$ & $\pm94$ & $\pm0.02$ & $\pm1.32$ & $\pm0$ & ~$\pm0.002$~ & $\pm2,234$ & ~$\pm0.38$~ \\ 
\hline
LAT  & ~81,225~ & $161,880$ & $3.985$ & $0.118$ & $0$ & $0$ & $0$ & $1$ \\
 & $\pm0$& $\pm0$& $\pm0$& $\pm0$& $\pm0$& $\pm0$& $\pm0$& $\pm0$ \\ 
\hline
COMP~  & $565$ & ~159,330~ & $564$ & $0$ & $1$ & $1$ & ~$29,900,930$~ & $1$ \\
& $\pm0$& $\pm0$& $\pm0$& $\pm0$& $\pm0$& $\pm0$& $\pm0$& $\pm0$ \\
\hline
\end{tabular}
\caption{Properties of the synthetic networks used in our experiments. For each random graph model, we generate 100 networks. Note that the number of edges across the different networks is $\approx160$K. The table shows the average degree ($\langle k\rangle$), standard deviation of the degree ($\sigma_k$), average clustering coefficient ($\kappa$), transitivity ($\tau$), number of triangles ($\triangle$), and the number of components ($\varphi$). The $\pm$ values show the standard deviation across 100 runs of each random graph model.}
\label{table:syn_net_prop}
\end{table}

We ran experiments on both weighted and unweighted real networks. In cases of unweighted networks, we added edge weights from the same distributions as the synthetic networks. Below is a brief description of each network used. Table~\ref{table:net_prop} summarizes the properties of each network. 

The unweighted networks are:
\begin{itemize}
    \item Wikispeedia graph (WIKI): The network consists of Web pages (nodes) and connections (edges) created from the user-generated paths in the Wikispeedia game~\cite{West2009}. Available at \url{https://snap.stanford.edu/data/wikispeedia.html}.
    \item Oregon autonomous system network (AS): Nodes represent autonomous systems of routers and edges denote communication between the systems~\cite{Leskovec2005}. The dataset was collected at the University of Oregon on 31 March 2001. Available at \url{https://snap.stanford.edu/data/Oregon-1.html}.
    \item Pennsylvania road network (PA-ROAD): Nodes are intersections in Pennsylvania, connected by edges representing roads~\cite{Leskovec2009}. Available at \url{https://snap.stanford.edu/data/roadNet-PA.html}.
   
\end{itemize}

\begin{table}[ht]
\centering
\begin{tabular}{|l|c|c|c|c|c|c|c|c|} 
\hline
Networks & Nodes & Edges &  $\langle k\rangle$ & $\sigma_k$ & $\kappa$ & $\tau$ & $\triangle$ & $\varphi$ \\
\hline \hline
GRID & 347 & 444 & 2.559 & 1.967 & 0.086 & 0.087 & 40 & 1 \\ 
\hline
LBL & 3,186 & 15,553 & 9.763 & 40.702 & 0.048 & 0.001 & 1821 & 10 \\
\hline
WIKI & 4,592 & 119,882 & 52.213 & 78.601 & 0.195 & 0.158 & 550,545 & 2 \\
\hline
AS & 10,670 & 22,002 & 4.124 & 31.986 & 0.296 & 0.009 & 17,144 & 1 \\
\hline
PA-ROAD~ & ~1,088,092~ & ~1,541,898~ & 2.834 & 1.016 & 0.046 & 0.059 & 67,150 & 206 \\
\hline
NEUS & 1,524,453 & 1,934,010 & 2.537 & 0.950  & 0.022 & 0.030 & 37,012 & 1 \\
\hline
DBLP & 1,836,596 & 8,309,938 & 9.049 & ~21.381~ & ~0.631~ & ~0.165~ & ~26,912,200~ &  ~60,512~ \\ 
\hline
\end{tabular}
\caption{Properties of the real networks used in our experiments. For each network, we are listing the average degree ($\langle k\rangle$), standard deviation of the degree ($\sigma_k$), average clustering coefficient ($\kappa$), transitivity ($\tau$), number of triangles ($\triangle$), and the number of components ($\varphi$).}
\label{table:net_prop}
\end{table}

The weighted networks are:
\begin{itemize}
    \item Central Chilean Power Grid (GRID): Nodes represent power plants, substations, taps, and junctions in the Chilean power grid. Edges  represent transmission lines, with distances in kilometers~\cite{Kim2018}. The capacity of each line in kilovolts is also provided. Available at \url{https://figshare.com/collections/An_in-depth_network_structural_data_and_hourly_activity_on_the_Central_Chilean_power_grid/4053374}.
    \item Lawrence Berkeley National Laboratory network data (LBL): A graph of computer network traffic, which includes counts of the number of connections between machines over time. Counts are inverted for use as distances. Available at \url{https://www.icir.org/enterprise-tracing/download.html}.
    \item Northeast US Road Network (NEUS): Nodes are intersections in the northeastern part of the United States, interconnected by roads (edges), with weights corresponding to distance in kilometers. Available at \url{https://www.diag.uniroma1.it/\~challenge9/download.shtml}.
    \item DBLP coauthorship graph (DBLP): This is a co-authorship network~\cite{Benson2018}. We invert the number of co-authored papers to create a distance (rather than similarity) between the associated authors. Available at \url{https://www.cs.cornell.edu/\~arb/data/coauth-DBLP/}.
\end{itemize}

\end{document}